    \newcommand{\href}[2]{#2}
\theoremstyle{definition}
\newtheorem{theorem}{Theorem}[section]
\newtheorem{lemma}[theorem]{Lemma}
\newtheorem{definition}[theorem]{Definition}
\newtheorem{problem}[theorem]{Problem}
\definecolor{greyh}{HTML}{DDDDDD}
\def\compactify{\itemsep=0pt \topsep=0pt \partopsep=0pt \parsep=0pt}
\let\latexusecounter=\usecounter
\newenvironment{itemize*}
  {\def\usecounter{\compactify\latexusecounter}
   \begin{itemize}}
  {\end{itemize}\let\usecounter=\latexusecounter}
\newenvironment{enumerate*}
  {\def\usecounter{\compactify\latexusecounter}
   \begin{enumerate}}
  {\end{enumerate}\let\usecounter=\latexusecounter}
\newenvironment{description*}
  {\begin{description}\compactify}
  {\end{description}}
\newif\ifflex
\begin{document}

\title{Unique Assembly Verification in Two-Handed Self-Assembly\thanks{This research was supported in part by National Science Foundation Grant CCF-1817602.}
}

\author{David Caballero \and Timothy Gomez \and Robert Schweller \and Tim Wylie}

\date{}
\clearpage\maketitle
\thispagestyle{empty}

\vspace*{-.5cm}
\begin{center}
Department of Computer Science\\University of Texas Rio Grande Valley \\Edinburg, TX 78539-2999, USA\\
\{david.caballero01, timothy.gomez01, robert.schweller, timothy.wylie\}@utrgv.edu
\end{center}

\begin{abstract}
One of the most fundamental and well-studied problems in Tile Self-Assembly is the Unique Assembly Verification (UAV) problem. This algorithmic problem asks whether a given tile system uniquely assembles a specific assembly. The complexity of this problem in the 2-Handed Assembly Model (2HAM) at a constant temperature is a long-standing open problem since the model was introduced. Previously, only membership in the class coNP was known and that the problem is in P if the temperature is one ($\tau=1$). The problem is known to be hard for many generalizations of the model, such as allowing one step into the third dimension or allowing the temperature of the system to be a variable, but the most fundamental version has remained open. 

In this paper, we prove the UAV problem in the 2HAM is hard even with a small constant temperature ($\tau = 2$), and finally answer the complexity of this problem (open since 2013).  Further, this result proves that UAV in the staged self-assembly model is coNP-complete with a single bin and stage (open since 2007), and that UAV in the q-tile model is also coNP-complete (open since 2004). We reduce from Monotone Planar 3-SAT with Neighboring Variable Pairs, a special case of 3SAT recently proven to be NP-hard. We accompany this reduction with a positive result showing that UAV is solvable in polynomial time with the promise that the given target assembly will have a tree-shaped bond graph, i.e., contains no cycles. We provide a $\mathcal{O}(n^5)$ algorithm for UAV on tree-bonded assemblies when the temperature is fixed to $2$, and a $\mathcal{O}(n^5\log \tau)$ time algorithm when the temperature is part of the input.
\end{abstract}
\newpage

\section{Introduction}

Since the inception of tile self-assembly~\cite{Winf98}, one of the most important algorithmic questions has been determining if a given tile system uniquely self-assembles into a specific assembly structure.  This basic algorithmic question, termed the Unique Assembly Verification (UAV) problem, is fundamental for efficiently checking if a designed tile system acts as intended, and is tantamount to the design of an efficient simulator for a tile self-assembly model. Thus, UAV has been a central question for every self-assembly model.

Although many different self-assembly models have been proposed in order to simulate different laboratory or experimental setups, two premiere models have emerged as the primary foci of study.  First, is the seeded Abstract Tile Assembly Model (aTAM)~\cite{Winf98}, in which singleton tiles attach one by one to a growing seed if sufficient bonding strength exists based on glue types of attaching tiles.  This model has had many foundational results in recent years showing the limits related to intrinsic universality and program-size complexity \cite{pumpability,t1notiu}. The second model is the hierarchical Two-Handed Tile Assembly Model (2HAM)~\cite{2HABTO}, where any two producible assemblies may be combined (one in each of two hands) to create a new producible assembly provided there is sufficient bonding strength between the two pieces. Many  foundational results that are known for the aTAM are still open for the 2HAM.

The 2HAM has been shown to be more powerful than the aTAM in its ability to build infinite fractal patterns~\cite{CFH2014SSA,HO2017SAS}, its program-size efficiency for finite shapes~\cite{CANNON202150}, and its running-time efficiency for the self-assembly of finite shapes~\cite{CheDotSICOMP2017}.  While the aTAM has a polynomial time solution to the UAV problem~\cite{ACGHKMR02}, allowing for the production of efficient simulators~\cite{Xgrow, pyTAS,versatile}, the complexity of UAV in the 2HAM has remained a long-standing open problem in the field. The 2HAM appeared formally in 2013 \cite{2HABTO}, but was essentially defined in staged self-assembly \cite{demaine2008staged} (2007), and a seeded version of the 2HAM appears as the \emph{multiple tile} model in \cite{ACG2005CGM} (2004).  UAV has been open for all of these models, and our coNP-complete result for UAV in the 2HAM proves that UAV with a single bin and single stage in the staged model is coNP-complete, and that UAV in the multiple tile model is also coNP-complete with polynomial-sized pieces, thus answering both of these long-standing open questions. See~\cite{Patitz2014,woods2015intrinsic,doty2012theory,drmaurdsa} for surveys and applications of self-assembly theory.

\paragraph{Previous work on UAV.}
A number of results have pushed closer to resolving the complexity of UAV in the 2HAM.  One of the first results showed that the simpler problem of determining if a given assembly was at least produced (i.e., built but possibly along with other different assemblies) is polynomial time solvable~\cite{prod2HAM}, which serves as a key step in showing that UAV resides within the class coNP.  Another result augmented the basic 2HAM model to 3 dimensions and showed coNP-hardness for the 3D 2HAM \cite{2HABTO}.  A recent result focused on 2D, but allowed the temperature threshold, a parameter that determines how much glue strength is required for assemblies to stick together, to be a variable input to the UAV problem (as opposed to a fixed constant value), and showed coNP-completeness in this scenario~\cite{Schweller:2017:DNA}.

Other approaches considered the allowance of initial assemblies consisting of small prebuilt assemblies, as opposed to only initial singleton tiles, and showed UAV becomes coNP$^\text{NP}$-complete with this extension \cite{Caballero:2021:ARXIV}.
Alternately, the inclusion of a negative force glue, even without detachments, has also been shown to imply coNP-completeness in the aTAM \cite{Cantu:2021:A}.
Another generalization of the 2HAM allows for up to $k$ hands to create new assemblies, instead of just two, causing the problem to become either coNP-complete or PSPACE-complete, depending on the encoding of the variable $k$~\cite{Caballero:2021:UCNC}.  An even more powerful generalization of the 2HAM is the \emph{staged} model~\cite{demaine2008staged}, in which multiple distinct stages of self-assembly are considered.  Within the staged model, UAV becomes coNP$^\text{NP}$-hard after 3 stages, and PSPACE-complete in general~\cite{schweller2019verification,Caballero:2021:ESA}.  Thus, for nearly every way in which the 2HAM has been extended, a corresponding hardness reduction has been found.  Yet, the original question of UAV in the 2HAM has remained open.

\begin{table}[t]
	\centering
	
	\renewcommand{\arraystretch}{1.5}
	\begin{tabular}[b]{| c | c | c | c | c | }
		\hline 
		 \textbf{Shape} & \textbf{Dimensions} & \textbf{Temperature} & \textbf{Complexity} & \textbf{Reference}\\  \hline
		General & $2$ & $1$ & $\mathcal{O}(|A||T|\log |T|)$ & \cite{prod2HAM} \\ 
		General & $3$ & $2$ & coNP-complete &  \cite{2HABTO} \\ 
		General & $2$ & $\tau$ & coNP-complete & \cite{Schweller:2017:DNA}  \\ \hline 
		\rowcolor{greyh} 
		General & 2 & 2 & coNP-complete &  Thm. \ref{thm:coNPC} \\ 
		\rowcolor{greyh} 
		Tree & 2 & $\tau$ & $\mathcal{O}(|A|^5 \log \tau)$ &  Thm. \ref{thm:tauTree} \\ \hline
	\end{tabular}
	\caption{Known Results for the Unique Assembly Verification Problem in the 2HAM and the results presented in this paper. $|A|$ is the size of the target assembly, $\tau$ is the temperature of the system, and $|T|$ is the number of tile types in the system. Under the \textbf{Temperature} column, $\tau$ indicates that the temperature may be included as part of the input. }
\label{tab:results}
\end{table} 

%

\paragraph{Our Contributions.}
We show that UAV in the 2HAM is coNP-complete within the original model (2-dimensional, constant bounded temperature parameter, singleton tile initial assemblies), thus resolving the long-standing open problem of UAV in the 2HAM.  Further, this proves that UAV in the staged model with a single bin and stage is coNP-complete, and that UAV in the q-tile/multiple-tile model with polynomial-sized pieces is coNP-complete. We augment this result with a positive result for the special case of tree-shaped assemblies, providing a $\mathcal{O}(|A|^5 \log \tau)$ time solution for UAV in this case (where $|A|$ is the size of the assembly) even if $\tau$ is included as part of the input.

Our results are highlighted in Table \ref{tab:results} along with other known results for UAV in the 2HAM. To show coNP-hardness for UAV we construct an explicit polynomial-time reduction from Monotone Planar 3-SAT with Neighboring Variable Pairs (MP-3SAT-NVP).  This reduction takes inspiration from the recent break-through proof that MP-3SAT-NVP is NP-hard and its use to prove that the connected-assembly-partitioning problem with unit squares is NP-hard~\cite{Agarwal:2021:SODA}. 
For our tree UAV algorithm, we utilize a cycle decomposition approach over possible produced assemblies combined with dynamic programming.

\paragraph{Overview.} The paper is structured as follows. Section \ref{sec:definitions} formally defines the model, the UAV problem, important definitions, and some small examples. Section \ref{sec:uavhard} has the reduction proving UAV in the 2HAM is coNP-hard. Due to the numerous intricate details related to the proof, the section is broken up into several subsections explaining different aspects of the reduction. Section \ref{sec:tree} then gives the algorithms for solving UAV for tree-bonded assemblies. Section \ref{sec:conclusion} then concludes the paper with a summary and future work.

\section{Definitions}\label{sec:definitions}

In this section we overview the basic definitions related to the two-handed self-assembly model and the verification problems under consideration.

\paragraph{Tiles.}
A \emph{tile} is a non-rotating unit square with each edge labeled with a \emph{glue} from a set $\Sigma$.
Each pair of glues $g_1, g_2 \in \Sigma$ has a non-negative integer \emph{strength} ${\rm str}(g_1, g_2)$.

\paragraph{Configurations.}
A \emph{configuration} is a partial function $\tilde{A} : \mathbb{Z}^2 \rightarrow T$ for some set of tiles $T$, i.e. an arrangement of tiles on a square grid.
For a configuration $\tilde{A}$ and vector $\vec{u} = \langle u_x, u_y \rangle$ with $u_x, u_y \in \mathbb{Z}^2$, $\tilde{A} + \vec{u}$ denotes the configuration $\tilde{A} \circ f$, where $f(x, y) = (x + u_x, y + u_y)$.
For two configurations $\tilde{A}$ and $\tilde{B}$, $\tilde{B}$ is a \emph{translation} of $\tilde{A}$, written $\tilde{B} \simeq \tilde{A}$, provided that $\tilde{B} = \tilde{A} + \vec{u}$ for some vector $\vec{u}$.

\paragraph{Bond graphs, and stability.}
For a given configuration~$\tilde{A}$, define the \emph{bond graph}~$G_{\tilde{A}}$ to be the weighted grid graph in which each element of~${\rm dom}(\tilde{A})$ is a vertex, and the weight of the edge between a pair of tiles is equal to the strength of the coincident glue pair.
A configuration is said to be \emph{$\tau$-stable} for a positive integer~$\tau$  if $G_{\tilde{A}}$ is connected and  if every edge cut of $G_{\tilde{A}}$ has a weight of at least $\tau$. This means that the sum of the glue strengths along each cut is greater or equal to $\tau$.  An example bond graph for a small assembly is shown in Figure \ref{subfig:bond}.

\begin{figure}[t]
  \vspace*{-.2cm}
  \centering
 \begin{subfigure}[b]{.3\textwidth}
 	\centering
      \includegraphics[width=0.7\textwidth]{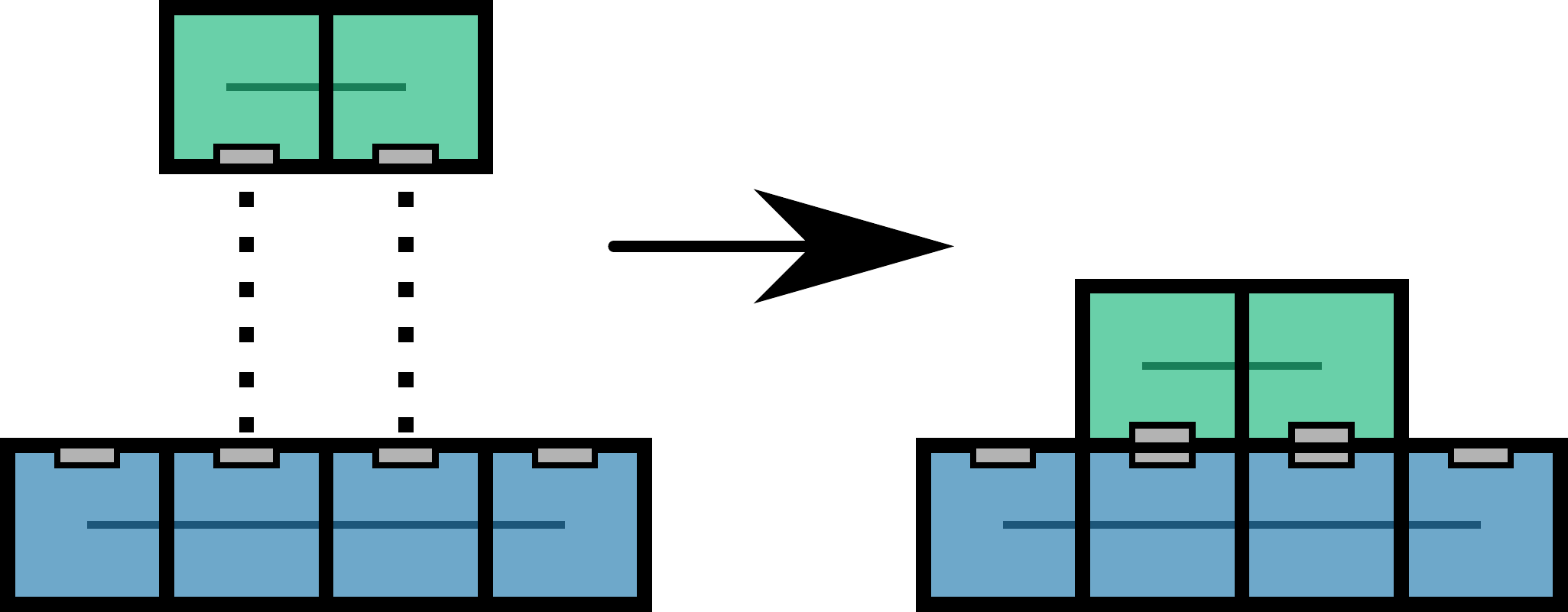}
   \caption{Cooperative Binding}\label{subfig:coop}
 \end{subfigure}
  \begin{subfigure}[b]{.25\textwidth}
 	\centering
 	\includegraphics[width=.4\textwidth]{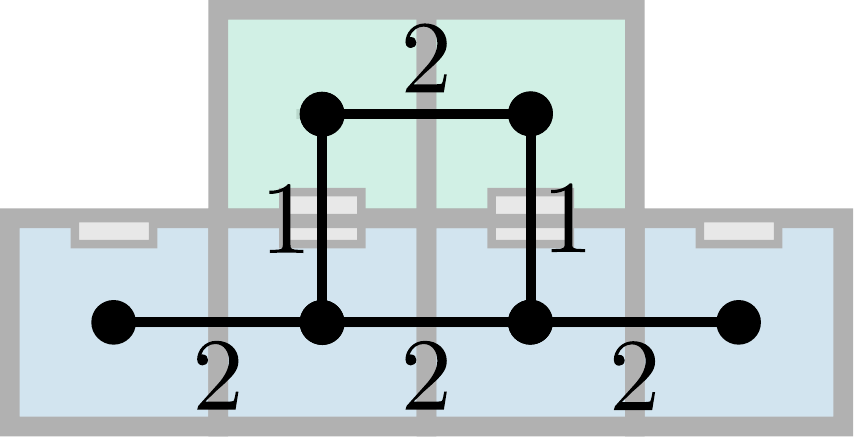}
 	\caption{Bond Graph} \label{subfig:bond}
 \end{subfigure}
 \begin{subfigure}[b]{.4\textwidth}
 	\centering
 	\includegraphics[width=.7\textwidth]{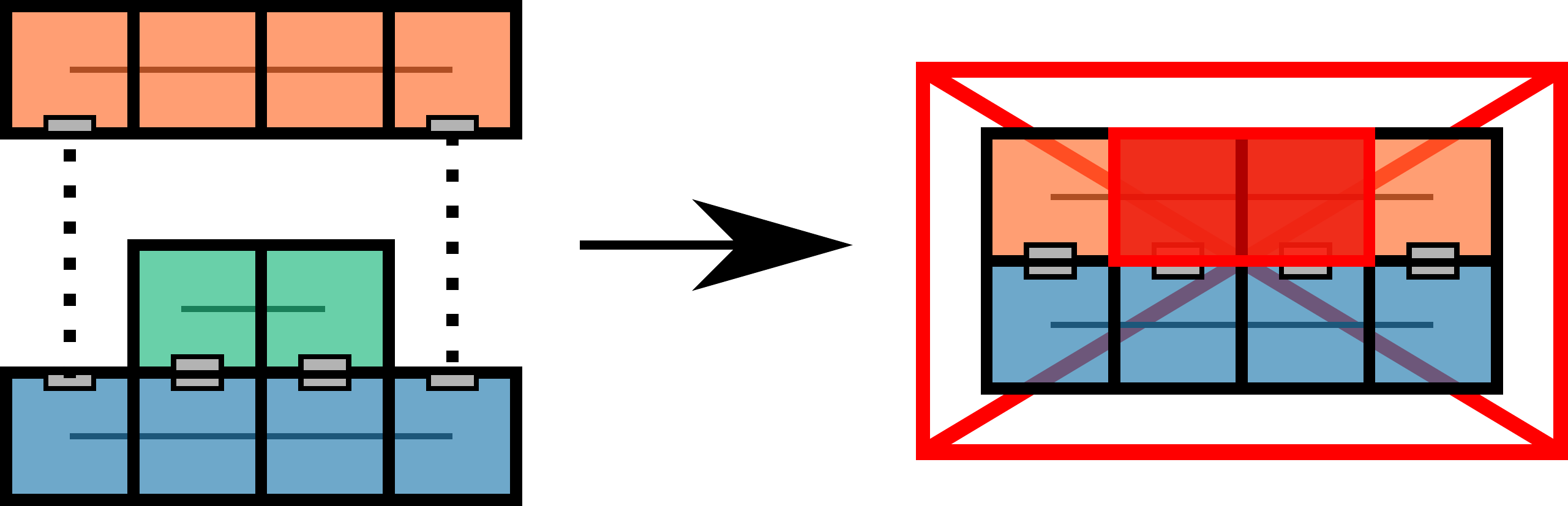}
 	\caption{Geometric Blocking}\label{subfig:geob}
 \end{subfigure}
  \vspace*{-.2cm}
  \caption{(a) Example of an attachment that takes places using cooperative binding at $\tau = 2$. We denote a glue strength of 1 with a rectangle and a glue of strength 2 with a solid line through the two tiles. Dotted lines between glues indicate that these tiles may attach to each other with the respective strength. Assume assemblies shown are $\tau$-stable unless stated otherwise. (b) The bond graph of the assembly showing that it is $\tau$-stable. (c) These two assemblies are not $\tau$-combinable since this would place two tiles at the same location. We say this is due to geometric blocking. 
   }\label{fig:att}
  \vspace*{-.2cm}
\end{figure}

\paragraph{Assemblies.}
For a configuration $\tilde{A}$, the \emph{assembly} of $\tilde{A}$ is the set $A = \{ \tilde{B} : \tilde{B} \simeq \tilde{A} \}$.
Informally an assembly $A$ is a set containing all translations of a configuration $\tilde{A}$. 
An assembly $A$ is a \emph{subassembly} of an assembly $B$, denoted $A \sqsubseteq B$, provided that there exists an $\tilde{A}\in A$ and $\tilde{B}\in B$ such that $\tilde{A} \subseteq \tilde{B}$. We define $|A|$ to be the number of tiles in a configuration of $A$. 

An assembly is \emph{$\tau$-stable} if the configurations it contains are $\tau$-stable.
Assemblies $A$ and $B$ are \emph{$\tau$-combinable} into an assembly $C$ if there exist $\tilde{A} \in A$, $\tilde{B} \in B$, and $\tilde{C} \in C$ such that
\begin{enumerate} \setlength\itemsep{0em}
    \item $\tilde{A} \cup \tilde{B} = \tilde{C}$. 
    \item $\tilde{A} \cap \tilde{B} = \varnothing$. 
    \item $\tilde{C}$ is $\tau$-stable. 
\end{enumerate}

Informally, two assemblies are $\tau$-combinable if there exists two configurations of the assemblies that may be combined resulting in a $\tau$-stable assembly without placing two tiles in the same location. 

Two assemblies combining or binding together is called an attachment. An attachment takes place using \emph{cooperative binding} if the two assemblies do not share a $\tau$-strength glue and instead use multiple weaker glues summing to $\tau$. An example of an attachment that takes place using cooperative binding can be seen in Figure \ref{subfig:coop}. 
If an attachment cannot take place because the two tiles would be placed in the same position, it is \emph{geometrically blocked}. Two assemblies whose attachment is geometrically blocked is shown in Figure \ref{subfig:geob}.

\paragraph{Two-handed Assembly.}
A two-handed assembly system (2HAM) is an ordered tuple $\Gamma = (T,\tau)$ where $T$ is a set of single tile assemblies and $\tau$ is a positive integer parameter called the \emph{temperature}.
For a system $\Gamma$, the set of \emph{producible} assemblies $P'_{\Gamma}$ is defined recursively as follows:
\begin{enumerate} \setlength\itemsep{0em}
    \item $T \subseteq P'_{\Gamma}$.
    \item If $A,B \in P'_{\Gamma}$ are $\tau$-combinable into $C$, then $C \in P'_{\Gamma}$.
\end{enumerate}
\vspace*{-.2cm}

A producible assembly is \emph{terminal} provided it is not $\tau$-combinable with any other producible assembly. Denote the set of all terminal assemblies of a system $\Gamma$ as $P_{\Gamma}$.
Intuitively, $P'_{\Gamma}$ represents the set of all possible assemblies that can self-assemble from the initial set $T$, whereas $P_{\Gamma}$ represents only the set of assemblies that cannot grow any further. Figure \ref{fig:example} shows a small 2HAM example with a single terminal assembly.

An \emph{Assembly Tree} for a 2HAM system $\Gamma =(T,\tau)$ is any rooted binary tree whose nodes are elements of $P'_\Gamma$, the leaves are single-tile assemblies from the set $T$, and the two children of any non-leaf node are $\tau$-combinable into their parent.  An assembly tree with root $A$ is said to be an assembly tree for assembly $A$. A small example is shown in Figure \ref{fig:astreet}.

\begin{figure}[t]
  \centering
 \begin{subfigure}[b]{.3\textwidth}
 	\centering
      \includegraphics[height=4cm]{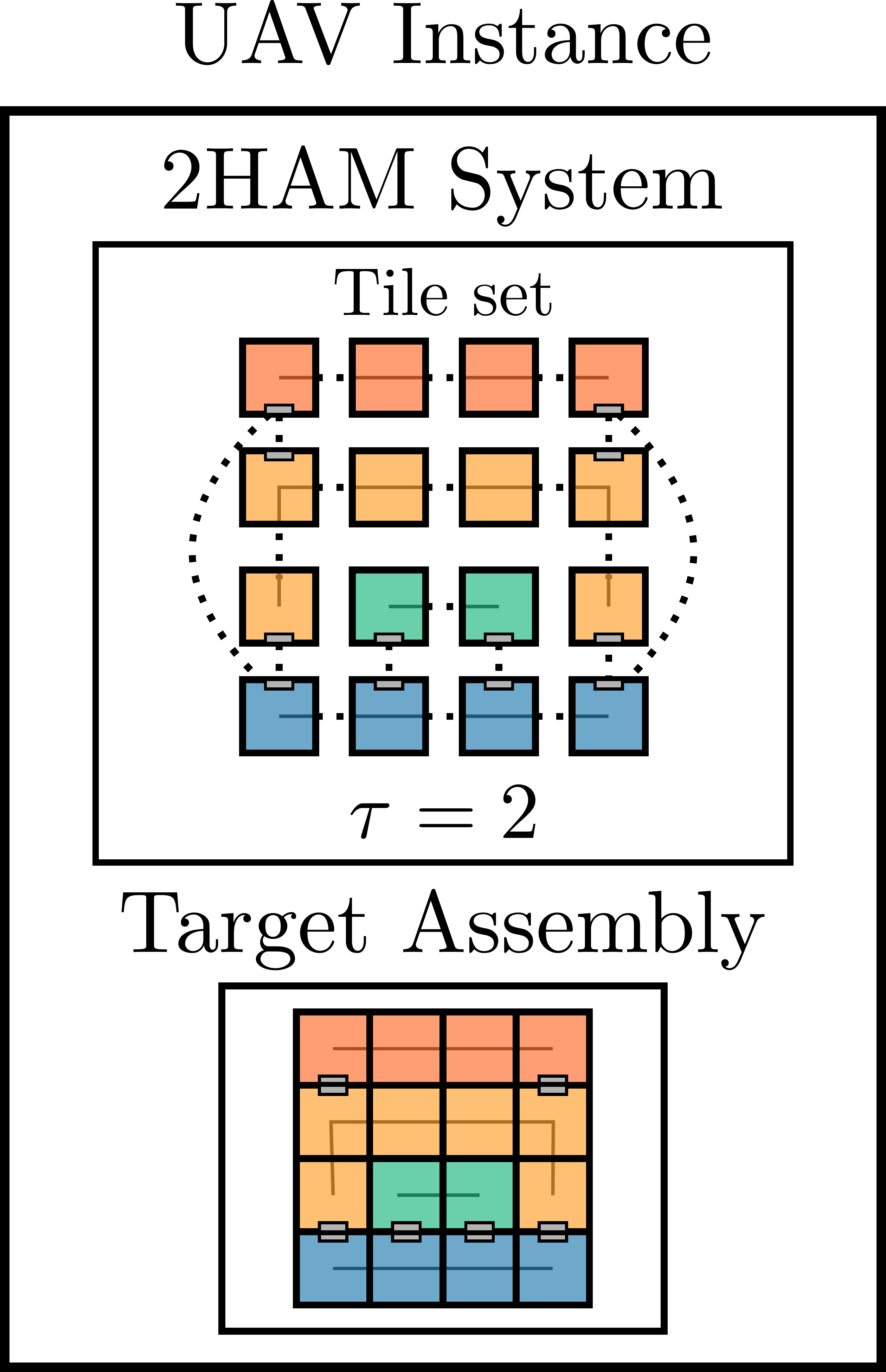}
   \caption{UAV Instance}
 \end{subfigure}
 \begin{subfigure}[b]{.42\textwidth}
 	\centering
 	\includegraphics[height=4cm]{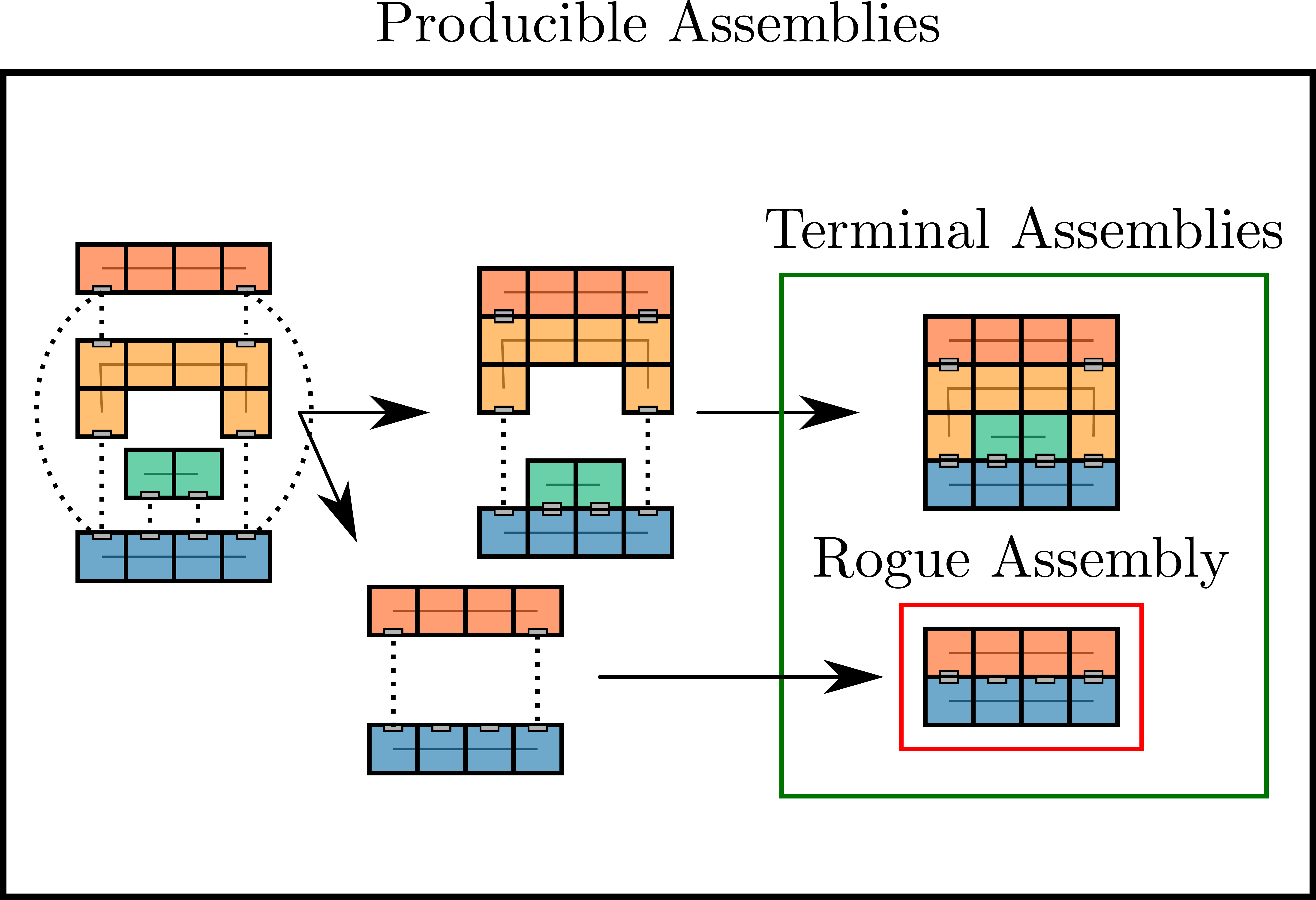}
 	\caption{Producible Assemblies}
 \end{subfigure}
 \begin{subfigure}[b]{.25\textwidth}
 	\centering
 	\includegraphics[width=.8\textwidth]{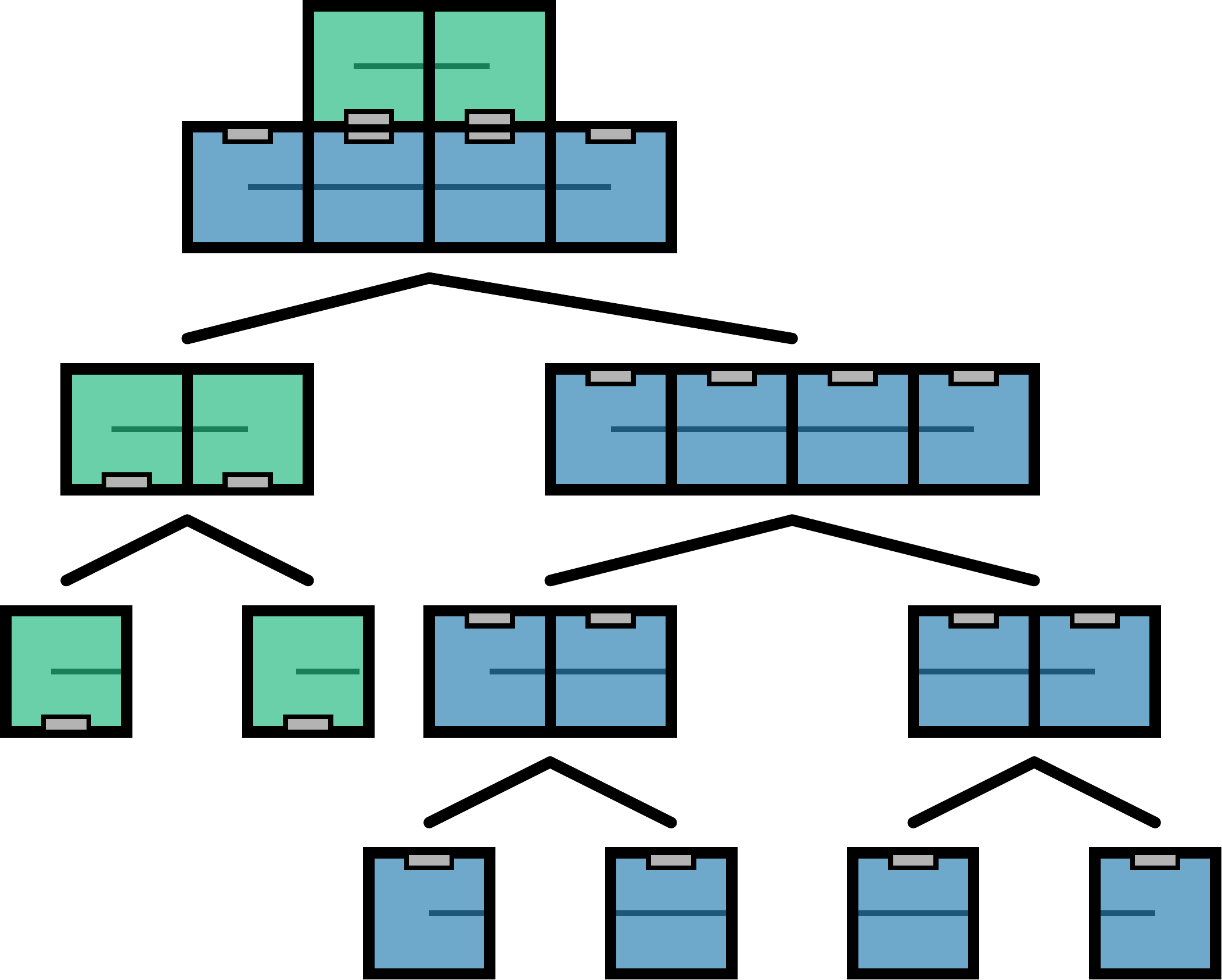}
 	\caption{Small Assembly Tree}\label{fig:astreet}
 \end{subfigure}
  \caption{(a) An example instance of the Unique Assembly verification problem. The input is the 2HAM system (tile set and temperature) and the target assembly. (b) The main producible assemblies of the 2HAM system (for clarity, not all subassemblies are shown). The target assembly is producible and terminal. However, there is also a produced assembly that is a rogue assembly (highlighted) since it not a subassembly of our target and it is terminal. 
  (c) A small example of an assembly tree for one of the producibles.
   }\label{fig:example}
\end{figure}

\paragraph{Unique Assembly.}
Intuitively, the unique assembly of $A$ means that any produced assembly can continue to grow until it becomes $A$, thus making $A$ the \emph{uniquely produced} assembly if the process is provided sufficient time to assemble. This means $A$ is the unique terminal assembly and all produced assemblies are subassemblies of $A$.  Formally we say a system $\Gamma$ uniquely produces an assembly $A$ if the following are true, 
\begin{enumerate} \setlength\itemsep{0em}
    \item $P_{\Gamma} = \{ A \}$
    \item For all $B \in P'_\Gamma$, $B \sqsubseteq A$
\end{enumerate}
\vspace*{-.2cm}

\begin{problem}[Unique Assembly Verification problem]
\textbf{Input:} A 2HAM system $\Gamma$, an assembly $A$.\\ \textbf{Output:} Does $\Gamma$ uniquely produce the assembly $A$?
\end{problem}

The Unique Assembly Verification problem (UAV) is the computational problem that asks to verify if an assembly is uniquely produced. A key concept used throughout this paper is a \emph{rogue assembly}. A rogue assembly is any producible assembly that breaks one of the conditions of unique assembly and serves as a proof that the instance of the UAV problem is false. 

\begin{definition}[Rogue Assembly]
Given an instance of UAV $(\Gamma, A)$, an assembly $R \sqsubseteq P'_\Gamma$ is a rogue assembly if $R \neq A$ and $R$ is not a subassembly of $A$. 
\end{definition}

We prove the following Lemma, which is used in the hardness reduction and the positive result. This lemma states that if the instance of UAV is false and all the tiles in $\Gamma$ are used to build $A$, then we may find a rogue assembly by only checking combinable subassemblies of $A$. 

\begin{lemma}
For an instance of UAV $(\Gamma, A)$ that is false, either there exist two assemblies $B, C$ such that $B, C \sqsubseteq A$ and $B$ and $C$ are $\tau$-combinable into a rogue assembly $R$, or there exists a single tile assembly $t \in P'_\Gamma$ that is not a subassembly of $A$. 
\label{lem:subRogue}
\end{lemma}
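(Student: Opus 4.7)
The plan is to extract the required combinable pair from an assembly tree of a rogue. First, I would dispose of the easy alternative: if some tile $t \in T$, viewed as a singleton assembly, fails to be a subassembly of $A$, then since $T \subseteq P'_\Gamma$ the second clause of the lemma holds immediately. Henceforth I would assume every singleton of $T$ is a subassembly of $A$ and aim for the first clause.

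Under this assumption, the failure of UAV for $(\Gamma, A)$ yields a rogue assembly $R \in P'_\Gamma$, i.e., a producible assembly that is not a subassembly of $A$. I would fix any assembly tree $\mathcal{T}$ for $R$ guaranteed by the recursive definition of $P'_\Gamma$. Every leaf of $\mathcal{T}$ is a singleton of $T$, hence a subassembly of $A$ by the standing assumption; the root is $R$, which is not. Consider the set of internal nodes of $\mathcal{T}$ whose associated assembly is not a subassembly of $A$; this set is non-empty because it contains the root. Pick such a node $v$ of maximum depth. By the depth choice, both children of $v$ carry assemblies that are subassemblies of $A$; call these $B$ and $C$. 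By the definition of an assembly tree, $B$ and $C$ are $\tau$-combinable into the assembly at $v$, which is not a subassembly of $A$ and is therefore itself a rogue $R'$. The pair $(B, C)$ then witnesses the first alternative.

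The main subtlety in this plan is justifying the existence of the rogue $R$ from the mere failure of UAV once the singleton-rogue case has been excluded. A priori, $(\Gamma, A)$ could be false solely because $P_\Gamma$ contains a terminal $B$ that is itself a subassembly of $A$, which strictly speaking is not rogue. I would handle this by observing that such a $B$ cannot be terminal in the natural context where the lemma is applied (namely, when $A$ is producible, which is polynomial-time checkable). Taking an assembly tree for $A$ and looking at a deepest subtree whose root's assembly covers the tiles of $B$'s placement inside $A$, one finds a sibling producible $C$ whose placement is adjacent to $B$ inside $A$; the $\tau$-stability of $A$ across the cut separating $B$'s region from $C$'s region forces $B$ and $C$ to be $\tau$-combinable, contradicting the terminality of $B$. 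With this reduction, the clean tree-traversal step above extracts the required combinable pair and completes the proof.
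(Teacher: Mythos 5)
Your core argument is the paper's argument: dispose of the singleton case, fix an assembly tree of a rogue $R$, and select a deepest node whose assembly is not a subassembly of $A$; both of its children are then subassemblies of $A$ and are $\tau$-combinable into a rogue. The paper phrases this as a top-down walk that repeatedly descends into a rogue child, but your maximum-depth formulation is the same induction, and that part is correct and complete.

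The subtlety you flag in your last paragraph is real --- falsity of UAV does not by itself yield a rogue assembly, since condition (1) of unique production can fail because some terminal assembly in $P_\Gamma$ is a proper subassembly of $A$ --- and the paper simply asserts ``there must exist some rogue assembly $R$'' without addressing it (its informal preamble and both later uses of the lemma implicitly restrict to the situation where a rogue is known to exist). However, your proposed repair does not work. In an assembly tree of $A$, the deepest node $N$ whose placed tiles cover $\tilde{B}$ generally contains strictly more than $B$, and the sibling or child $C$ you extract is adjacent to $N$, not to $B$. The $\tau$-stability of $A$ only guarantees that the cut separating $B$'s region from \emph{all} of the rest of $A$ has weight at least $\tau$; the glues between $B$ and the single producible piece $C$ may sum to less than $\tau$, and $C$ may even overlap $\tilde{B}$. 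So you cannot conclude that $B$ is $\tau$-combinable with anything, and the ``deadlocked terminal subassembly'' case is not eliminated. The honest fix is not to argue this case away but to add the hypothesis the lemma actually needs (that the failure of unique assembly is witnessed by some producible assembly that is not a subassembly of $A$), which holds in both places the paper invokes the lemma; under that hypothesis your tree-traversal argument stands on its own.
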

\begin{proof}
First since the instance of UAV is false we know there must exist some rogue assembly $R$. If $R$ is a single tile assembly the Lemma is true. If $R$ is not a single tile assembly we can walk through it's assembly tree to find the assemblies $B$ and $C$ which are both subassemblies of our target $A$. 

Consider an assembly tree of $R$ $\Upsilon_R$. Start by viewing the root, if it's two children are both subassemblies of $A$ then the rogue assembly $R$ satisfies the Lemma. If either of the children is also a rogue assembly (not a subassembly of our target) then we can follow that node and do the same thing. If both are rogue assemblies it does not matter which we follow. 

Since this is an assembly tree all the leaves represent single tile assemblies. Since we know none of the single tile assemblies are rogue assemblies (if it was the lemma would already be satisfied) we know at some point we must reach a node representing a rogue assembly that can be build from two subassemblies of our target $A$. 
\end{proof}

\section{Unique Assembly Verification Hardness} \label{sec:uavhard}
In this section, we show coNP-hardness of the Unique Assembly Verification problem in the 2HAM with constant temperature by a reduction from Monotone Planar 3-SAT with Neighboring Variable Pairs.

\begin{problem}[Monotone Planar 3-SAT with Neighboring Variable Pairs (MP-3SAT-NVP)]

\textbf{Input:} Boolean formula $\phi=C_1 \land \dots \land C_m$ in 3-CNF form where each clause only contains positive or negated literals from $X = \{x_1,\dots, x_n\}$. Further, any clause of $\phi$ with $3$ variables is of the form $(x_i, x_{i  + 1}, x_j)$ or $(\lnot x_i \lor \lnot x_{i + 1} \lor \lnot x_j)$, i.e., at least two of the literals are neighbors.
\textbf{Output:} Does there exist a satisfying assignment to $\phi$?
\end{problem}

Monotone Planar 3-SAT with Neighboring Variable Pairs was recently shown to be NP-hard in~\cite{Agarwal:2021:SODA}. We assume the instance of the problem is a rectilinear planar embedding where each variable is represented by a unit height rectangle arranged in the \emph{variable row}. Any planar 3SAT formula has a rectilinear encoding \cite{rectilinear}. We also assume that every clause is a unit-height rectangle with edges connecting the clauses and the contained variables. 
The monotone property ensures that each clause contains either only positive or only negative literals. Thus, the clauses may be separated with all positive clauses above the variable row, and all the negative clauses below. The final restriction is neighboring variable pairs, which states that for the three variables in each clause, at least two of the variables are neighbors in the variable row. An example instance is shown in Figure \ref{fig:SATfull}.

\subsection{Overview}

Given an instance of MP-3SAT-NVP $\phi$, we build an assembly $A$ and a 2HAM system $\Gamma$ that uniquely assembles $A$ if and only if $\phi$ does not have a satisfying assignment. An example instance is shown in Fig. \ref{fig:SATfull} and \ref{fig:target}. Alternatively, $\Gamma$ produces a rogue assembly if and only if there exists a satisfying assignment to $\phi$.

The ability to place all positive clauses above the variables and negative clauses below, along with the neighboring variable pairs, allows the clauses to be built hierarchically from the variables up. These properties allow us to require all nested clauses be evaluated and built before the outer clause is built. Thus, we define \emph{parent} and \emph{child} clauses as well as  \emph{root} clauses. In Figure \ref{fig:SATfull}, dotted lines connect child clauses $c_1$ and $c_2$ with their parent $c_3$. The root clauses are $c_3$ and $c_5$.\footnote{While a formula may have multiple clauses without a parent, the authors of \cite{Agarwal:2021:SODA} show that by adding additional variables, an instance may be constructed with only a single root clause. For MP-3SAT-NVP, we need at least two clauses (one for the positive and negative sides).}

\begin{definition}[Parent/Child/Root Clause]
Given a rectilinear encoding of Monotone Planer 3-SAT, a clause $C_p$ is a \emph{parent} clause of \emph{child} clause $C_c$, if $C_p$ fully encloses $C_c$, and any other clause that encloses $C_c$ also encloses $C_p$. 
A \emph{root} clause is a clause without a parent.
\end{definition}

Since $\phi$ is monotone, the positive and negative clauses may be separated across the variable row. The assembly $A$ is also separated by a horizontal bar that splits the assembly in two. This bar partially extends downward to prevent this assembly from attaching to itself. Above this bar is a subassembly that encodes the positive clauses and below the bar is a subassembly that encodes the negative clauses, which we call the positive and negative circuit, respectively.

The target assembly is designed so that it must be built from the variables up to the clauses. The clause gadgets can only be built if they are satisfied. Thus, parent clauses require that their variables or child clauses be satisfied to build the gadget. We will ensure this by using AND and OR gadgets between the variable and clause gadgets. 
We cover the parts of the system and gadgets in the order they must assemble:
 \begin{itemize} \setlength\itemsep{0em}
    \item Section \ref{subsec:vars}: variable gadgets  
    \item Section \ref{subsec:orclause}: OR gates and non-parent clause gadgets 
    \item Section \ref{subsec:andpar}: AND gates and parent clauses 
    \item Section \ref{subsec:rootarms}: the root clauses and the horizontal bar 
    \item Section \ref{subsec:rogue}: how a rogue assembly may form if and only if $\phi$ is satisfiable
 \end{itemize}

\begin{figure}[t]
	\centering
	\begin{subfigure}[b]{0.36\textwidth}
		\centering
		\includegraphics[width=1.\textwidth]{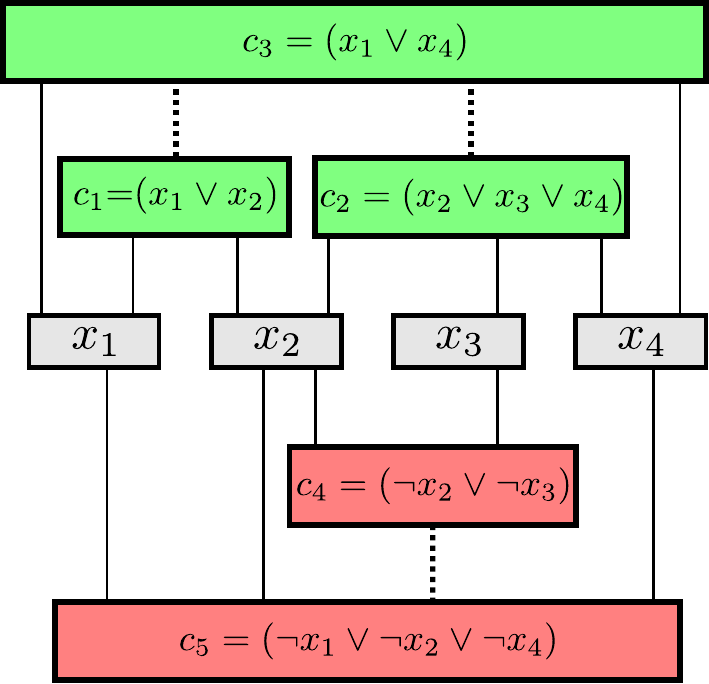}
		\caption{MP-3SAT-NVP Instance}
		\label{fig:SATfull}
	\end{subfigure}
	\begin{subfigure}[b]{0.3\textwidth}
		\centering
		\includegraphics[width=.8\textwidth]{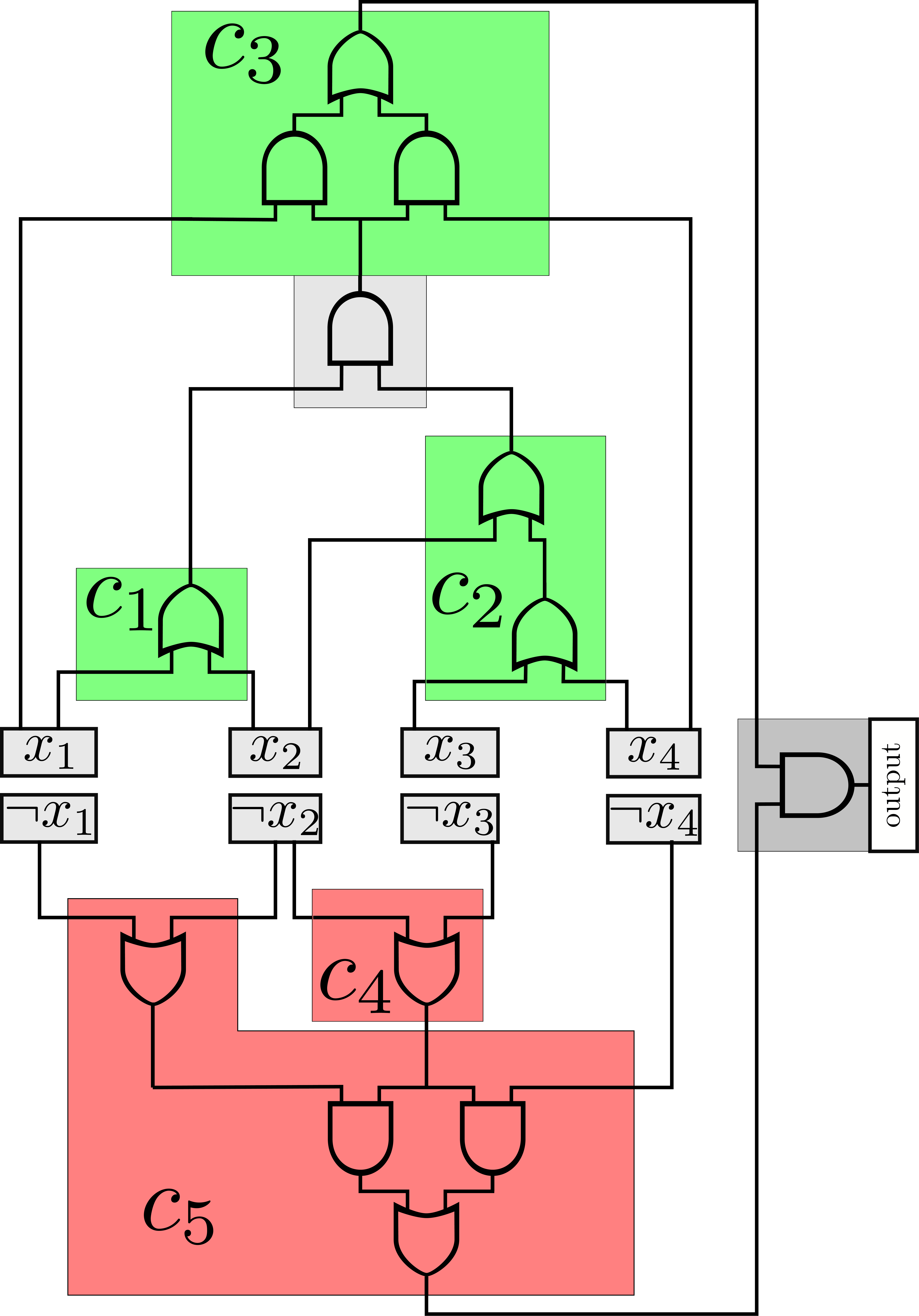}
		\caption{Target Assembly Circuit}
		\label{fig:targetcircuit}
	\end{subfigure}
	\begin{subfigure}[b]{0.3\textwidth}
		\centering
		\includegraphics[width=.85\textwidth]{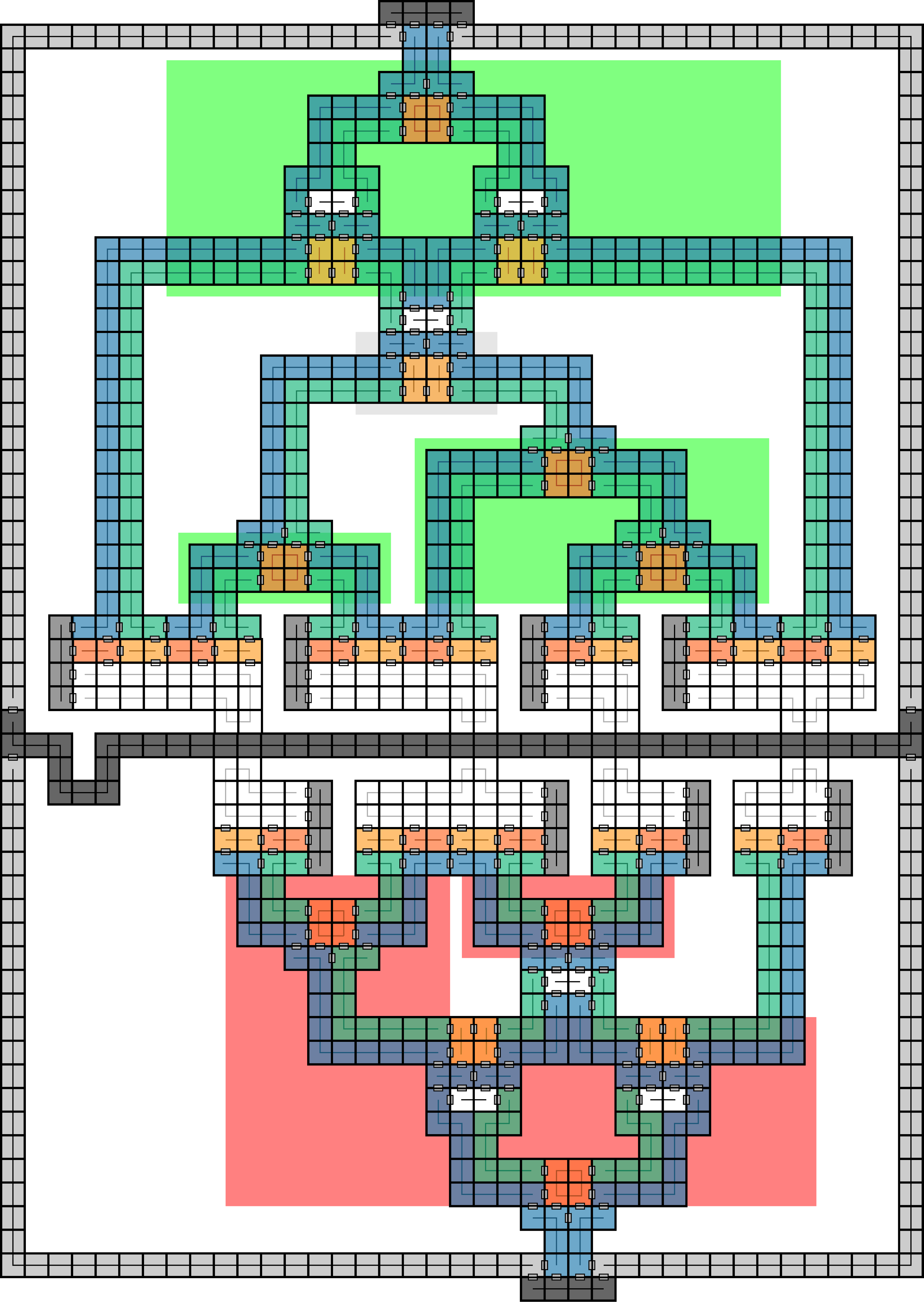}
		\caption{2HAM UAV Target Assembly}
		\label{fig:target}
	\end{subfigure}
	\caption{(a) Example instance of Monotone Rectilinear 3SAT with Neighboring Variable Pairs. Dotted lines are drawn between parent and child clauses. In this example $c_3$ and $c_5$ are the positive and negative root clauses respectively. 
		(b) A circuit view of our example instance with gates divided into the clauses they compute. We add AND gates (shown in grey) between child clauses that have the same variables. 
	(c) Target assembly constructed from instance on left. Each tile in the assembly is a unique tile type. 
	Each glue is unique except for the strength $1$ glues connecting the horizontal bar and the arms of each circuit. 
	The parts of the assembly that represent each clause are boxed in. 
 }
\end{figure}

\subsection{Variable Gadget} \label{subsec:vars}
For each variable gadget we use $(2 + 4d)$ subassemblies (Figure \ref{fig:parts}) to build the variable gadget where $d$ is the number of times the variable is used or it's outdegree. An example is shown in Figure \ref{fig:var1} where $d=1$ and Figure \ref{fig:var2} where $d=2$. In the figures, the lines are strength-2 glues, and the rectangle glues are all strength-1, thus requiring cooperative binding for the subassemblies to attach in a specific build order. We draw our gadgets separated into subassemblies but we construct our tile set using the single tiles which will self-assemble into these subassemblies. Every variable gadget is built as follows.

\begin{itemize}
    \item The \textbf{Bar Assembly} acts as a backbone (or separator) for the completed circuit subassemblies to connect to each other. 
    \item The \textbf{Bump} is the first assembly to attach to the Bar Assembly. The Bump is a height $2$ rectangle with an extra domino below it that is used for geometric blocking and encoding the assignment to that variable. The position of this domino is dependent on the position of the variable gadget on the opposite (negative) side.
    \item The \textbf{Base Dominoes} are used as part of the process of duplicating a variable path to multiple clauses. For each clause a variable is in, we use four subassemblies to connect to the next gadget. The first two gadgets are the Base dominoes. Once the Bump attaches to the Bar Assembly, the Base Dominoes can attach cooperatively to both. Once the first Base Domino attaches the next can attach using the glue from the previous domino and the other from the Bump.\footnote{Without these base dominoes, variable gadgets could build without the full Bump due to backfilling or backwards growth. The dominoes ensure this can not happen.}
    \item The \textbf{Wires} attach the variable gadgets to the clauses, and are the final two subassemblies for connecting to the next gadget. The first wire attaches cooperatively to the Bar Assembly and subsequent ones attach to the previous wire. The wires in our system are all built from two assemblies. When both halves of the wire are connected, the next gadget may attach. We call this a completed wire.
\end{itemize}

Variable gadgets in the negative circuit are built symmetrically rotated $180$ degrees. We adjust the position of dominoes on the Bumps of the gadgets so that variable gadgets on opposite circuits that represent the same variable have their domino in the same column.
We may generalize these gadgets to out degree $d$ (the variable appears in $d$ clauses) by increasing the width of the bump, and adding additional dominoes and wires.

\begin{figure}[t]
	\centering
	\begin{subfigure}[b]{0.23\textwidth}
		\centering
		\includegraphics[width=1.\textwidth]{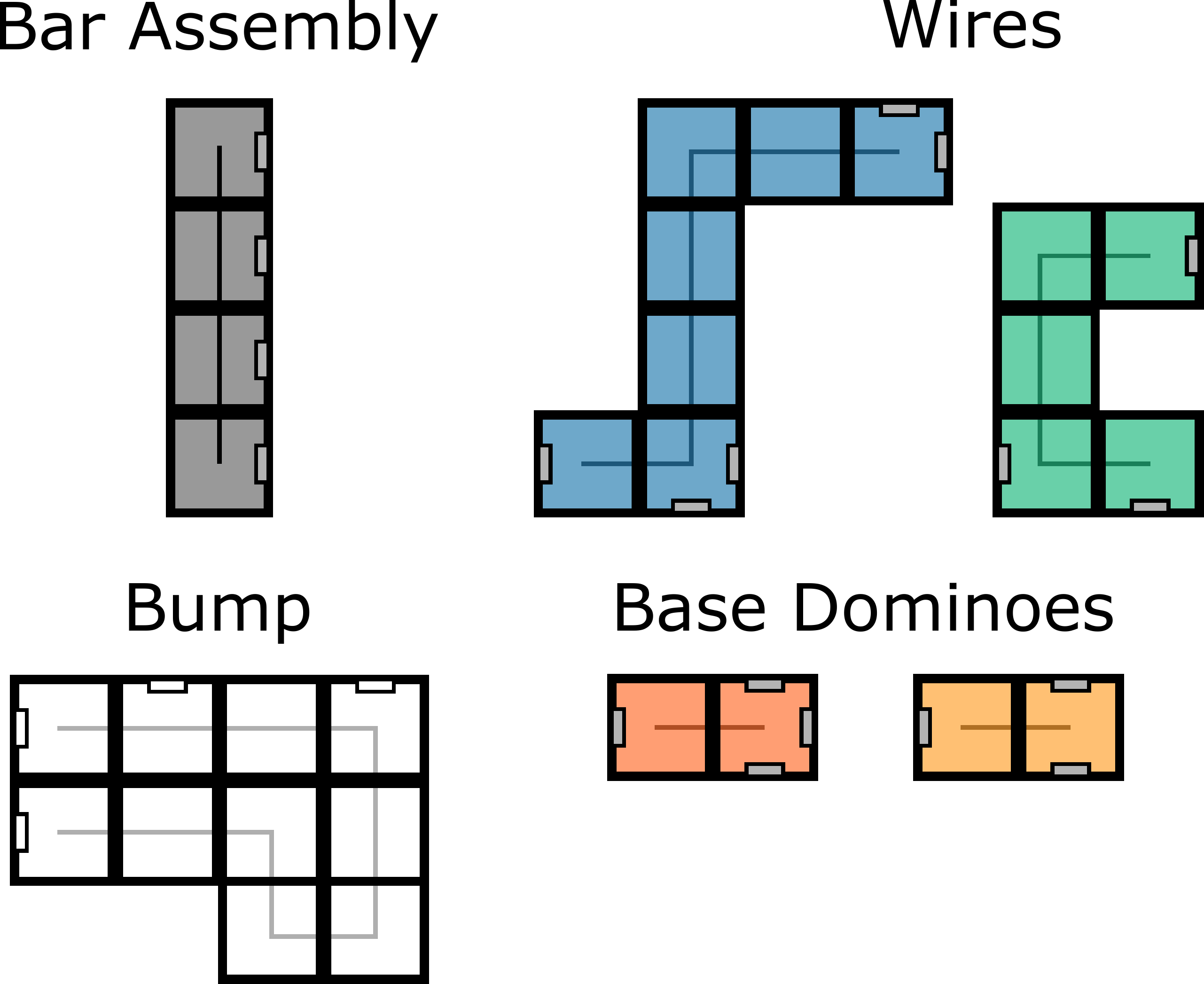}
		\caption{Subassemblies of a variable gadget}
		\label{fig:parts}
	\end{subfigure}
	\begin{subfigure}[b]{0.2\textwidth}
		\centering
		\includegraphics[width=.5\textwidth]{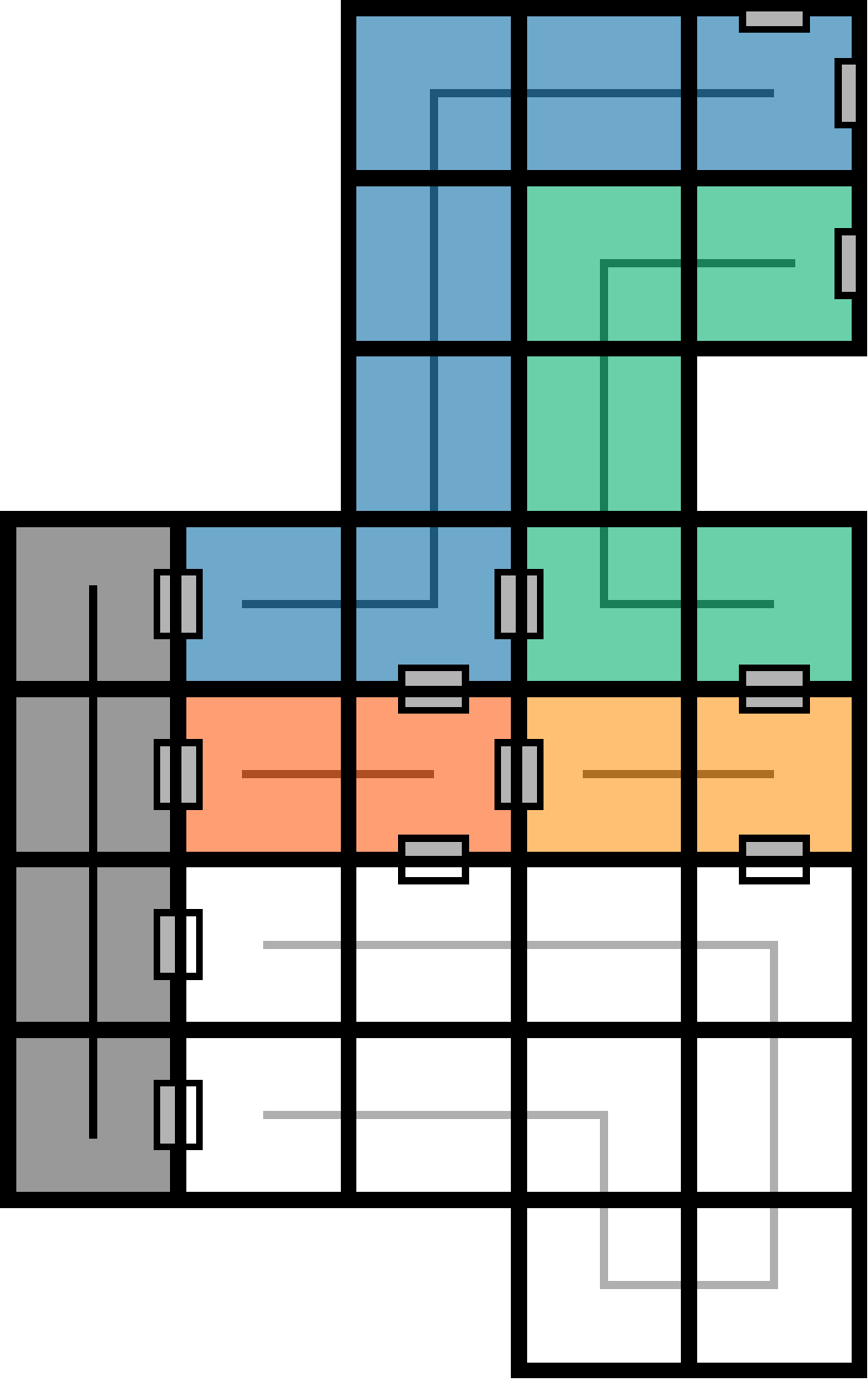}
		\caption{A variable gadget with outdegree $1$}
		\label{fig:var1}
	\end{subfigure}
	\begin{subfigure}[b]{0.23\textwidth}
		\centering
		\includegraphics[width=.65\textwidth]{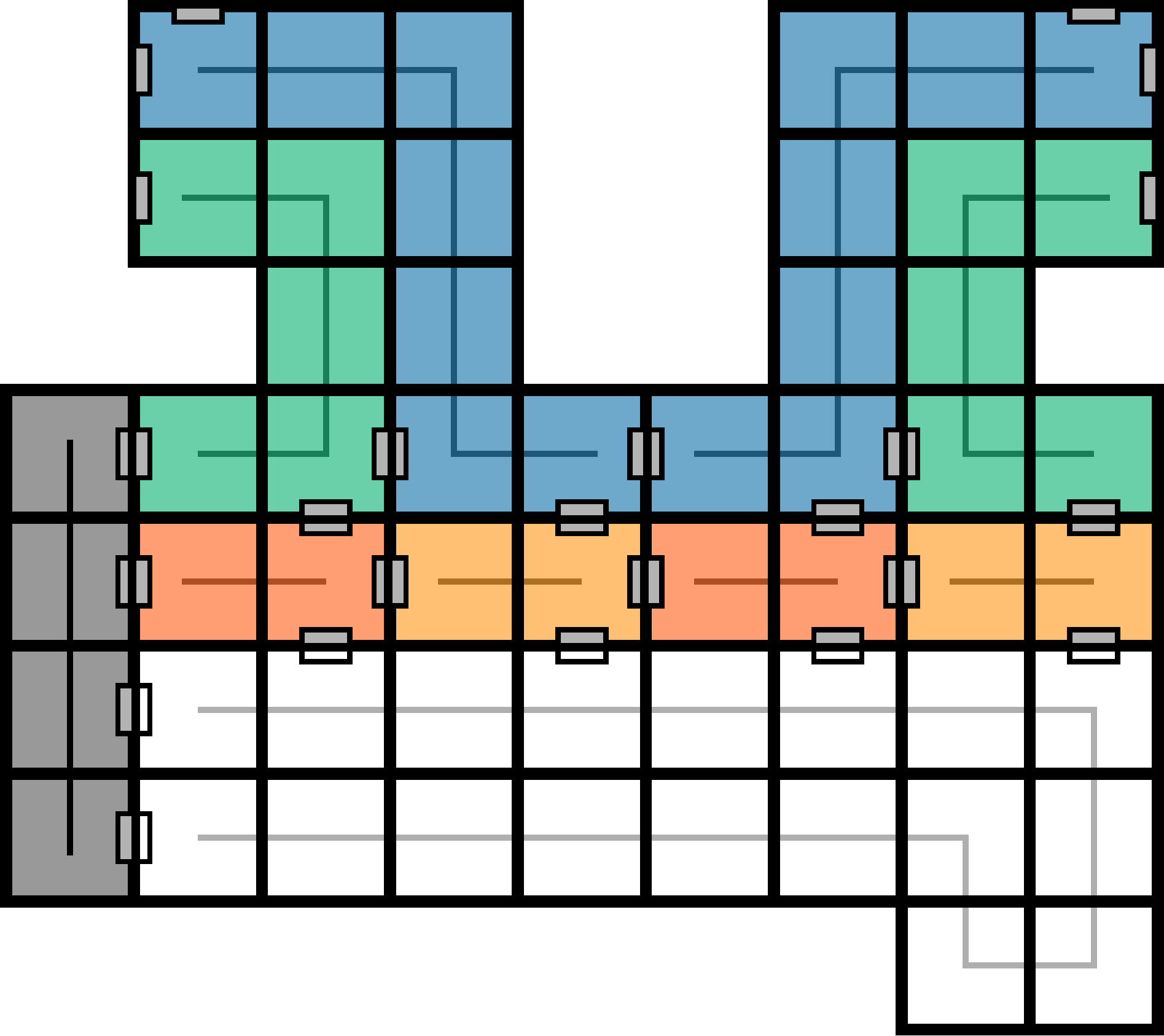}
		\caption{A variable gadget with outdegree $2$}
		\label{fig:var2}
	\end{subfigure}
	\begin{subfigure}[b]{0.25\textwidth}
		\centering
		\includegraphics[width=.9\textwidth]{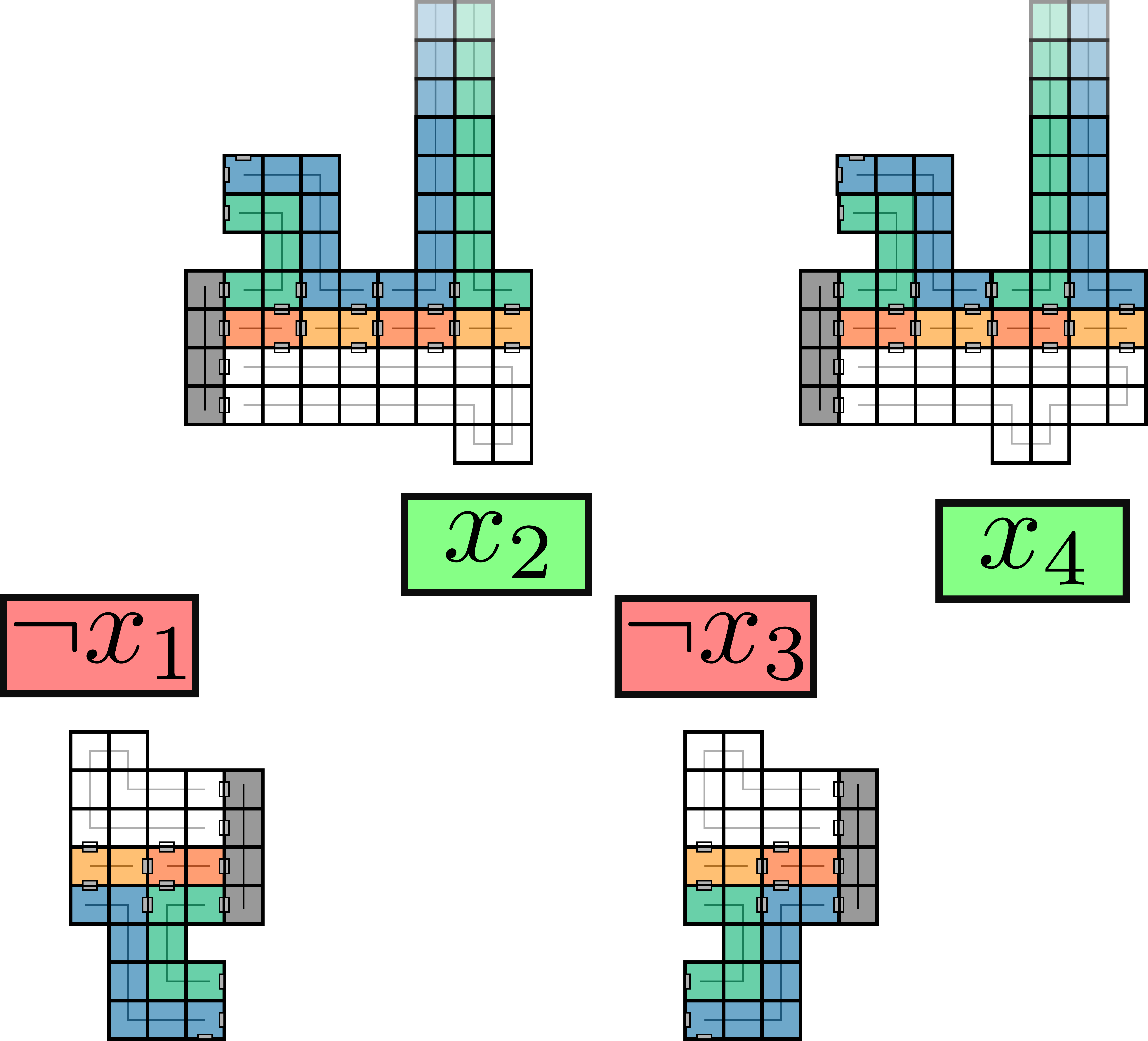}
		\caption{Assignment to the example MP-3SAT-NVP instance}
		\label{fig:assignment}
	\end{subfigure}
	\caption{(a) The smaller assemblies used for building a variable gadget. (b) A variable gadget for a variable that is used in a single clause. (c) A variable gadget with outdegree 2. For each additional output more base dominoes and wires are added. (d) A set of producible subassemblies representing variables that satisfy the example instance. We will walk through how these assemblies grow into a rogue assembly.}
\end{figure}

\subsection{OR Gates and Clause Gadgets} \label{subsec:orclause}

In CNF form, every variable in a clause is separated by a logical OR, thus, as part of our clause gadgets, we create OR gates to bring the variables together to ensure that the clause only forms if there is at least one variable that satisfies the clause.

\begin{figure}[t]
	\centering
	\begin{subfigure}[b]{0.46\textwidth}
		\centering
		\includegraphics[width=.9\textwidth]{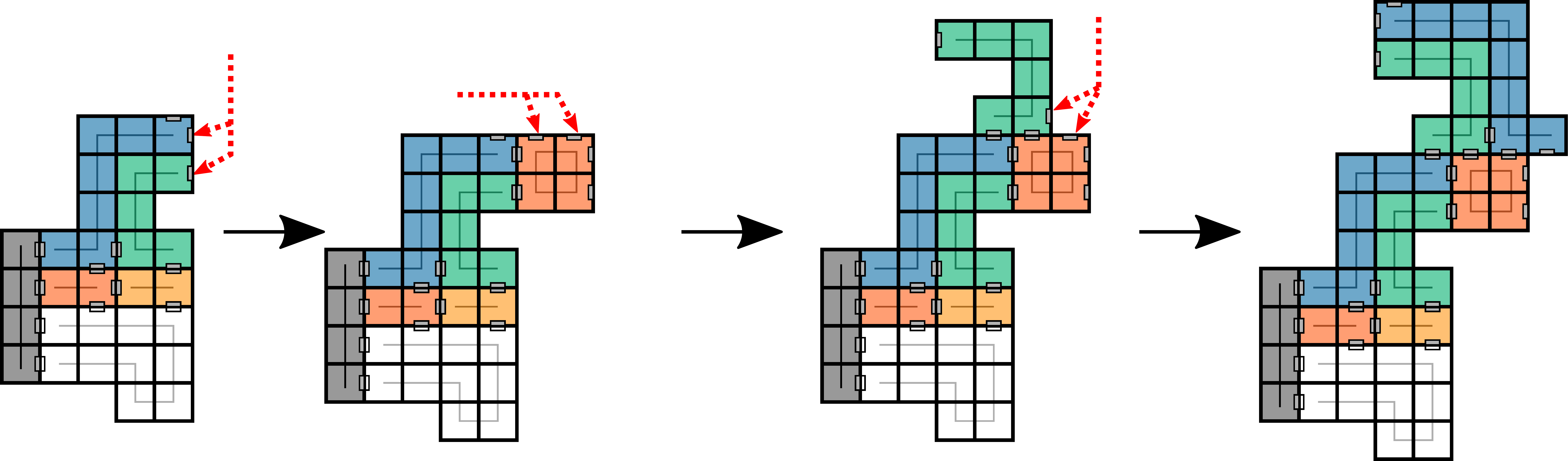}
		\caption{Build OR gate}
		\label{fig:buildO}
	\end{subfigure}
	\hspace*{.5cm}
	\begin{subfigure}[b]{0.46\textwidth}
		\centering
		\includegraphics[width=.9\textwidth]{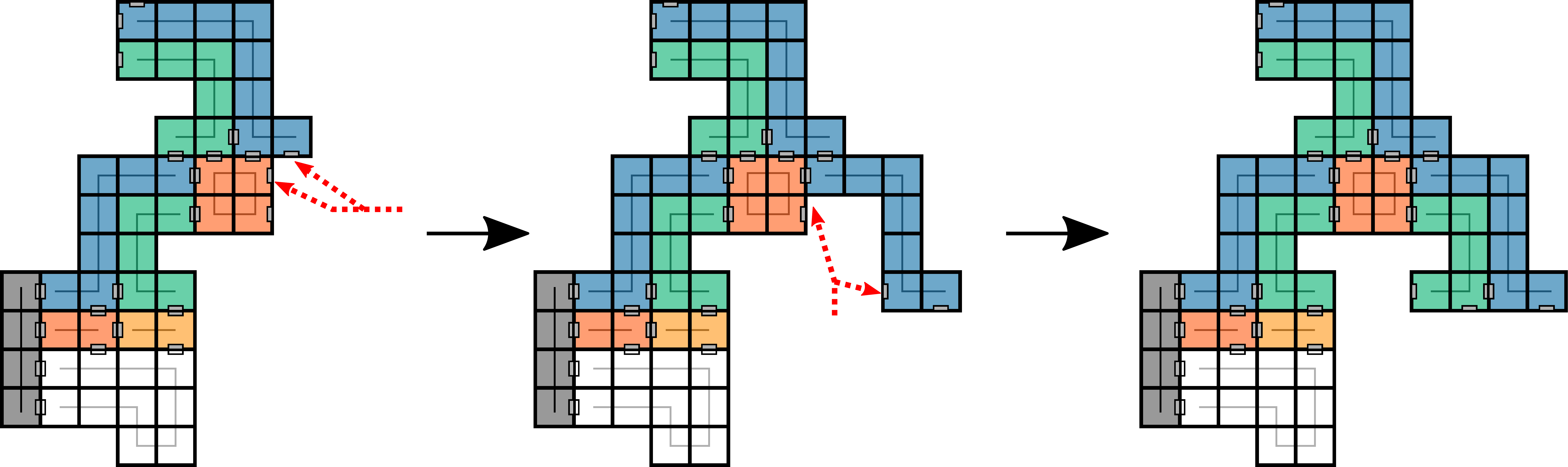}
		\caption{Wire backing filling}
		\label{fig:backFilling}
	\end{subfigure}
	\caption{(a)  The process of a variable gadget growing the OR gate used for clauses. Glues used for attachment in the next step are denoted by arrows. If one of the variable gadgets is constructed the $2 \times 2$ square assembly may attach. The output wires of the gate then attach cooperative with the wire from the variable assembly and the square. Note that only one of the variable assemblies needs to be constructed for the OR gate to build it's output wire. (b) Once the output wires of the OR gate have attached the wire for the other variable may ``Backfill'' or grow backwards.}
\end{figure}

\paragraph{OR Gates.}
An example of how the OR gate grows off of a variable gadget is shown in Figure \ref{fig:buildO}.
The OR gate consists of a single $2 \times 2$ square with strength-$1$ glues on the west, north, and east facing tile edges. The west and east glues each connect to the wires that input to the gate. The north glues are used cooperatively with glues on the incoming wire gadgets to attach another wire gadget going to the clause gadget. To complete the new wire gadget it must also cooperatively use the other incoming wire.
We note that the wires from the other input can backfill, but that does not cause an issue as the `backward' growth stops after building the wire. Figure \ref{fig:backFilling} shows an example with only one variable used in the OR gate.

\begin{figure}[t]
	\centering
	\begin{subfigure}[b]{0.3\textwidth}
		\centering
		\includegraphics[width=1.\textwidth]{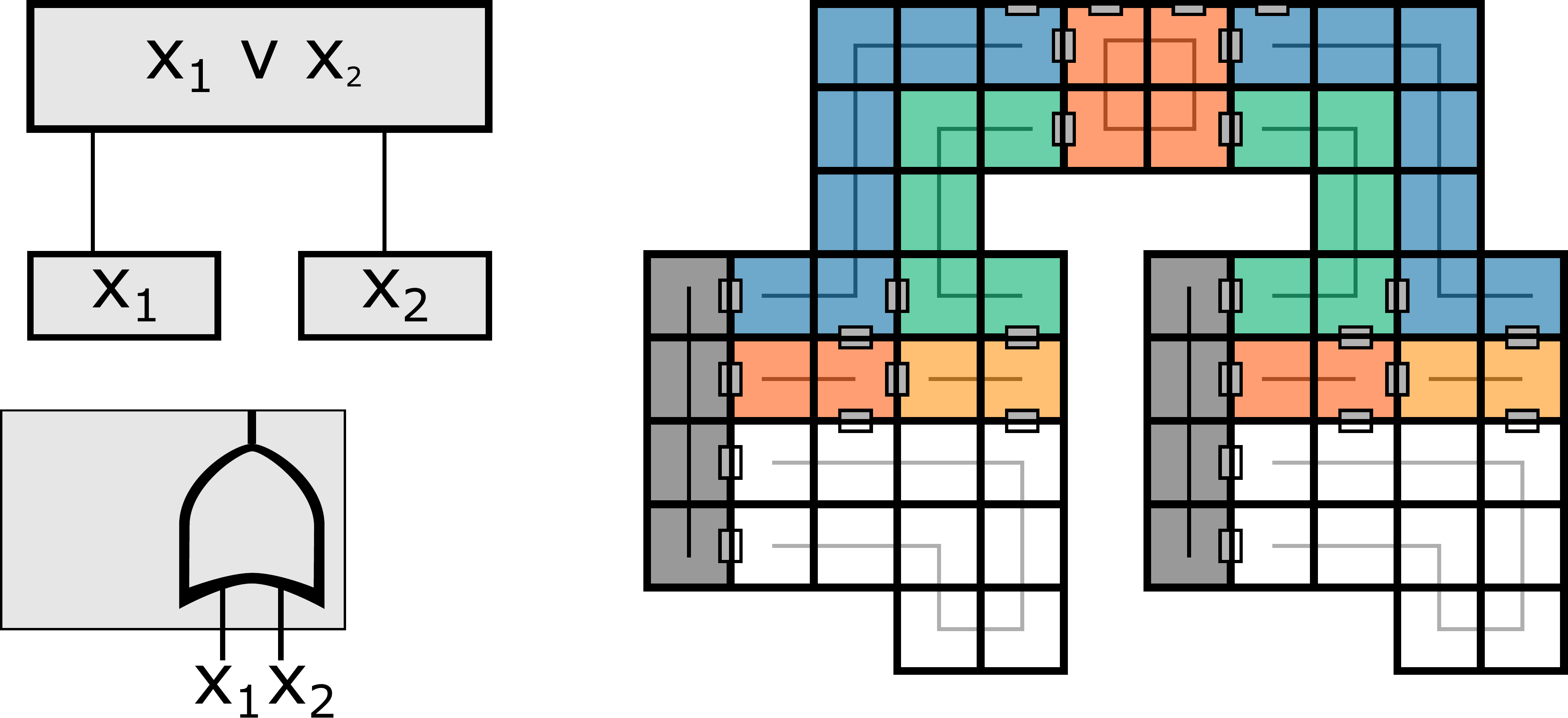}
		\caption{Clause Gadget}
		\label{fig:neighClause}
	\end{subfigure}
	\begin{subfigure}[b]{0.3\textwidth}
		\centering
		\includegraphics[width=1.\textwidth]{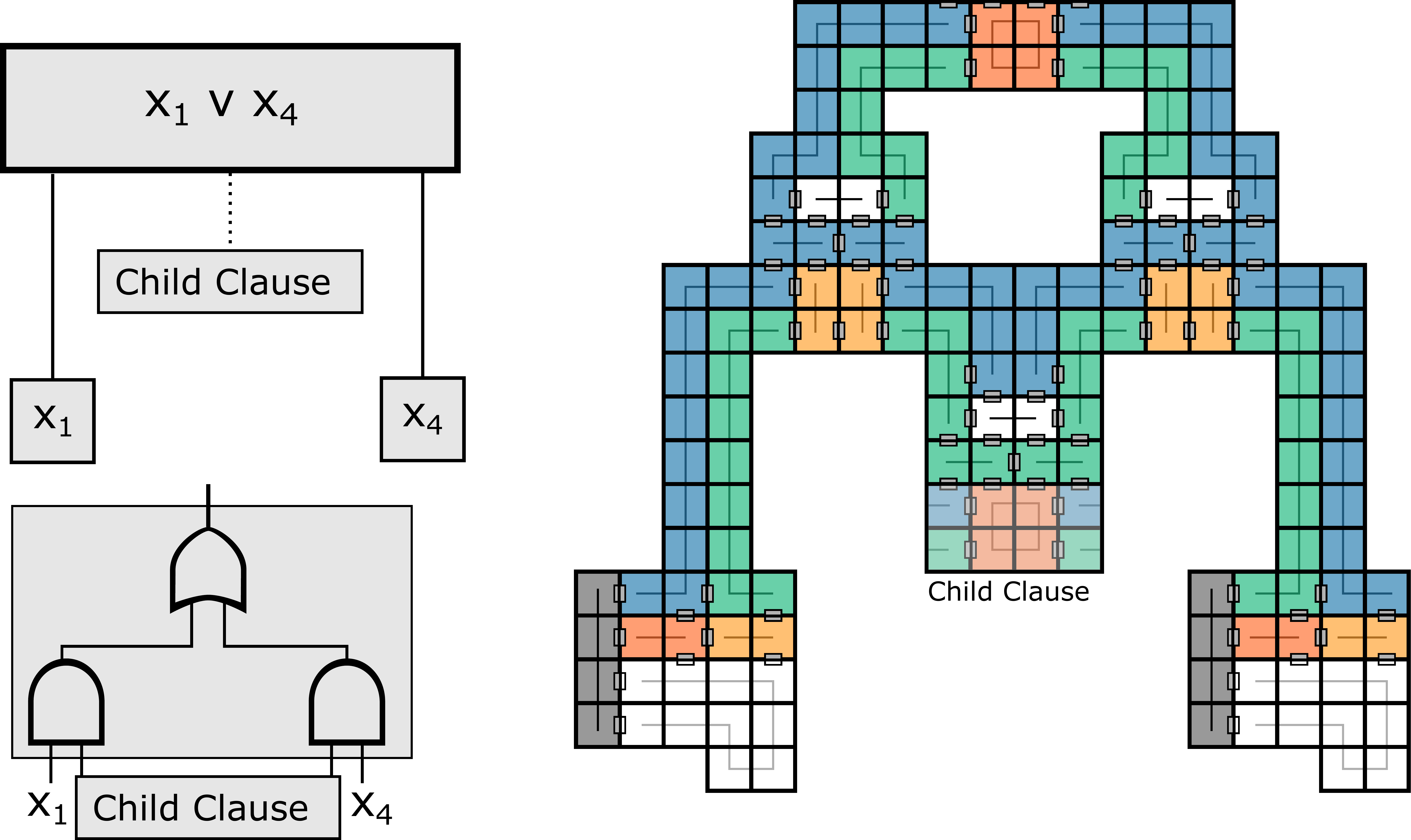}
		\caption{Parent Clause gadget}
		\label{fig:parent}
	\end{subfigure}
	\begin{subfigure}[b]{0.3\textwidth}
		\centering
		\includegraphics[width=.85\textwidth]{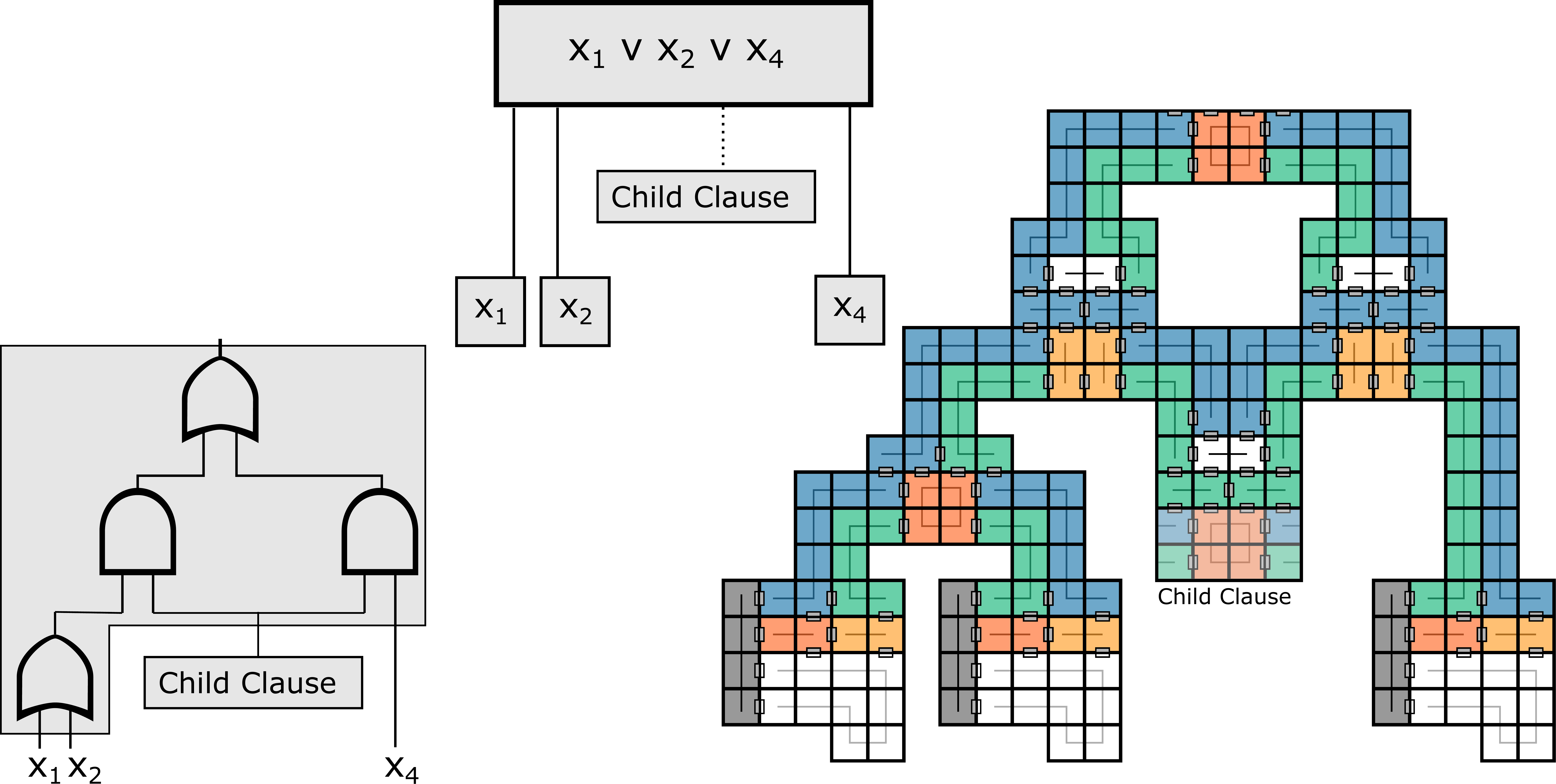}
		\caption{Parent clause gadget with 3 literals}
		\label{fig:3clause}
	\end{subfigure}
	\caption{(a) A clause gadget with $2$ neighboring variables. (b) A clause gadget with two variables and a child clause. (c) When a parent clause has $3$ literals we know two of them must be neighbors. Using an additional OR gate we may  use the same gadget as the clause with $2$ literals. }
\end{figure}

\begin{figure}[t]
	\centering
	\begin{subfigure}[b]{0.28\textwidth}
		\centering
		\includegraphics[width=.8\textwidth]{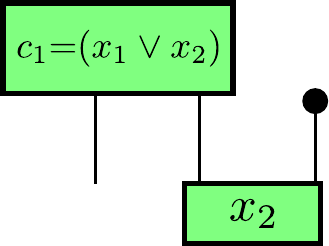}
		\caption{Clause $c_1$}
		\label{fig:clause1}
	\end{subfigure}
	\begin{subfigure}[b]{0.28\textwidth}
		\centering
		\includegraphics[width=.7\textwidth]{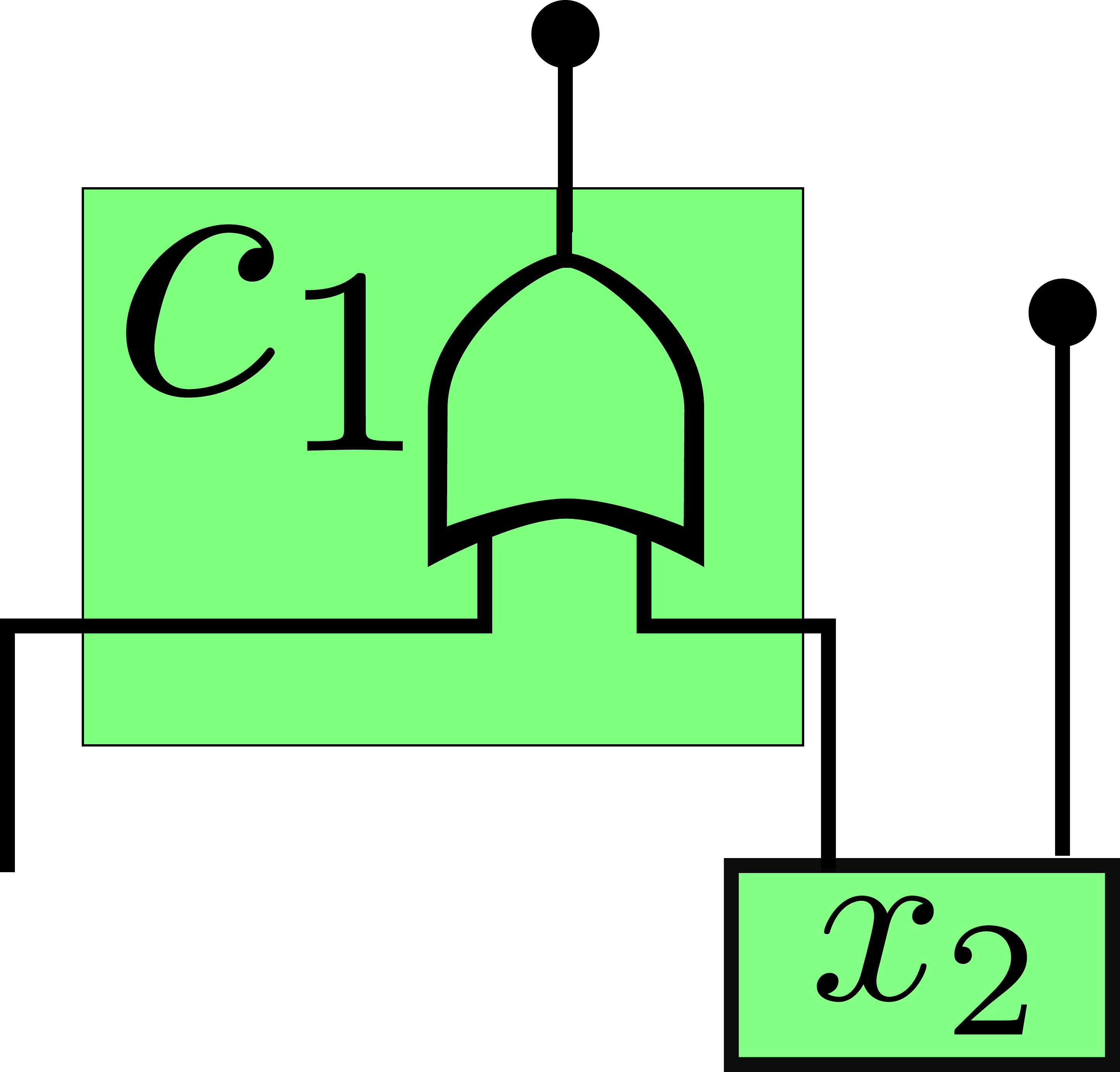}
		\caption{Circuit for $c_1$}
		\label{fig:neighClause}
	\end{subfigure}
	\begin{subfigure}[b]{0.28\textwidth}
		\centering
		\includegraphics[width=.8\textwidth]{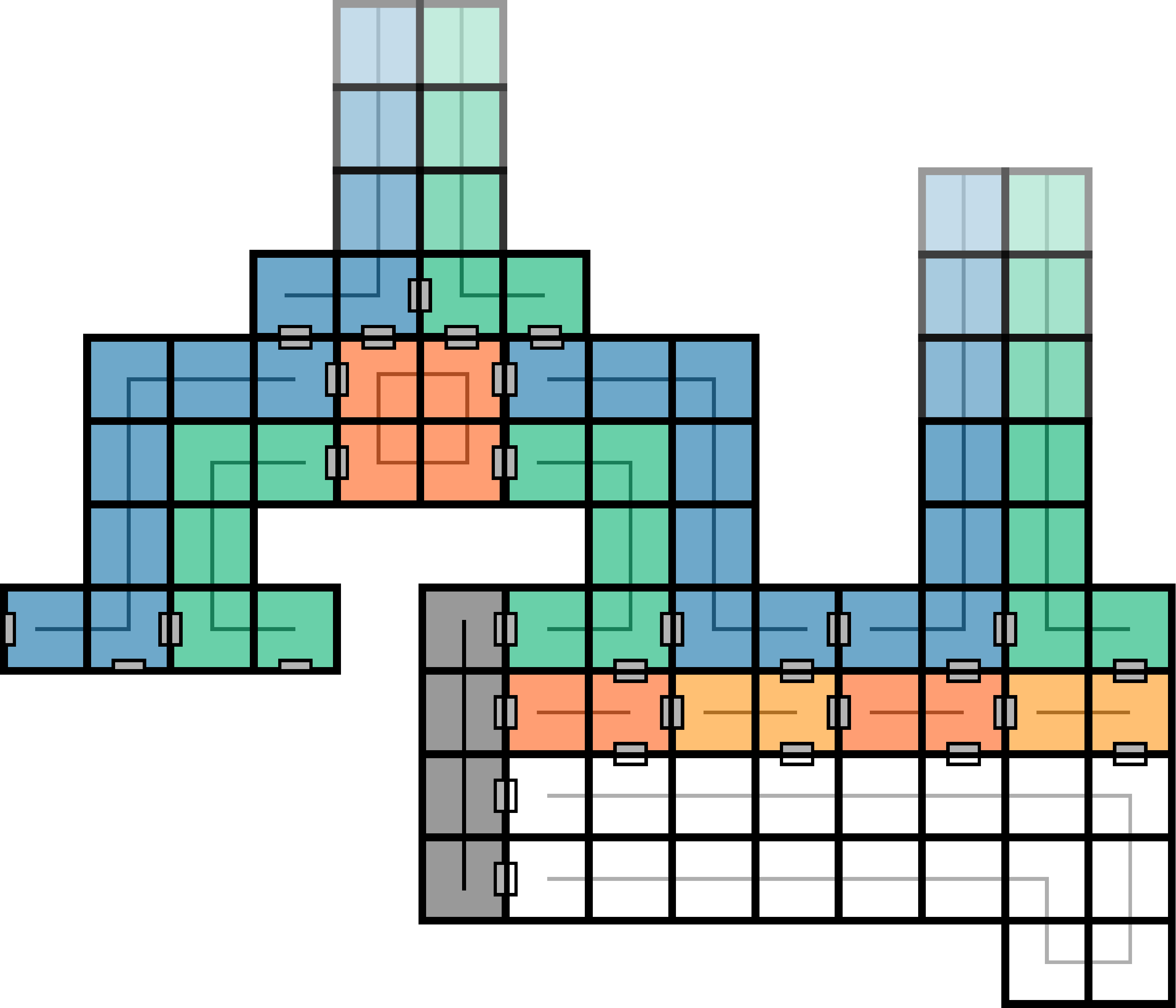}
		\caption{Gadgets for $c_1$}
		\label{fig:clause1g}
	\end{subfigure}
	\caption{(a)  The clause $c_1$ in Figure \ref{fig:SATfull} is satisfied by $x_2 = 1$. (b) The OR gate grows off of $x_2$. The other wires on the variable gadget are used to connect to other clause gadgets. (c) The gadget constructed for the clause $c_1$. Note the other wire from the OR gate has backfilled. }
\end{figure}

\begin{figure}[t]
	\centering
	\begin{subfigure}[b]{0.3\textwidth}
		\centering
		\includegraphics[width=.75\textwidth]{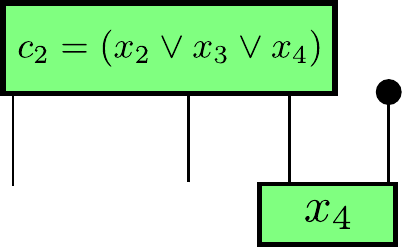}
		\caption{Clause $c_2$}
		\label{fig:clause2}
	\end{subfigure}
	\begin{subfigure}[b]{0.3\textwidth}
		\centering
		\includegraphics[width=.5\textwidth]{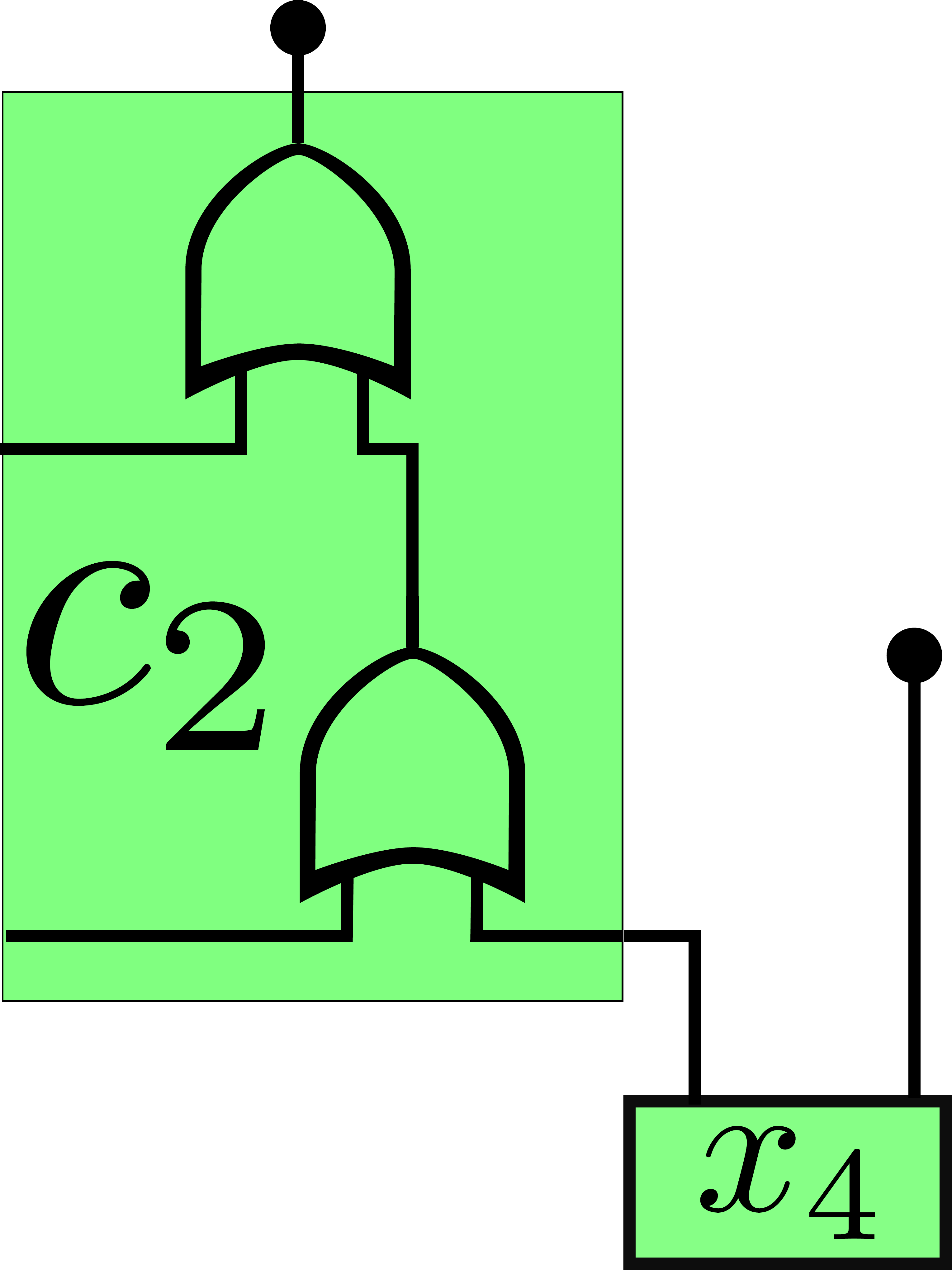}
		\caption{Circuit for $c_2$}
		\label{fig:clause2circuit}
	\end{subfigure}
	\begin{subfigure}[b]{0.3\textwidth}
		\centering
		\includegraphics[width=.75\textwidth]{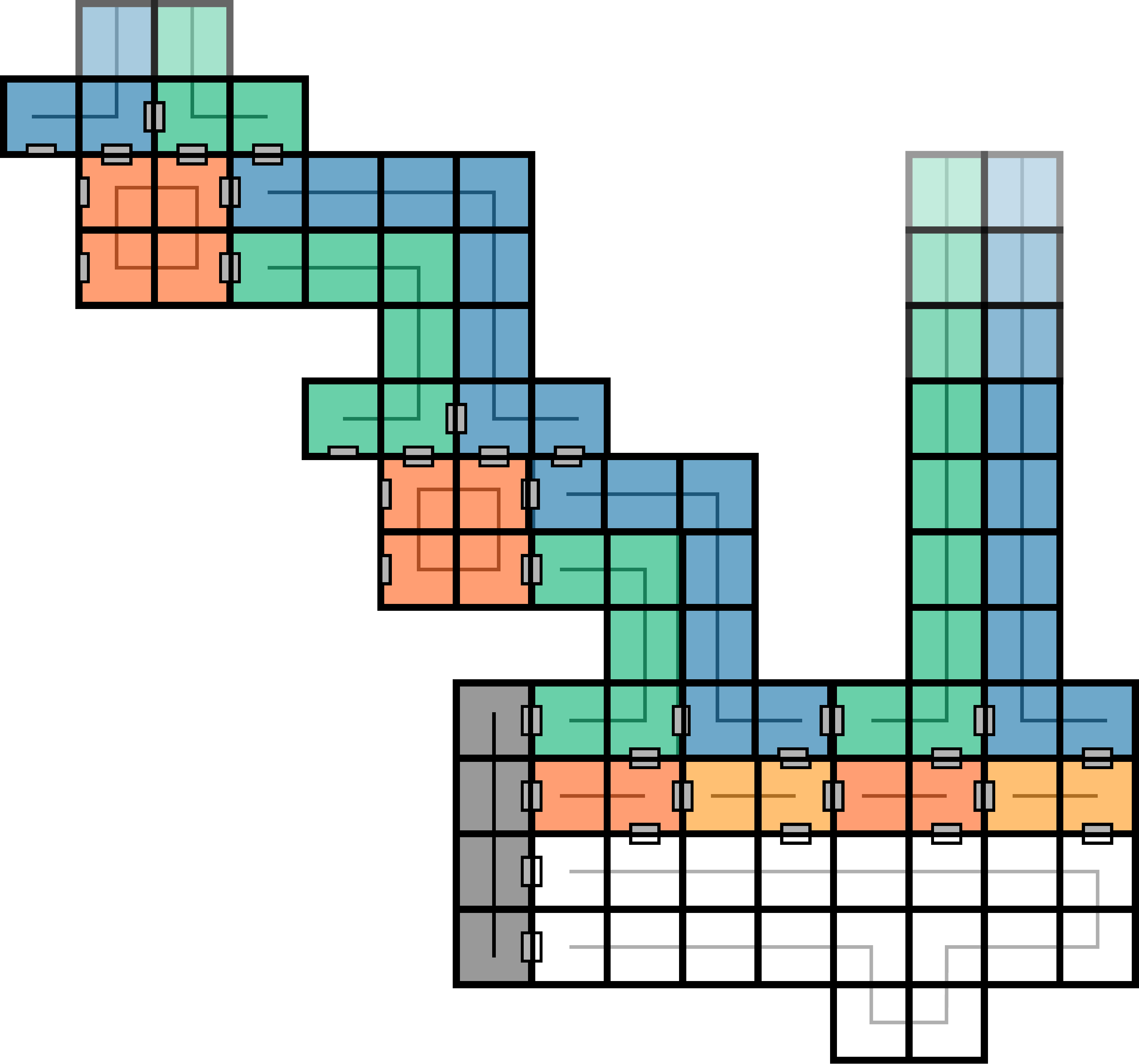}
		\caption{Gadgets $c_3$}
		\label{fig:clause2g}
	\end{subfigure}
	\caption{(a) $c_2$ from our example. This clause has 3 variables and no children. (b) The clause is computed using two OR gates. The gates are able to grow from $x_4$. (c) $x_4$ variable gadget allows for the two OR gates to attach. 
}
\end{figure}

\paragraph{Non-parent Clauses.} The first clause type we cover are clauses without children, or clauses at the bottom of the circuit. The simplest type of this gadget are clauses with only $2$ literals as in Figure \ref{fig:neighClause}. This gadget is fairly straightforward to implement as we only need to use a single OR gate. An example of this type of clause is in Figure \ref{fig:clause1}, and its implementation is in Figure \ref{fig:clause1g}. Note that both variables appear in other clauses so those variable gadgets have additional wires. 
For non-parent clauses with $3$ literals (Figure \ref{fig:clause3}), we use $2$ OR gates (Figure \ref{fig:clause2g}).

\subsection{AND Gates and Parent Clauses} \label{subsec:andpar}
Since every clause in CNF form is separated by a logical AND, we create AND gates that compare clauses. Thus, we need to know which clauses are parent clauses since they have child clauses underneath them with wires coming into the gates. We also build a FANOUT gate for connecting clauses.
 
\paragraph{AND Gates.} The AND gate uses $2$ vertical dominoes that share a single strength-$1$ glue between them. Figure \ref{fig:buildA} shows an example AND gate being constructed. Once a wire that inputs to the gate is completed, one of the dominoes can cooperatively attach. The domino has another strength-$1$ glue on its north side that allows a horizontal domino to cooperatively attach using the glue exposed on the wire. 

Using the glues from the newly attached dominoes, the two halves of the gate are able to attach to each other. This allows for the two glue on the horizontal dominoes to be used to cooperative bind the white center domino. From here, the two halves of the wire that outputs from the AND gate can attach.

\paragraph{FANOUT Gates.}
In order to build the parent clause, we also need a way to `fan-out' and copy the signal from an AND gate to two other gadgets. We do this by adding glues to the north side of the center domino and having two wires grow off of the gadget. This process is shown in Figure \ref{fig:buildFanOut}.

\begin{figure}[t]
	\centering
	\begin{subfigure}[b]{0.6\textwidth}
		\begin{subfigure}[b]{0.75\textwidth}
			\centering
			\includegraphics[width=.85\textwidth]{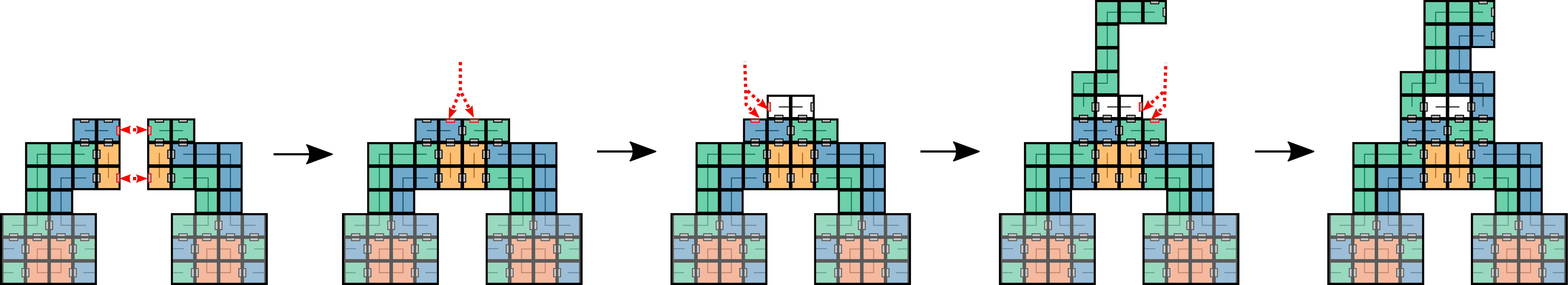}
			\caption{Build AND gate}
			\label{fig:buildA}
		\end{subfigure}
		\begin{subfigure}[b]{0.75\textwidth}
			\centering
			\includegraphics[width=.9\textwidth]{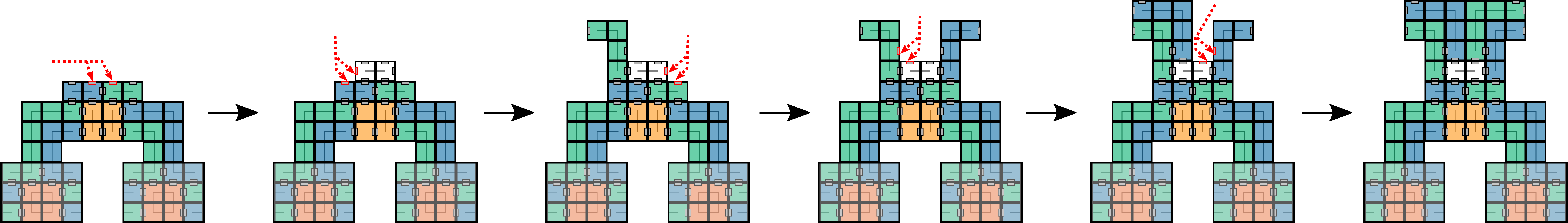}
			\caption{Build FANOUT}
			\label{fig:buildFanOut}
		\end{subfigure}
	\end{subfigure}
	\begin{subfigure}[b]{0.37\textwidth}
		\centering
		\includegraphics[width=.78\textwidth]{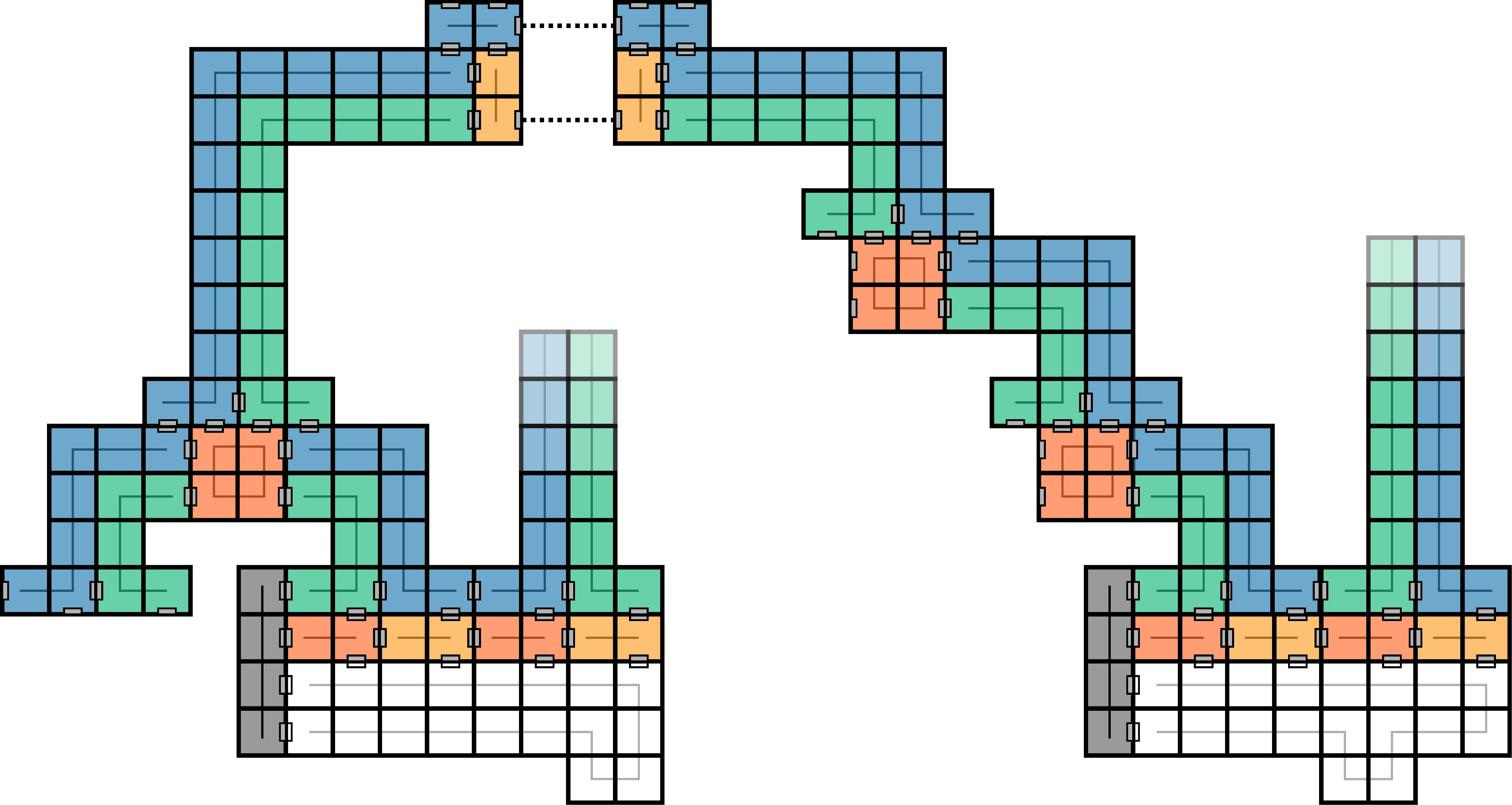}
		\caption{Attachment of an AND gate}
		\label{fig:childAND}
	\end{subfigure}
	\caption{(a) The process of an AND gadget assembling. The output wires can only grow from the combined halves of the AND gate.
	(b) By modifying the center domino, two wires may be output from a single AND gate, which works as a FANOUT. (c)  The two clauses $c_1$ and $c_2$ both have the same parent clause so they are joined by an AND gate. Once the dominoes attach to the output wire of the clauses the two assemblies may attach to each other.}
\end{figure}

\paragraph{Parent Clauses.}
Consider a parent clause $C_p = (x_1 \lor x_4)$. Let $C_c$ be the child clause. Since we want this gadget to build only if its own clause and its child are both satisfied, we can view this statement as $(x_1 \lor x_4) \land C_c$. However, we can modify the statement to be $(x_1 \land C_c) \lor (C_c \land X_4)$, which we can build since we have planar circuits. An example of the circuit and gadgets are shown in Figure \ref{fig:parent}.

\begin{figure}[t]
	\centering
	\begin{subfigure}[b]{0.28\textwidth}
		\centering
		\includegraphics[width=.85\textwidth]{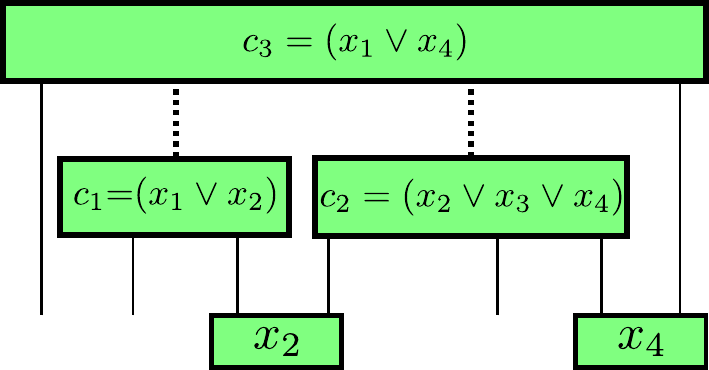}
		\caption{Clause $c_3$}
		\label{fig:clause3}
	\end{subfigure}
	\begin{subfigure}[b]{0.28\textwidth}
		\centering
		\includegraphics[width=.85\textwidth]{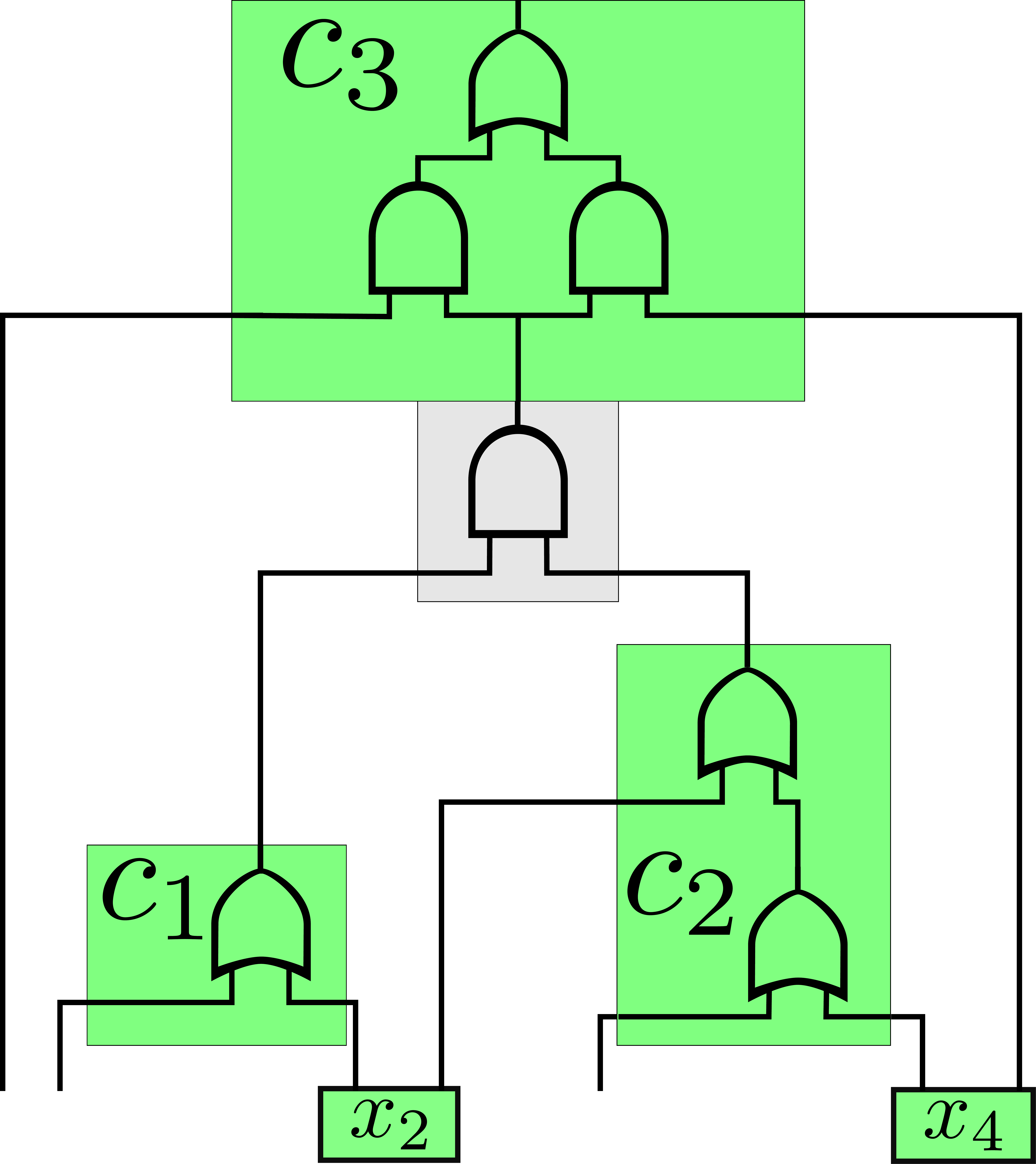}
		\caption{Circuit for $c_3$}
		\label{fig:clause3Cir}
	\end{subfigure}
	\begin{subfigure}[b]{0.28\textwidth}
		\centering
		\includegraphics[width=.78\textwidth]{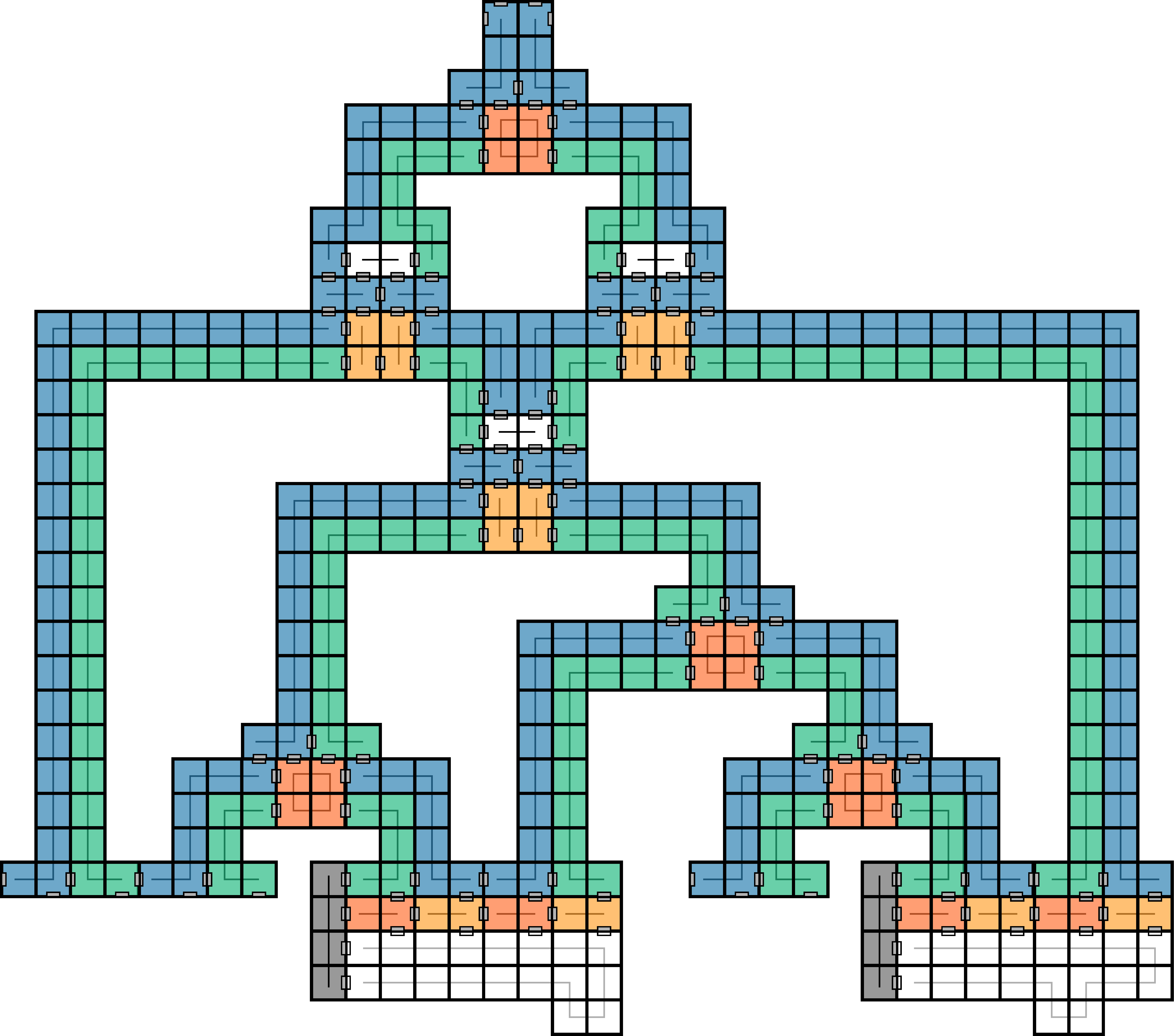}
		\caption{Gadget for $c_3$}
		\label{fig:clause3g}
	\end{subfigure}
	\caption{
	(a) The root clause of the example instance. This clause has two literals and two children.
	(b) Since the AND gate has built and $x_4$ satisfies $c_3$ the clause may grow. 
	(c) The two child clause's output are connected by an AND gate and then used as the middle input to the gadget. 
}
\end{figure}

By the neighboring variable pairs restriction, we know that any clause with three variables has at least a pair of them being neighbors. This means that there cannot be any child clauses beneath that neighboring pair, so we may use an OR gate between those two variables and then build the rest of the gadget in the same way as the two literal version (Figure \ref{fig:3clause}).

In our example instance, the root clause of the positive circuit has two children. For these cases we may use the AND gadget to verify that both child clauses have been satisfied before allowing the parent clause to build.
The root clause of the negative circuit in our example instance (Figure \ref{fig:clause5}) has three literals. The constructed gadget can be seen in Figure \ref{fig:clause5}.

\begin{figure}[t]
	\centering

	\begin{subfigure}[b]{0.28\textwidth}
		\centering
		\includegraphics[width=.7\textwidth]{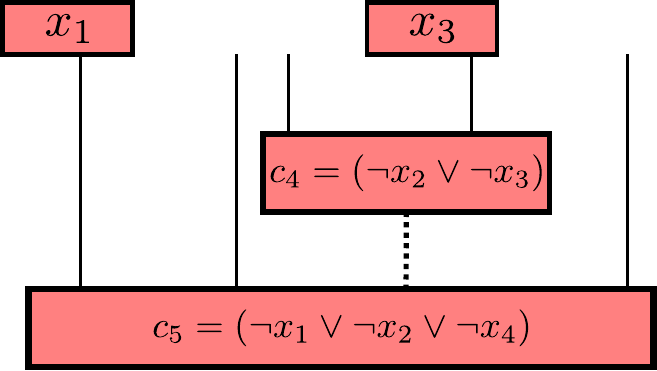}
		\caption{Clause $c_5$}
		\label{fig:clause5}
	\end{subfigure}
	\begin{subfigure}[b]{0.28\textwidth}
		\centering
		\includegraphics[width=.8\textwidth]{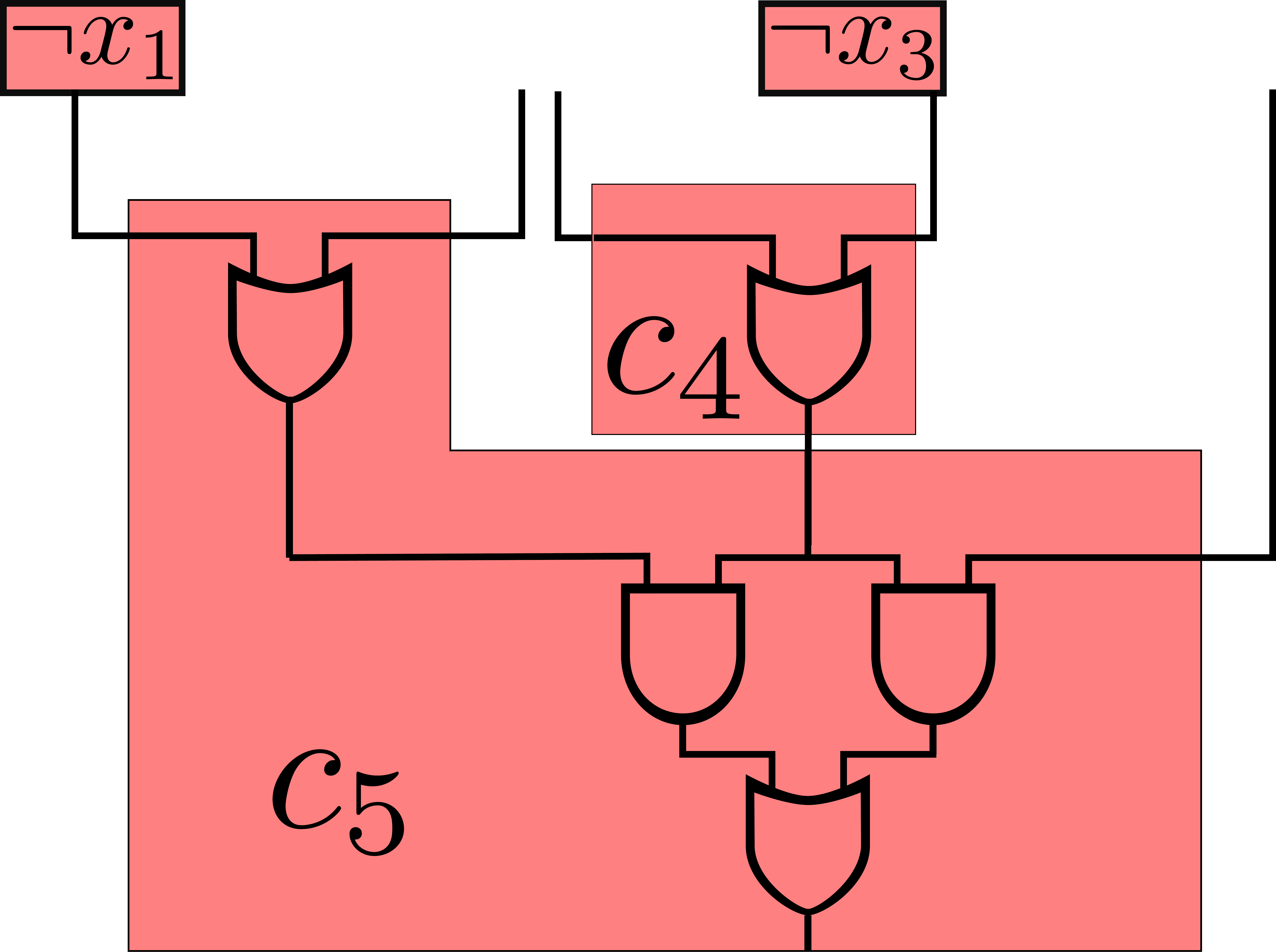}
		\caption{Circuit for $c_5$}
		\label{fig:clause5Cir}
	\end{subfigure}
	\begin{subfigure}[b]{0.28\textwidth}
		\centering
		\includegraphics[width=.8\textwidth]{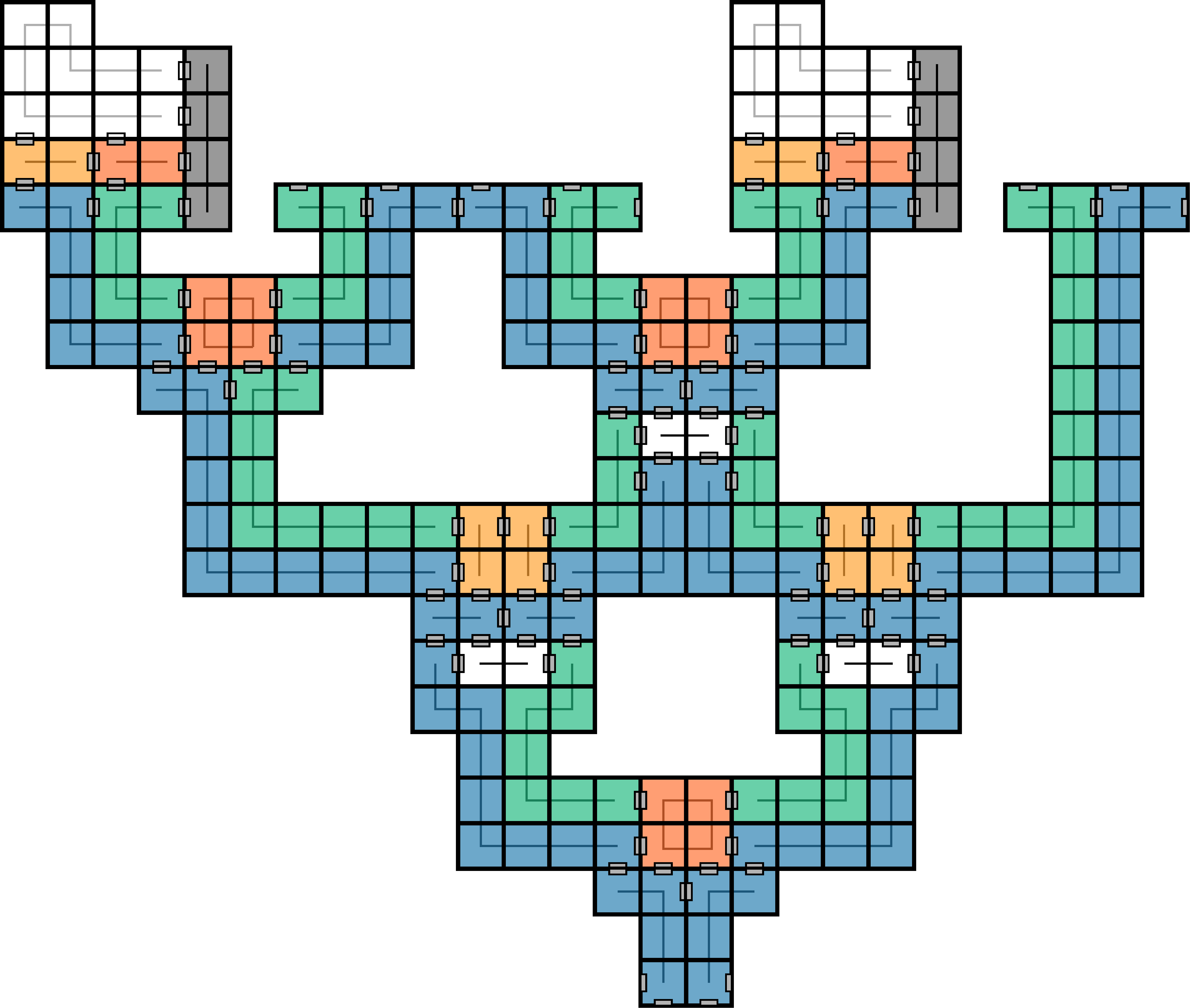}
		\caption{Gadget for $c_5$}
		\label{fig:clause5g}
	\end{subfigure}
	\caption{(a)  In the example instance the negative circuit has $c_5$  which is a parent clause with $3$ literals. 
	(b) The negative circuit draw with gates. The variables $x_1$ and $x_3$ being false satisfies all the clauses.
	(c) The variable assemblies we selected at the beginning also grow into a circuit with the root clause built.}
\end{figure}

\subsection{Root Clauses and Horizontal Bar} \label{subsec:rootarms}

\paragraph{Root Clauses and Arms.}
The root clause is the outermost clause on either side of the variables. Although it functions similar to the other clauses, instead of outputting a wire, a horizontal $4 \times 1$ rectangle can attach after it finishes assembling. The arms may then cooperatively bind to the rectangle and the wires of the root clause forming the top of the circuit. The glues on the ends of these arms allow for the circuit to attach to the horizontal bar. A high-level view of the root clauses and arms attached is shown in Figure \ref{fig:bar} as well as a detail of the assembly process of the root clause in Figure \ref{fig:rootTerm}.

\paragraph{Horizontal Bar.}
The horizontal bar (Figure \ref{fig:bar}) is a width-$1$ assembly that extends the width of both circuits with strength-$1$ glues on the north and south side of the outer tiles. Since the arms must also be able to attach to each other to form a rogue assembly the glues on the ends of the horizontal bar must be the same. In order to prevent the horizontal bar from attaching to another instance of itself, we extend the bar partially downward so it will geometrically block copies from attaching. 

\begin{figure}[t]
	\centering
	\vspace*{-.2cm}
	\begin{subfigure}[b]{0.45\textwidth}
		\centering	
		\includegraphics[width=.8\textwidth]{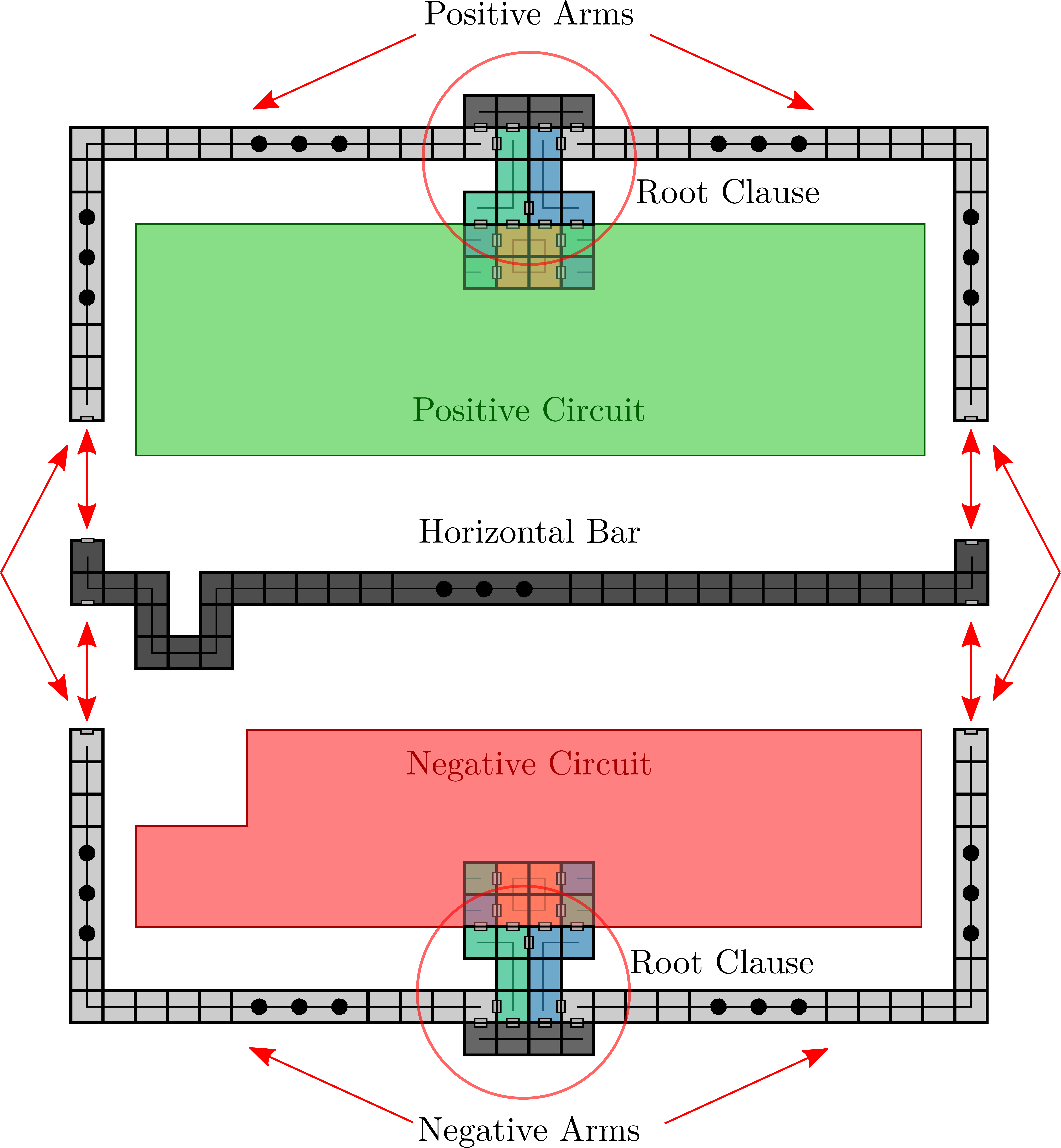}
	    	\caption{Horizontal Bar and Completed Circuits}
		\label{fig:bar}
	\end{subfigure}
	\begin{subfigure}[b]{0.45\textwidth}
		\centering
	    	\includegraphics[width=.7\textwidth]{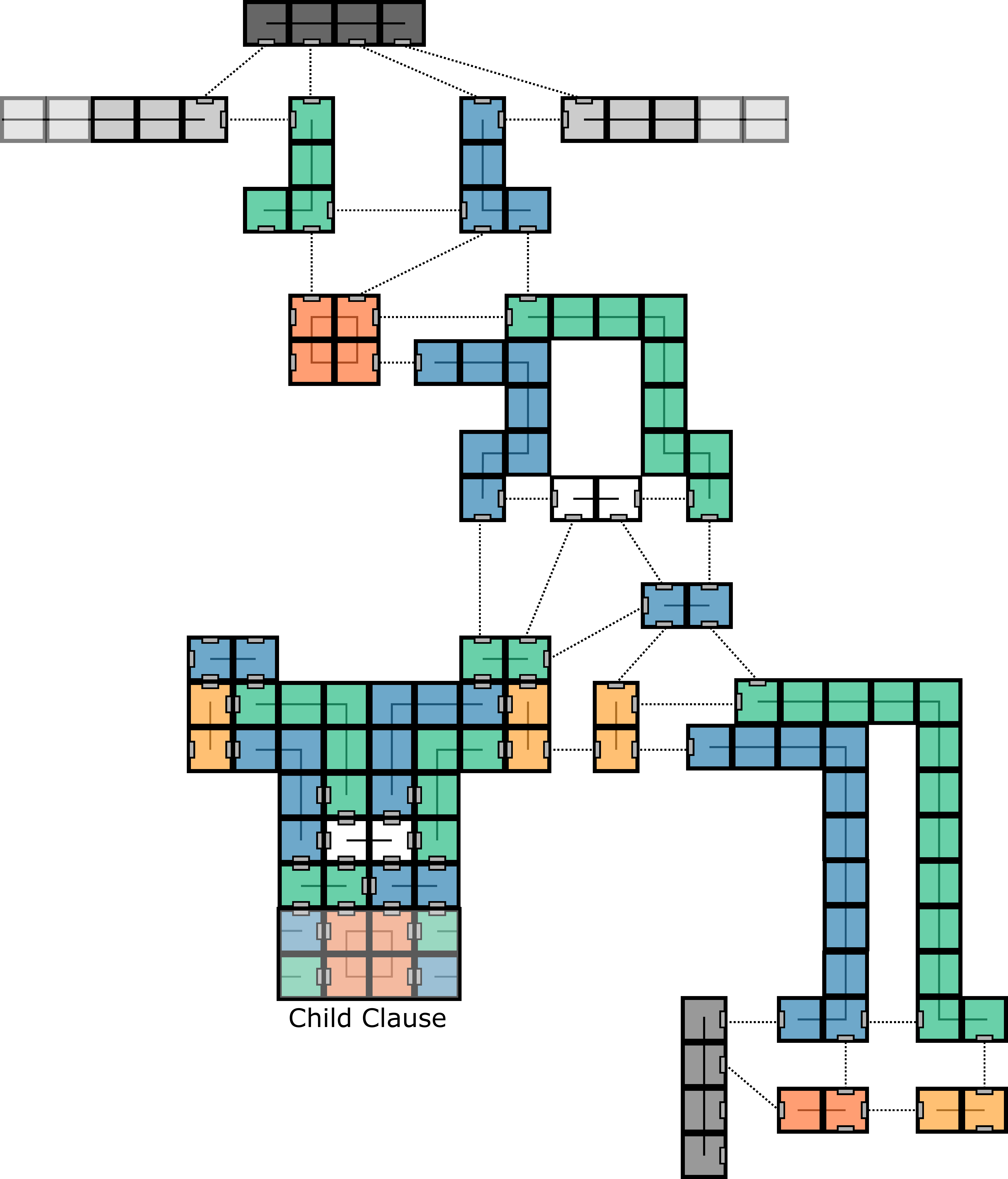}
	    	\caption{Glues on root clause and variable gadgets.}
		\label{fig:rootTerm}
	\end{subfigure}
	\caption{(a) The root clauses and arms joining the positive and negative assemblies with the horizontal bar. The root clause allows a short wire to attach where the arms can then attach. Each arm has a strength-1 glue at the end.  The horizontal bar separates the positive circuit from the negative circuit. The small bump is so the bar can not attach to other horizontal bars.
	(b) Each subassembly of the root clause cannot attach to each other without being satisfied from the child clauses since each subassembly only shares a strength-1 glue with adjacent assemblies. }
\end{figure}

\subsection{Rogue Assemblies} \label{subsec:rogue}
For the construction of the target assembly, each piece is built from the variables up to the root clause. However, the nondeterministic build order means that not all parts of each circuit need to be built in order for the root clause to be satisfied. For instance, if one of the variables in a clause attaches, the OR gates will still allow the wires to attach. Thus, using a variable constitutes setting it to true (and in the negative circuit using a variable is setting the negation to true).

With root clauses satisfied and the arms attaching, a rogue assembly may occur as shown in Figure \ref{fig:rogue}. The corresponding circuit is shown in Figure \ref{fig:roguecircuit}. This can occur because the arms can attach to each other without the horizontal bar. Normally, the variable gadgets would overlap and prevent this attachment if both the positive and the negative circuit used the same variable (which is setting a variable to both true and false). Thus, the positive and negative side each have their own set of variables that make all clauses on their respective sides true. This rogue assembly can only happen if there is a subset for each side that allows all clauses to be true, and thus satisfies the original MP-3SAT-NVP formula.

For an MP-3SAT-NVP instance $\phi$ and an assignment $X_s$ to the variables in $X$, let $A_p$ and $A_n$ be the positive and negative circuit assemblies, respectively, created from $\phi$. We say an assembly $A'_p \sqsubseteq A_p$ represents the assignment $X_s$ if it has attached variable gadgets for the variables in $X_s$ that equal $1$, and has built its root clause. For negated circuits, it must have variable gadgets attached for variables set to $0$ in $X_s$.

\begin{lemma}
For a rectilinear encoding of Monotone Planar 3SAT $\phi$ with neighboring variable pairs and 2HAM system $\Gamma_\phi$ as described above, there exist two producible assemblies $A'_p \sqsubseteq A_p$  and $A'_n \sqsubseteq A_n$ that both represent the same assignment $X_s$ to the variables $X$, if and only if $X_s$ satisfies $\phi$.
\label{lem:rogue}
\end{lemma}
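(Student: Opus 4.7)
The plan is to prove both implications of the biconditional separately, exploiting the hierarchical build order enforced by the cooperative-binding design of the gadgets.

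For the forward direction ($X_s$ satisfies $\phi$ implies the subassemblies exist), I would argue by structural induction over the clause hierarchy. The base case attaches, on the positive side, a variable gadget for each $x_i$ with $X_s(x_i)=1$ (and symmetrically on the negative side for variables set to $0$); by the construction in Section \ref{subsec:vars}, each such variable gadget is producible in isolation since its bumps, base dominoes, and wires only require cooperative binding to previously placed subassemblies of the same gadget. For the inductive step, I would process clauses in order of increasing depth. For a non-parent clause satisfied by some literal $\ell$, the OR-gate description (Figure \ref{fig:buildO}) shows that the $2\times 2$ square and the output wire attach cooperatively using only one completed input wire, so the satisfying literal's variable gadget suffices to build the OR gate; a $3$-literal non-parent clause chains two OR gates in the same way. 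For a parent clause, the AND/FANOUT gates (Figure \ref{fig:childAND}) and the rewriting $(x\lor y)\land C_c \equiv (x\land C_c)\lor(y\land C_c)$ reduce the construction to inductively-producible subassemblies combined by an outer OR. Finally, the root clause allows the short rectangle, the arms, and then the arm-to-bar attachment (Section \ref{subsec:rootarms}), yielding $A'_p$ and $A'_n$ that both represent $X_s$.

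For the reverse direction (producibility implies satisfaction), I would contrapositively track which glues can cooperate. The key observation is that every gadget above the variable row is glued together only by strength-$1$ glues, so any attachment within it requires cooperative binding, which in turn requires the relevant input wire(s) to be \emph{completed}. A completed wire can be traced backward: its two halves can only have joined cooperatively using glues exposed by the gate upstream, which by induction must itself be fully built. Thus a completed OR-gate output implies at least one input wire is complete, a completed AND-gate output implies both input wires are complete, and a completed root clause implies, by unwinding the circuit, that every clause along the path has a satisfying literal. Since $A'_p$ has the root clause built on the positive side and $A'_n$ has it built on the negative side, and since both represent the same $X_s$ (so their variable choices are consistent as a Boolean assignment), every clause of $\phi$ is satisfied by $X_s$.

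The main obstacle is making the reverse direction rigorous in the presence of backfilling: once an OR gate's output wire forms from one input, the \emph{other} input wire can ``grow backward'' (Figure \ref{fig:backFilling}) without any variable gadget being present. I would handle this by invoking Lemma \ref{lem:subRogue} style reasoning combined with a careful invariant: define the set of \emph{load-bearing} tiles of each gate as those whose presence was required for the cooperative attachment that first produced the gate's output, and show by induction on the circuit depth that these tiles form a sub-tree that reaches down to variable gadgets genuinely attached to the bar. Backfilled tiles are inert for this argument since they only exist as a consequence of a prior load-bearing attachment. With this invariant the representation condition (an $A'_p$ representing $X_s$ must have variable gadgets attached for exactly the true variables) follows, completing both directions of the equivalence.
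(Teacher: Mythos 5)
Your proof is correct and follows essentially the same route as the paper's: the forward direction grows the circuit from the satisfying variable gadgets exactly as in the paper, and your backward-tracing of completed wires through the OR and AND gates is just the contrapositive form of the paper's contradiction argument (which case-splits on whether the unsatisfied clause is a root, an internal parent, or a childless clause). Your explicit treatment of backfilling via a load-bearing invariant is a reasonable extra precaution, but the paper disposes of that issue in the construction itself (the base dominoes prevent a variable gadget from completing by backward growth, and a backfilled OR-gate input wire terminates before reaching any bump), so no separate invariant is needed inside this lemma's proof.
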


\begin{proof}
If there exists a satisfying assignment $X_s$ to $X$, we may build $A'_p$ by taking the variable gadgets for variables assigned to $1$ and grow the circuit off of them. Since we know all the clauses are satisfied, each clause gadget (including the root clause) may grow resulting in an assembly $A'_p$ that represents $X_s$. By the same argument we know $A'_n$ is producible since $X_s$ satisfies $\phi$, which includes the negated clauses. 

We prove these assemblies are producible only if $X_s$ satisfies $\phi$ via contradiction.
Assume $X_s$ does not satisfy $\phi$, but both assemblies $A'_p$ and $A'_n$ are producible.
Since $X_s$ does not satisfy $\phi$, there must exist at least one unsatisfied clause $c_i$. W.L.O.G., assume $c_i$ is a positive clause. We show the assembly $A'_p$ cannot be produced.

Starting with the case that $c_i$ is the root clause, assume all of the children of $c_i$ are satisfied.
The center input of the clause is a producible subassembly of $A'_p$ since variable gadgets are allowed that satisfy the clauses below it.
We can see in Figure \ref{fig:rootTerm} the other producible subassemblies of the gadgets only have a strength-$1$ glue between them.
This means none of the subassemblies are able to attach to each other on their own.
In order for the arms to attach to the output wire of the root clause, at least one of the AND gates must be fully constructed.
The AND gate cannot assemble unless both halves of the gate have been constructed.
The middle input is built, but the other half of the AND gadget must grow off a completed wire from the variable gadget.
However, since $c_i$ is not satisfied the variables gadgets which satisfy the formula have not attached so the assembly $A'_p$ cannot build the clause gadget.

If $c_i$ is another parent clause that is not the root. Let the clause $c_j$ be the parent clause of $c_i$. If the clause gadget for $c_i$ is not constructed then the gadget for $c_j$ is not buildable.
Since the gadgets used are the same as the root clause, the output wires of the clause gadget for $c_i$ cannot be built without a variable gadget which satisfies the formula.
The middle input of clause gadget representing $c_j$ will not be buildable since this would be the output wire of $c_i$. The middle input goes to two AND gates that cannot construct unless both wires have been built. Thus, the output wire of $c_j$ cannot be built without its children clauses satisfying it. In the case $c_j$ has multiple children, the output wires of all its children are joined by AND gates that will not construct without both inputs.

Finally, consider the case where $c_i$ is a clause without children. In order for the clause's output wire to complete, it must be attached to an OR gadget and the outer wire of the variable gadget. The OR gadget may only attach to a completed wire from a variable gadget (or another OR). The variable gadget cannot be completed without placing the bump, so we cannot have built the outwire of $c_i$. By the same argument as the previous case, this clause not being built results in its parent not being built.

If $c_i$ is not satisfied, the clause gadget for $c_i$ cannot be constructed, which means the assembly $A'_p$ is not producible. 
\end{proof}

\begin{figure}[t]
	\centering
	\begin{subfigure}[b]{0.34\textwidth}
		\centering
	    	\includegraphics[width=1.\textwidth]{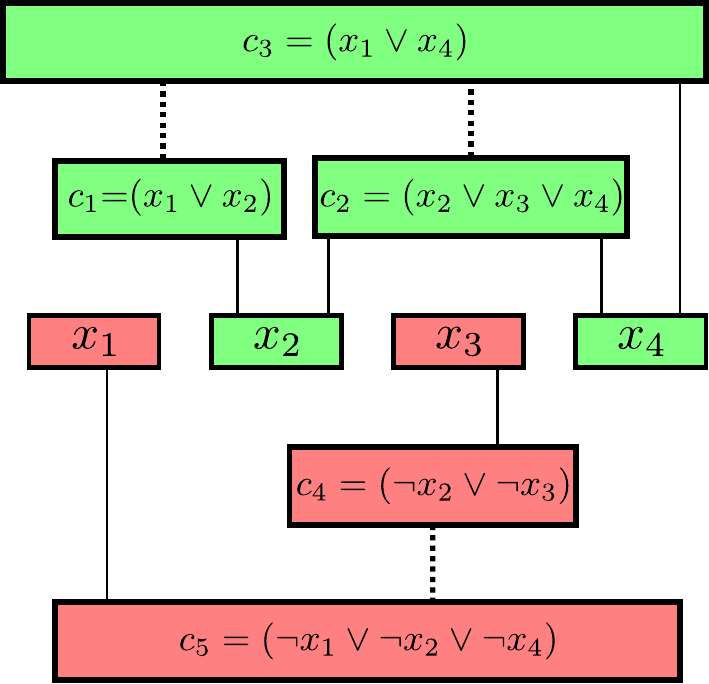}
	    	\caption{Satisfying Assignment}
		\label{fig:rlSAT}
	\end{subfigure}
	\begin{subfigure}[b]{0.3\textwidth}
		\centering
		\includegraphics[width=.7\textwidth]{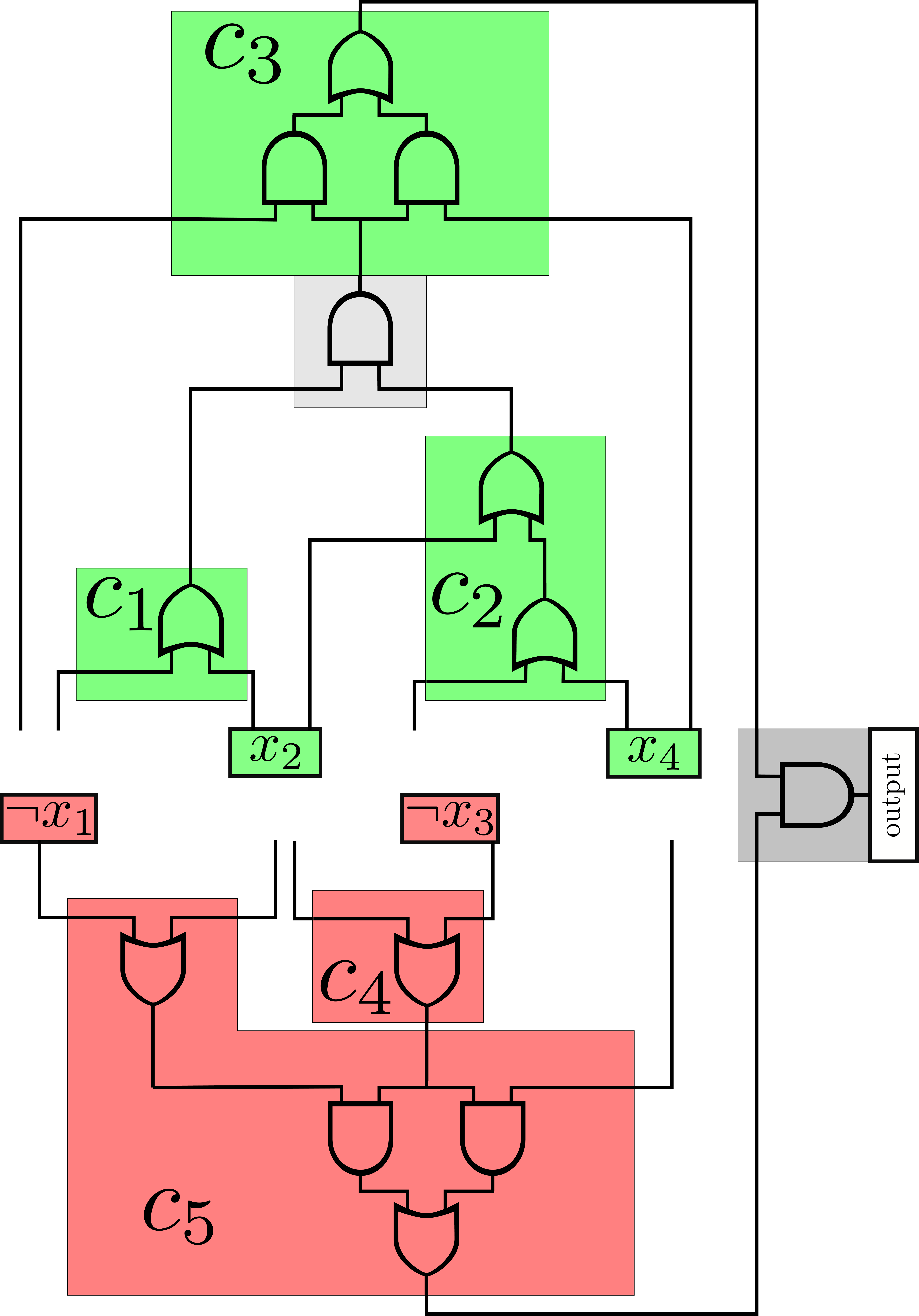}
		\caption{Rogue Assembly Circuit}
		\label{fig:roguecircuit}
	\end{subfigure}
	\begin{subfigure}[b]{0.32\textwidth}
		\centering
		\includegraphics[width=.77\textwidth]{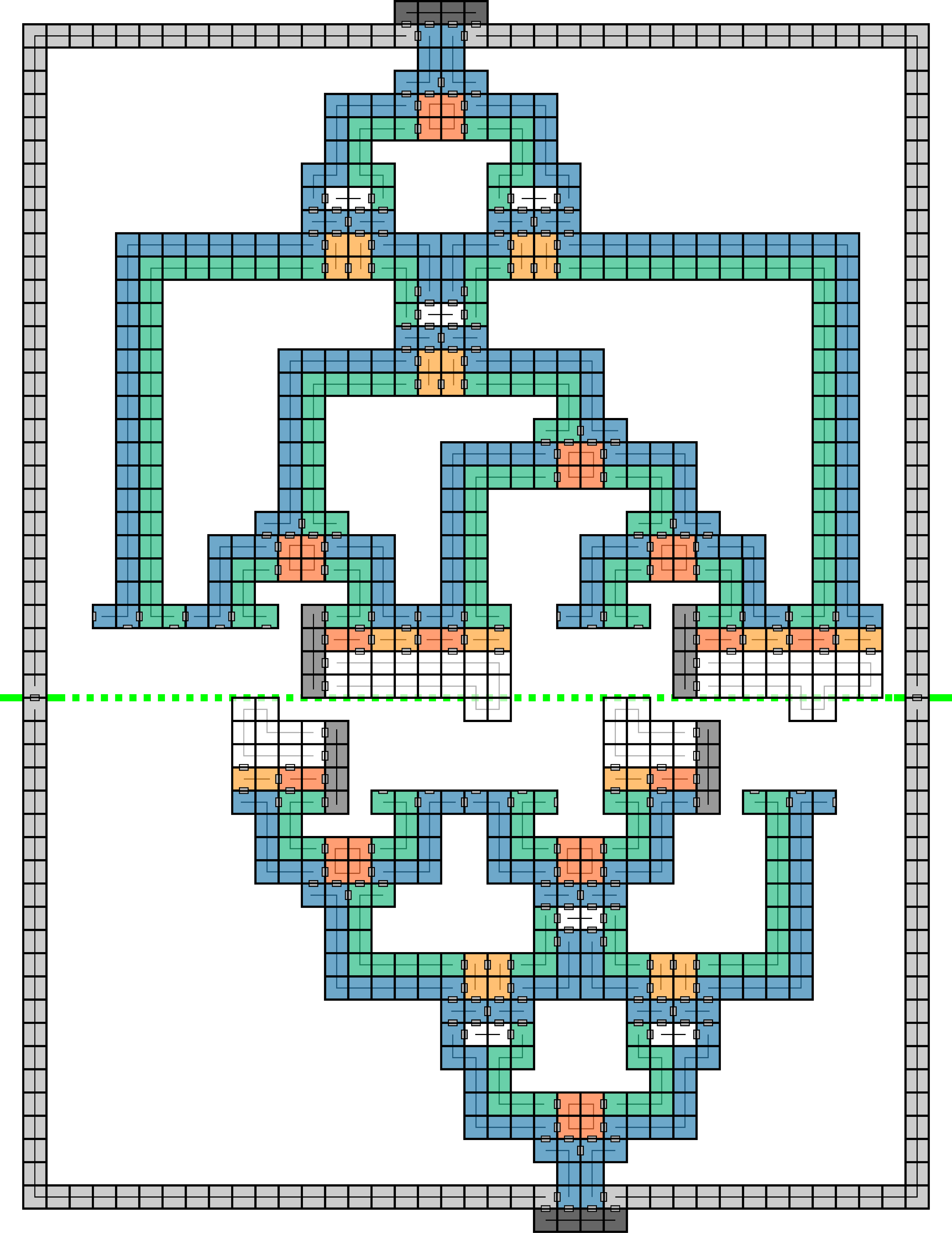}
		\caption{Rogue Assembly}
		\label{fig:rogue}
	\end{subfigure}
	\caption{(a) There exists a satisfying assignment for the example instance with green blocks representing variables which equal $1$ and red blocks representing $0$. 
	(b) Rogue assembly drawn as a circuit with selected variables.
	 (c)  The 2HAM system will produce a Rogue Assembly from the two circuit assemblies which represent the satisfying assignment. 
	 }
\end{figure}

\begin{theorem}
The Unique Assembly Verification problem in the 2HAM is coNP-Complete with $\tau = 2$.
\label{thm:coNPC}
\end{theorem}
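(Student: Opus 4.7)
The plan is to combine two ingredients: membership in coNP (already established in prior work, since a purported rogue assembly together with an assembly tree witnessing its producibility constitutes a polynomial-size certificate for a \emph{no} instance, verifiable via the producibility algorithm of \cite{prod2HAM}) and coNP-hardness via the reduction developed throughout Section \ref{sec:uavhard}. Given an MP-3SAT-NVP instance $\phi$, I will output the 2HAM system $\Gamma_\phi = (T,2)$ together with the target assembly $A$ described across Sections \ref{subsec:vars}--\ref{subsec:rootarms}. Because each variable, clause, and gate gadget has size proportional to its local degree and every tile type appears exactly once in $A$, the construction is polynomial-time in $|\phi|$. The heart of the proof is the correctness claim: $\Gamma_\phi$ uniquely produces $A$ if and only if $\phi$ is unsatisfiable.

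For the forward direction (satisfiable $\Rightarrow$ non-unique), I will apply Lemma \ref{lem:rogue} to a satisfying assignment $X_s$ to obtain producible subassemblies $A'_p \sqsubseteq A_p$ and $A'_n \sqsubseteq A_n$ representing $X_s$, each having both root arms attached. By construction (Figure \ref{fig:bar}) the arm tips expose the same strength-1 glues that appear on the north and south edges of the horizontal bar $H$, so $A'_p$ and $A'_n$ are $\tau$-combinable directly, bypassing $H$. Geometric feasibility follows from consistency of $X_s$: no variable is set to both true and false, so the bumps of the variable gadgets in $A'_p$ and $A'_n$ occupy disjoint columns. The resulting assembly omits $H$ entirely, hence is not a subassembly of $A$ and witnesses that $A$ is not uniquely produced.

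For the converse, I will assume $\Gamma_\phi$ does not uniquely produce $A$. Since every tile type in $T$ appears in $A$, no single-tile assembly is a rogue, so Lemma \ref{lem:subRogue} supplies $B, C \sqsubseteq A$ that are $\tau$-combinable into a rogue $R$. The construction was designed so that every glue in $A$ is used uniquely, with the sole exception of the strength-1 glues shared between the arm tips and the north/south edges of $H$; consequently the novel attachment between $B$ and $C$ must occur across these glues. Self-attachment of $H$ is ruled out by its downward bump (Section \ref{subsec:rootarms}), leaving only the case that a completed positive-side subassembly and a completed negative-side subassembly combine arm-to-arm. Tracing the cooperative bindings backward through the root clause (as in the proof of Lemma \ref{lem:rogue}), both root arms can be present only if every clause in both circuits is satisfied by the attached variable gadgets, and the variable-gadget bumps prevent any variable from being used on both sides at once. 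Thus $R$ encodes a consistent assignment satisfying $\phi$.

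The main obstacle will be the case analysis in the reverse direction: one must rule out more exotic rogue attachments, such as partial wires backfilling across the bar, chains of arms threaded through several copies of $H$, and unexpected backwards growth through OR/AND/FANOUT gadgets. The glue-uniqueness of the tile set together with the two families of geometric blockers (the horizontal bar's bump and the variable bumps) are engineered precisely to funnel every novel attachment into the arm-tip case, at which point Lemma \ref{lem:rogue} closes the argument by converting the rogue into a satisfying assignment.
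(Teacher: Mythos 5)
Your proposal is correct and follows essentially the same route as the paper's own proof: the same reduction, the same use of Lemma~\ref{lem:rogue} for the forward direction, and the same appeal to Lemma~\ref{lem:subRogue} plus the observation that the arm-tip glues are the only repeated glues (with the bar's bump and the variable bumps handling geometric blocking) for the converse. The only additions are your explicit mention of coNP membership and a slightly more cautious flagging of the residual case analysis, both of which the paper treats as already established or implicit.
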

\begin{proof}
Given an instance of a rectilinear encoding of Monotone Planar 3SAT with neighboring variable pairs $\phi$, we create a 2HAM system $\Gamma = (T, 2)$ and an assembly $A$ such that $\Gamma$ uniquely produces $A$ if and only if there does not exist a satisfying assignment to $\phi$. We create the assembly $A$ by taking the rectilinear encoding of $\phi$, arranging the rectangles on a grid graph, and replacing the rectangles with the given variable and clause gadgets. We also add the arms and horizontal bar.

Assume there exists a satisfying assignment $X_s$ to the variables $X$, for $\phi$. We know by Lemma \ref{lem:rogue}, there exist two producible assemblies $A'_p$ and $A'_n$ that both contain the arms and have complementary bump positions\footnote{Having complimentary bump positions is equivalent to both representing the same assignment}. These two assemblies can cooperatively bind to one another using the two glues on their arms, and thus produce a rogue assembly as in Figure \ref{fig:rogue}. This means a satisfying assignment to $\phi$ implies $\Gamma$ does not uniquely construct $A$.

Now assume $\Gamma$ does not uniquely produce $A$, so there exists some rogue assembly $B$. The only repeated glues in the tile set of $\Gamma$ are the exposed glues on the arms. Any rogue assembly must use these two glues to assemble, and they must be assembled from two subassemblies of the target by Lemma \ref{lem:subRogue}. Let $B$ be producible by combining two assemblies $b$ and $b'$. Since both $b$ and $b'$ are producible assemblies with both their arms, and they can attach to each other, they are not geometrically blocked. This implies they must represent the same assignment and by Lemma \ref{lem:rogue}, this can only be true if the assignment satisfies $\phi$. 
By viewing which variable gadgets are included in the two assemblies, we can identify the satisfying assignment to $\phi$.
Thus, $\Gamma$ will uniquely produce $A$ if and only if there does not exist a satisfying assignment to $\phi$.
\end{proof}

\section{Verification of Tree-Bonded Assemblies}\label{sec:tree}
In this section, we investigate the problem of Unique Assembly Verification with the promise that the target assembly $A$ is \emph{tree-bonded}, meaning the bond graph of the target assembly forms a tree. Figures \ref{fig:treeShape} and \ref{fig:treeBond} show examples of tree-bonded assemblies. Figure \ref{fig:notTree} shows an assembly whose bond graph contains a cycle and thus is not a tree-bonded assembly. We first present a $\mathcal{O}(|A|^5)$ algorithm for temperature $2$ systems, and then extend this method to provide a $\mathcal{O}(|A|^5 \log \tau)$ time dynamic programming algorithm for the case where the temperature $\tau$ of the system can be passed as a parameter.  Before describing the algorithms, we first introduce some required definitions and the problem formulations.

\begin{figure}[t]
	\centering
	\begin{subfigure}[b]{0.25\textwidth}
		\centering
		\includegraphics[width=.3\textwidth]{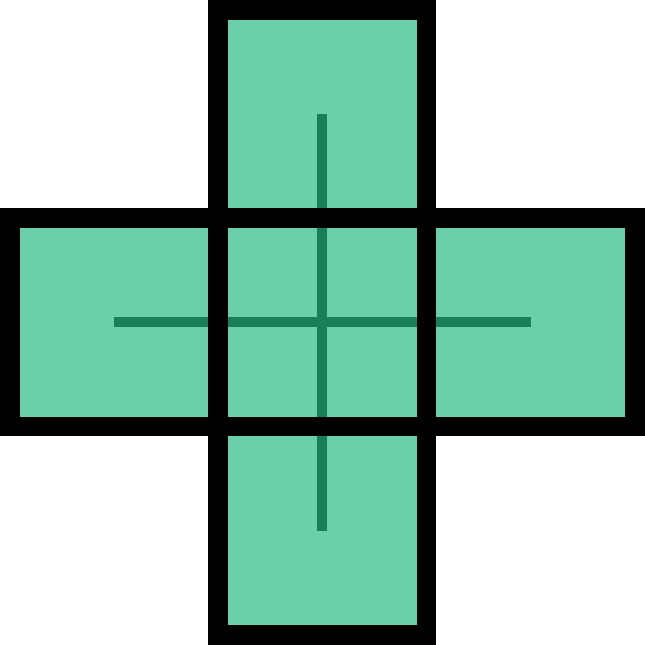}
		\caption{Tree-bonded and Shaped}
		\label{fig:treeShape}
	\end{subfigure}
	\begin{subfigure}[b]{0.25\textwidth}
		\centering
		\includegraphics[width=.3\textwidth]{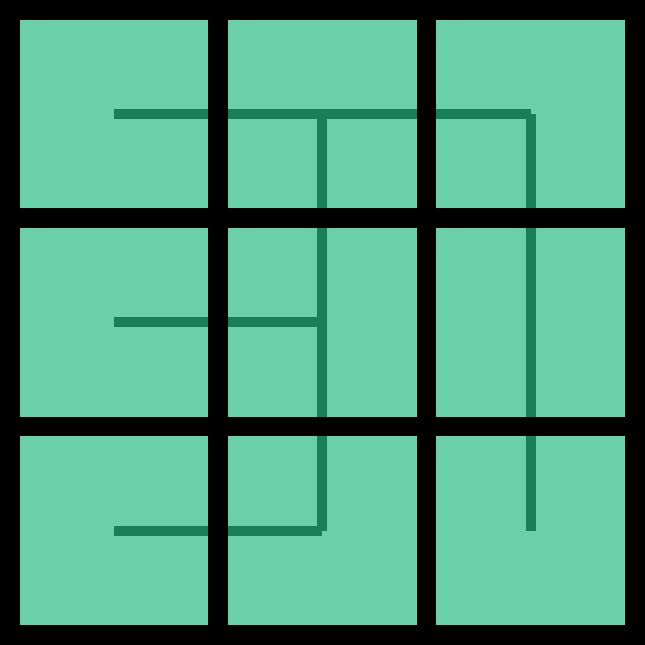}
		\caption{Tree-bonded}
		\label{fig:treeBond}
	\end{subfigure}
	\begin{subfigure}[b]{0.25\textwidth}
		\centering
		\includegraphics[width=.3\textwidth]{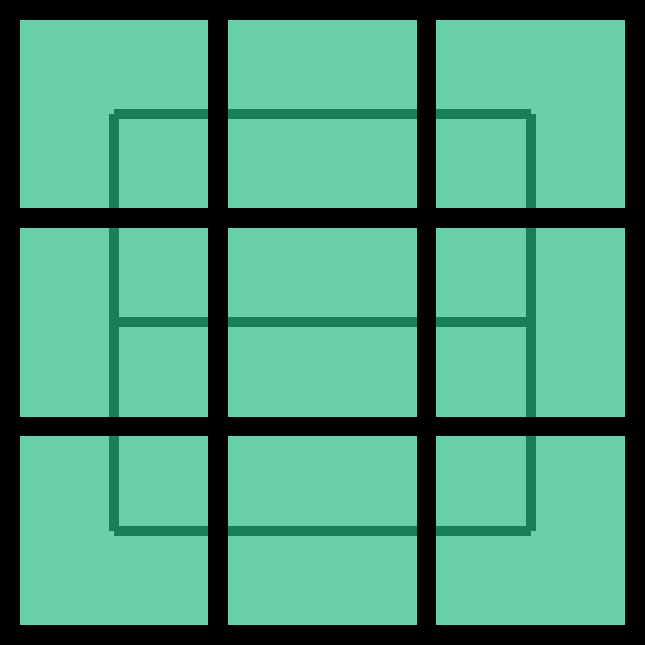}
		\caption{Not Tree-bonded}
		\label{fig:notTree}
	\end{subfigure}
	\caption{(a) Both the shape and bond graph of this assembly are trees. (b) Even though the shape of this assembly is a square, its bond graph still does not contain cycles and thus this assembly is tree-bonded. (c) This assembly is not a tree-bonded graph due to there being a cycle in its bond graph. }
\end{figure}

\paragraph{Tree-Bonded Assemblies.}
An assembly $A$ is a \emph{tree-bonded assembly} if and only if the induced bond graph $\mathcal{G}_A$ is acyclic.

\paragraph{Binding Sites.}  
For two configurations $C_1$ and $C_2$, we say a \emph{binding site} $\mathcal{B}$ is a pair of points $(p_a,p_b)$, such that $||p_a-p_b||_2 = 1$, and the tiles $C_1(p_a)$ and $C_2(p_b)$ have nonzero glue strength between each other. The set of binding sites for two configurations is the set of pairs of points that meet this requirement.
We also define an \emph{inner binding site}.
For two configurations, $C_1$ and $C_2$, and a pair of binding sites $a = (a_1,a'_2), b = (b_1,b'_2)$, let $I(a, b)$ be the set of binding sites that occur on the inside of the loop formed by $a, b$ (inner binding sites). An example of the area enclosed by a loop is seen in Figure \ref{fig:outerAE}.

\paragraph{Simple Loops.}
For two configurations, $C_1$ and $C_2$, and a pair of binding sites $a = (a_1,a'_2), b = (b_1,b'_2)$, we say the loop formed by $a,b$ is a \emph{simple loop} if $|I(a, b)| = 0$.

\paragraph{Origin Configuration}
In discussing different configurations and assemblies, it is useful to anchor a configuration to a fixed point.
For an assembly $A$, the \emph{origin configuration} $A_0$ is the translation of $A's$ configuration such that the bottom left vertex of the bounding box of elements in $dom(A_0)$ is at the origin $(0,0)$.

\begin{problem}[\textsc{Temp2-Tree-UAV}]
\textbf{Input:} A $\tau = 2$ 2HAM system $\Gamma$ and a tree-bonded assembly $A$. \textbf{Output:} Does $\Gamma$ unique produce the assembly $A$?
\end{problem}

\begin{problem}[\textsc{Tree-UAV}]
\textbf{Input:} A 2HAM system $\Gamma$ and a tree-bonded assembly $A$. \textbf{Output:} Does $\Gamma$ uniquely produce the assembly $A$?
\end{problem}

\subsection{Overview}
The high-level goal of this algorithm is to find a rogue assembly that acts as a witness that the instance of UAV is false. We note that a given instance, $P=(\Gamma,A)$, of \textsc{Temp2-Tree-UAV}, where $\Gamma=(T,\tau)$, can be broken down into three possible cases. An example tree-bonded assembly is shown in Figure \ref{fig:treeTarget}.

\begin{enumerate} \setlength\itemsep{0.em}
	\item The instance $P$ is false, and $\Gamma$ produces a tree-bonded rogue assembly.
	\item The instance $P$ is false, and the only rogue assemblies producible in $\Gamma$ are non tree-bonded.
	\item The instance $P$ is true.
\end{enumerate}

At a high level, the algorithm first checks if Case 1 is true and then checks if Case 2 is true. If either are true, the algorithm rejects, otherwise it accepts. Case 1 can be checked efficiently by modifying $\Gamma$ to function as a noncooperative system $\Gamma '$ and utilizing the algorithm for temperature-$1$ UAV provided in \cite{prod2HAM}.
To check the second case, Lemma \ref{lem:subRogue} states that if the instance is false, it suffices to check pairs of subassemblies of the target assembly $A$ in order to find a witness rogue assembly.
Thus, we take two copies of the target assembly and attempt to find possible ways they may bond, even if the resulting assembly places two tiles at the same position. We call the pairs of tiles that contribute glue strength \emph{binding sites}.
Tiles that are in the same position are called \emph{intersections}. An example of both may be seen in Figure \ref{fig:overlapped}.

We first analyze the case of temperature-$2$ systems where only need  two binding sites that do not intersect are needed.
We then generalize this algorithm by using dynamic programming to find the set of binding sites to maximize the binding strength between the assemblies without any intersections.

\begin{figure}[t]
	\centering
	\begin{subfigure}[b]{0.23\textwidth}
		\centering
		\includegraphics[width=.75\textwidth]{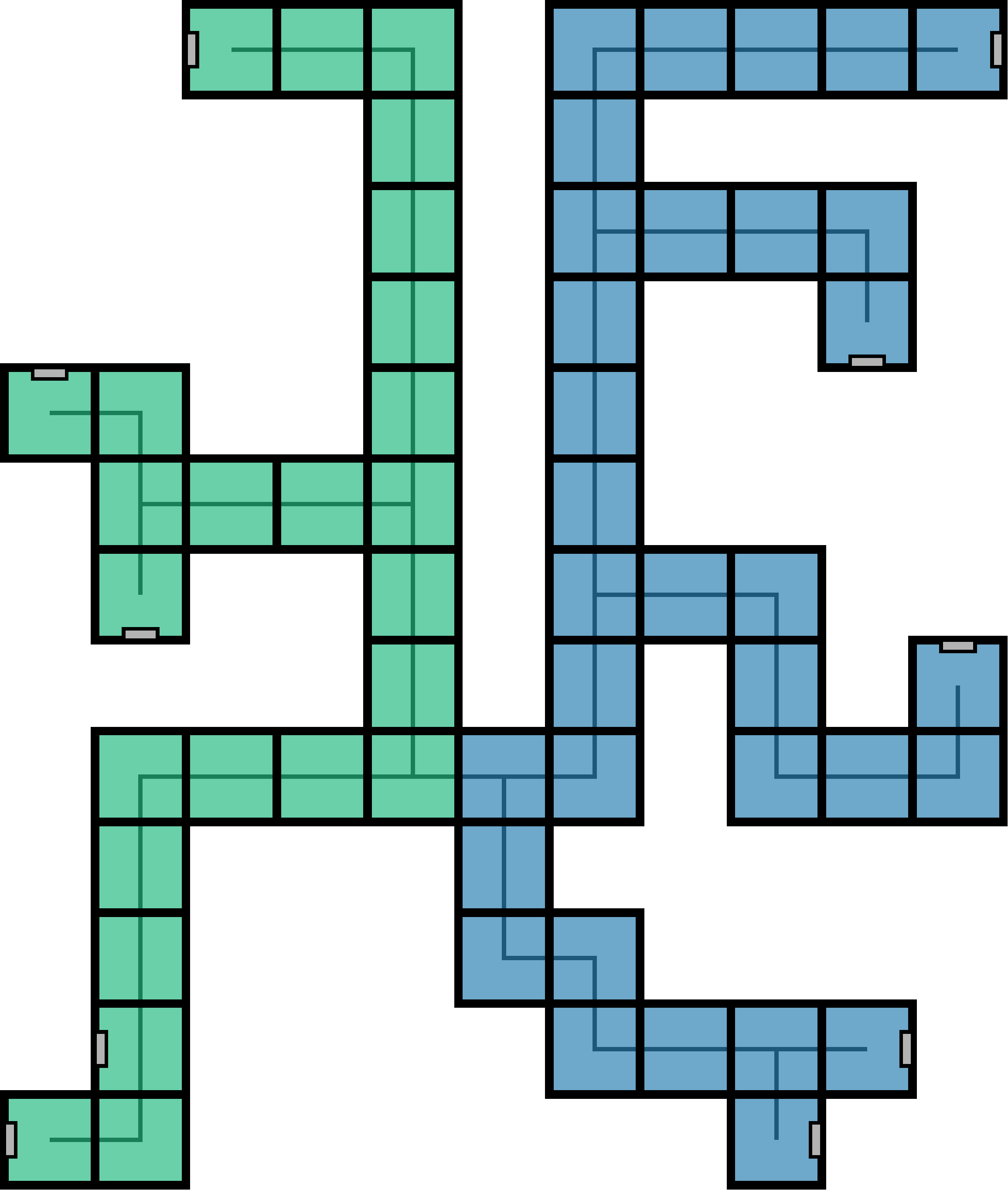}
		\caption{Tree-bonded Assembly}
		\label{fig:treeTarget}
	\end{subfigure}
	\begin{subfigure}[b]{0.24\textwidth}
		\centering
		\includegraphics[width=.85\textwidth]{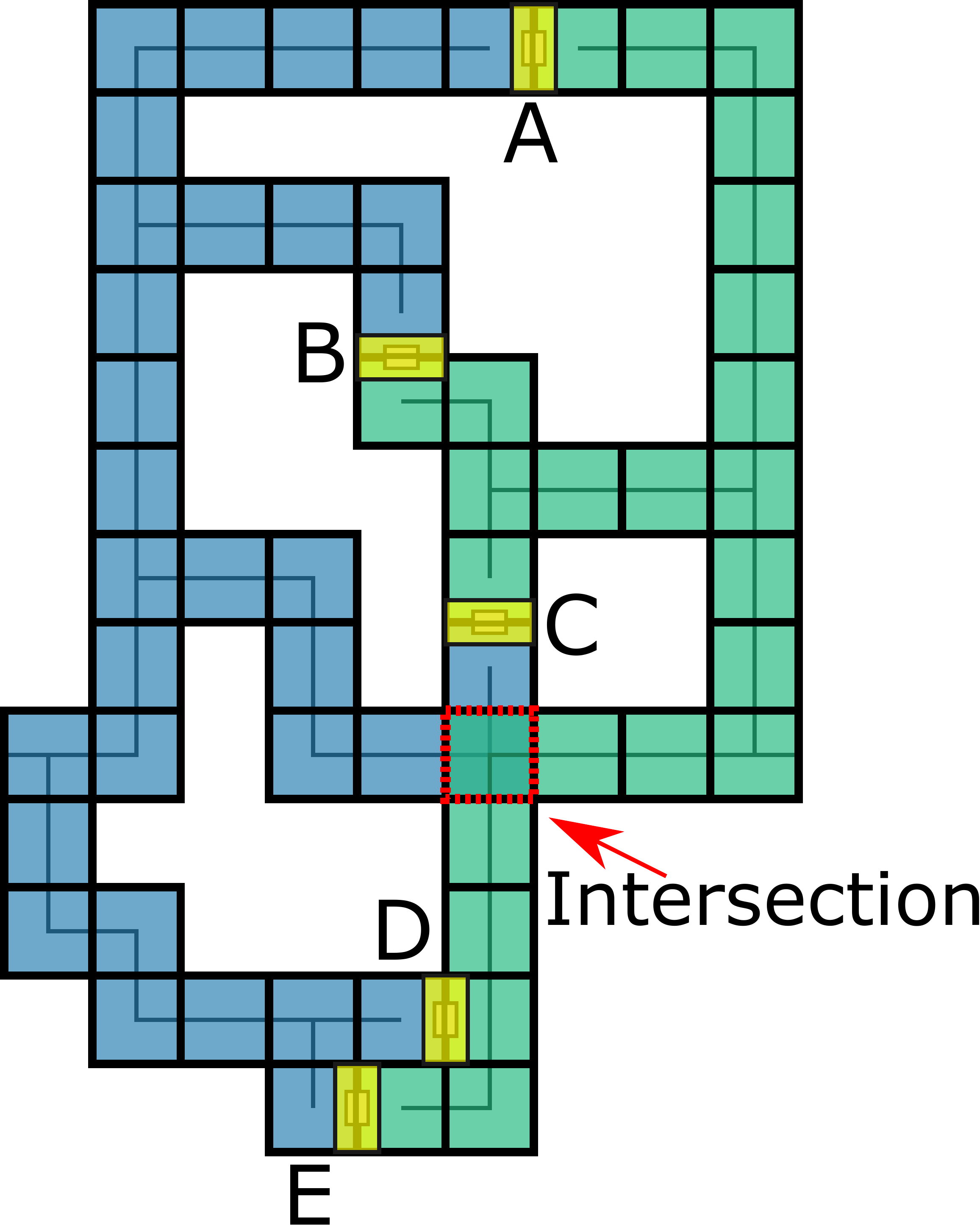}
		\caption{Overlap Subassemblies}
		\label{fig:overlapped}
	\end{subfigure}
	\begin{subfigure}[b]{0.23\textwidth}
		\centering
		\includegraphics[width=.8\textwidth]{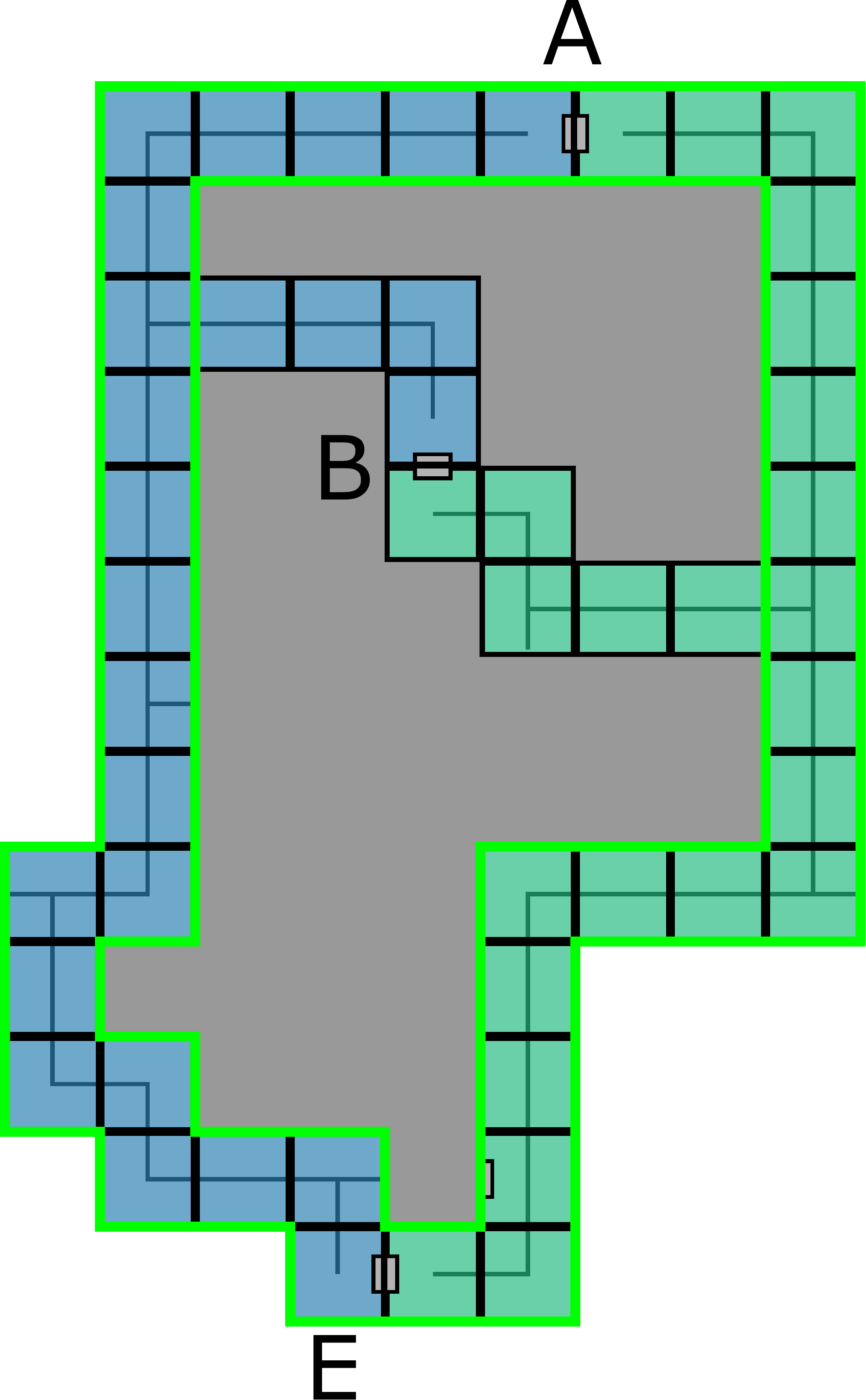}
		\caption{Outer Loop}
		\label{fig:outerAE}
	\end{subfigure}
	\begin{subfigure}[b]{0.23\textwidth}
		\centering
		\includegraphics[width=.8\textwidth]{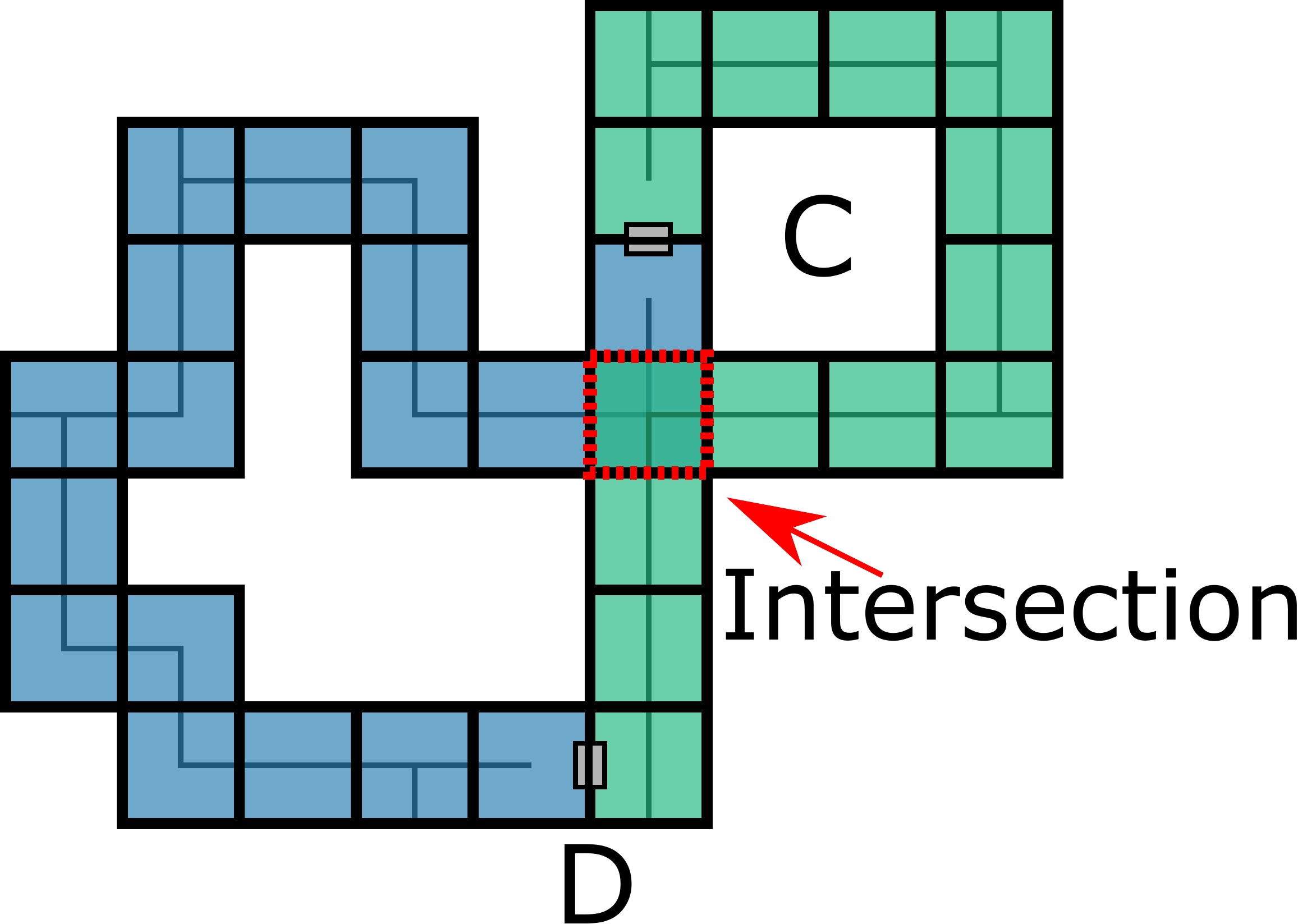}
		\caption{Intersection loop}
		\label{fig:loopCD}
	\end{subfigure}
	\caption{(a) An example tree-bonded target assembly.
	(b) One possible overlap configuration formed by two subassemblies with $5$ binding sites that are highlighted. 
	(c)  The loop formed by binding sites $A$ and $E$ is outlined in green. Any binding site that occurs in the grey shaded area, such as $B$, is in the set of inner binding sites for $(A, E)$. 
	(d) The loop formed using binding sites $(C, D)$ intersects itself and cannot be used. }
\end{figure}

\subsection{Tree-Bonded Rogue Assemblies}
The following algorithm checks if a system uniquely assembles a given shape provided the system is restricted to behaving in a noncooperative manner. This means that two assemblies can only attach if they share one or more strength-$\tau$ glues between them. This system functions equivalently to a temperature-$1$ system where all glues less than strength-$\tau$ are removed and all glues greater than strength-$\tau$ are set to strength-$1$. We modify the system in this way and then use the known polynomial time algorithm for temperature-$1$ UAV \cite{prod2HAM}.

\begin{algorithm}[H]
 \KwData{2HAM System $\Gamma = (T, \tau)$, an assembly $A$}
 \KwResult{Does $\Gamma$ uniquely assembly $A$ if it can only utilize strength $\geq \tau$ glues?}
 	Modify $T$ to create $T '$ by removing all glues of strength less than $\tau$, and setting the strength of all glues of strength $\geq \tau$ to $1$;\\
 	\lIf{\textsc{Temp1-UAV}$(\Gamma' = (T',\tau=1), A)$}{	accept}
 	\lElse{reject}
	
    \label{alg:nonCoopRogue}
 \caption{\textsc{NonCoop-UAV$(\Gamma, A)$}. The runtime of \textsc{Temp1-UAV} is $\mathcal{O}(|A||T|\log|T|)$ \cite{prod2HAM}.}\label{alg:NonCoopUAV}
\end{algorithm}

\begin{lemma}\label{lem:rogueAssembly}
For any 2HAM system $\Gamma = (\Sigma, \tau)$ and tree-bonded assembly $A$, if \textsc{NonCoop-UAV$(\Gamma, A)$} (Algorithm \ref{alg:NonCoopUAV}) is true, and $\Gamma$ does not uniquely assemble $A$, then there exists assemblies $B, B_1, B_2$, s.t. $B \not\sqsubseteq  A$, $B_1,B_2 \sqsubseteq A$, $B_1$ and $B_2$ combine to form $B$ by utilizing cooperative binding.
\end{lemma}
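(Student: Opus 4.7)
The plan is to reduce to a contradiction with the assumption that \textsc{NonCoop-UAV}$(\Gamma, A)$ accepts. First, I would apply Lemma~\ref{lem:subRogue}: since $\Gamma$ does not uniquely produce $A$, either there exists a single-tile rogue $t \in P'_\Gamma$ with $t \not\sqsubseteq A$, or there exist producible $B_1, B_2 \sqsubseteq A$ that $\tau$-combine into a rogue $B$. The single-tile branch is immediately ruled out, because every tile type lies in $P'_{\Gamma'}$ as well, so \textsc{NonCoop-UAV} accepting forces every tile of $T$ to already occur in $A$. We therefore land in the second case and must argue that the combination forming $B$ is cooperative.

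The key lever for this is tree-bondedness. Because $A$ is $\tau$-stable and its bond graph $\mathcal{G}_A$ is a tree, every single edge of $\mathcal{G}_A$ is itself an edge cut, so each bond in $A$ must have strength at least $\tau$. Under the modification in Algorithm~\ref{alg:NonCoopUAV}, every such bond becomes a strength-$1$ glue in the temperature-$1$ system $\Gamma'$, so the bond graph of $A$ in $\Gamma'$ is the same tree with unit weights. From this I would deduce that every connected subassembly of $A$ is producible in $\Gamma'$ by attaching one tree-adjacent tile at a time via a shared unit glue. In particular, the $\tau$-stable and hence connected subassemblies $B_1$ and $B_2$ both belong to $P'_{\Gamma'}$.

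Now suppose, for contradiction, that the combination of $B_1$ and $B_2$ into $B$ does not use cooperative binding; then they share some strength-$\tau$ glue. In $\Gamma'$ that glue survives as a strength-$1$ glue, so $B_1$ and $B_2$ remain $1$-combinable there, producing the same configuration $B$, which is connected and hence $1$-stable in $\Gamma'$. Then $B \in P'_{\Gamma'}$ with $B \not\sqsubseteq A$, contradicting \textsc{NonCoop-UAV} accepting. Hence the attachment producing $B$ must use cooperative binding, yielding the desired triple $(B, B_1, B_2)$. The main obstacle I anticipate is the intermediate claim that every connected subassembly of $A$ is producible in $\Gamma'$: this is where tree-bondedness is essential, since without it a subassembly's stability in $\Gamma$ could depend on several sub-$\tau$ bonds forming a cooperative cut, and such bonds vanish in $\Gamma'$.
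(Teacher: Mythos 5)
Your proof is correct, and its first half (obtaining $B, B_1, B_2$ with $B \not\sqsubseteq A$ and $B_1, B_2 \sqsubseteq A$) is the same assembly-tree descent the paper uses: the paper essentially re-runs the walk from Lemma~\ref{lem:subRogue} on an assembly tree of a rogue assembly, using the acceptance of \textsc{NonCoop-UAV} to guarantee the leaves are subassemblies of $A$, exactly as you do by citing that lemma and discharging the single-tile branch. Where you genuinely add something is the cooperativity claim: the paper's proof stops once it has located the node $B$ with children $B_1, B_2 \sqsubseteq A$ and never argues why that combination cannot occur through a single strength-$\tau$ glue. Your argument fills this in cleanly: since $G_A$ is a tree, every edge is itself a cut, so tree-bondedness together with \textsc{NonCoop-UAV} accepting forces every bond of $A$ (hence of any connected subassembly such as $B_1, B_2$) to have strength at least $\tau$; consequently $B_1, B_2 \in P'_{\Gamma'}$, and a non-cooperative join would make $B$ producible in $\Gamma'$, contradicting acceptance. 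This is precisely the step the downstream algorithms rely on (they only search for cooperative, i.e.\ multi-binding-site, attachments), so making it explicit strengthens the lemma's proof rather than merely restyling it; the observation that tree-bondedness kills all sub-$\tau$ bonds is the right place to locate the use of the tree hypothesis.
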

\begin{proof}
Since \textsc{NonCoop-UAV$(\Gamma, A)$} is true, but $\Gamma$ does not uniquely assemble $A$, there must exist a rogue assembly $B' \not\sqsubseteq A$ since any subassembly of $A$ would be tree-bonded.
Consider an assembly tree $\Upsilon_B'$ for $B'$.  Since \textsc{NonCoop-UAV$(\Gamma, A)$} is true, the singleton tile leaves of $\Upsilon_B'$ must be subassemblies of $A$. We will show there exists a node $B \in \Upsilon_B'$, with children $B_1$ and $B_2$, respectively, such that $B' \not\sqsubseteq A$, and $B_1,B_2 \sqsubseteq A$.

Let the root node of the tree $\Upsilon_B$ be the candidate node $B$, and let assemblies $B_1$ and $B_2$ be the two children of $B$. If $B_1$ and $B_2$ are both subassemblies of $A$, then the conditions are met. Otherwise w.l.o.g., assume $B_1 \not\sqsubseteq A$. We now set the candidate node $B$ to $B_1$ and repeat the process. Since all leaves of $\Upsilon_B$ represent subassemblies of $A$, there must be a point in which the candidate $B$ node is some assembly $B \not\sqsubseteq A$, and its children are assemblies $B_1,B_2 \sqsubseteq A$.
\end{proof}

\subsection{Temperature-$2$}
With respect to the given instance of \textsc{Temp2-Tree-UAV} $P$, if $P$ is false, and the algorithm for \textsc{NonCoop-UAV$(\Gamma, A)$} returns `accept', then Lemma \ref{lem:rogueAssembly} implies there exists two subassemblies of the target, $B_1$ and $B_2$, that attach to each other using cooperative binding.

To find these two subassemblies, we start by taking two ``copies'' of the target assembly and finding all $|A|^2$ possible ways to combine the two assemblies, even if it results in intersections.
If any way to combine these assemblies results in at least two binding sites, we  attempt to find $2$-combinable subassemblies. Since we know these subassemblies are also tree-bonded, there only exists one path between each pair of tiles- including the binding sites. So for each pair of binding sites, we take the loop formed by the two binding sites and check if it intersects itself. An example of a loop that intersects itself is shown in Figure \ref{fig:loopCD}. If there ever exists a pair of binding sites whose paths do not intersect, then those two subassemblies will form a rogue assembly and we reject.

\begin{algorithm}[H]
 \KwData{2HAM System $\Gamma = (\Sigma, \tau)$, Tree-Bonded Assembly $A$ with height $h$ and width $w$ }
 \KwResult{Does $\Gamma$ uniquely produce $A$?}

	\lIf{\textsc{NonCoop-UAV$(\Gamma, A)$} rejects}{reject}

	Let $A_0$ be the origin configuration of assembly $A$;\\

	\For{$i\gets -w$ \KwTo $w$ }{

		\For{$j\gets -h$ \KwTo $h$}{

			$A'\gets A_0 + \langle i,j \rangle$;\\
			Let $\mathcal{B}$ be the set of binding sites between $A_0$ and $A'$;\\

			\For{each pair of binding sites $a,b \in \mathcal{B}$}{
				\lIf{The loop formed using $a, b$ does not intersect itself}{reject}
			}

		}

	}
	accept;
  	
    \label{alg:temp2TreeUav}
 \caption{Algorithm to solve \textsc{Temp2-Tree-UAV} in $\mathcal{O}(|A|^5)$ time.}
\end{algorithm}

\begin{theorem} \label{thm:temp2TreeUav}
There is a $\mathcal{O}(|A|^5)$ time algorithm that decides \textsc{Temp2-Tree-UAV}.
\end{theorem}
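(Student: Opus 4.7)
The plan is to establish correctness of Algorithm \ref{alg:temp2TreeUav} in both directions and then bound its runtime. For soundness (if the algorithm rejects, then $\Gamma$ does not uniquely produce $A$), I would split on where the rejection occurs. If NonCoop-UAV rejects, Algorithm \ref{alg:NonCoopUAV} provides a rogue assembly that attaches using only non-cooperative bonds, immediately witnessing non-unique production. Otherwise the rejection fires inside the main loop, where the algorithm has found a translation $\vec{u}$ and a pair of binding sites $a, b \in \mathcal{B}$ between $A_0$ and $A_0 + \vec{u}$ whose loop does not self-intersect. From this pair I would extract the unique path inside $A_0$ between the $A_0$-endpoints of $a$ and $b$, and the unique path inside $A_0 + \vec{u}$ between the $(A_0 + \vec{u})$-endpoints, as two producible subassemblies $B_1, B_2 \sqsubseteq A$. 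Non-self-intersection guarantees $B_1$ and $B_2$ share no grid position, the two binding sites contribute total strength $\geq 2$, so the pair is $2$-combinable into an assembly whose bond graph contains the cycle formed by both paths plus $a$ and $b$. This cyclic assembly cannot be a subassembly of the tree-bonded $A$, hence it is a rogue assembly and $A$ is not uniquely produced.

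For completeness (if $\Gamma$ does not uniquely produce $A$, the algorithm rejects), I would assume NonCoop-UAV accepts; otherwise the algorithm has already rejected. By Lemma \ref{lem:rogueAssembly}, there exist $B_1, B_2 \sqsubseteq A$ that combine cooperatively into a rogue, which at $\tau = 2$ requires at least two binding sites between them. Let $\vec{u}$ be the relative translation that places $B_2$ against $B_1$ inside the combination; the same two sites then appear in $\mathcal{B}$ for $A_0$ and $A_0 + \vec{u}$. Because $B_1$ and $B_2$ share no grid position and the $B_i$-path between the two binding-site endpoints coincides with the $A$-path (unique paths inside subtrees of a tree), the loop for that pair is non-self-intersecting, so the algorithm rejects on that iteration.

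The main technical obstacle I anticipate is in the soundness direction: justifying that the two loop paths are genuine producible subassemblies of $A$. The key observation I would use is that because $A$ is tree-bonded and $\tau$-stable, every single bond of $A$ is itself a minimum edge cut whose weight equals its own strength, so every bond in $A$ must carry strength $\geq \tau = 2$; consequently any connected sub-tree of $A$'s bond graph (in particular each of the two loop paths) is $2$-stable and can be assembled tile-by-tile, yielding a valid assembly tree. This is what licenses taking the two paths themselves as the witness $B_1, B_2$.

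For runtime, the call to NonCoop-UAV costs $\mathcal{O}(|A||T|\log|T|)$ and is dominated by the main loop. The bounding box of $A$ has dimensions at most $|A|\times|A|$, so there are $\mathcal{O}(|A|^2)$ candidate translations. For each, $\mathcal{B}$ is computed in $\mathcal{O}(|A|)$ time and contains $\mathcal{O}(|A|)$ binding sites, giving $\mathcal{O}(|A|^2)$ pairs. Each self-intersection check walks two bond-graph paths of total length $\mathcal{O}(|A|)$ and tests shared grid positions via a hash set in $\mathcal{O}(1)$ per step, costing $\mathcal{O}(|A|)$. Multiplying yields $\mathcal{O}(|A|^2)\cdot\mathcal{O}(|A|^2)\cdot\mathcal{O}(|A|) = \mathcal{O}(|A|^5)$, matching the claimed bound.
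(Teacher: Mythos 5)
Your proposal is correct and follows essentially the same route as the paper's proof: reduce to Lemma \ref{lem:rogueAssembly} via the NonCoop-UAV preprocessing, argue that a cooperative rogue at $\tau=2$ corresponds to a pair of binding sites whose tree-paths form a non-self-intersecting loop (and conversely), and bound the runtime as $\mathcal{O}(|A|^2)$ translations $\times$ $\mathcal{O}(|A|^2)$ pairs $\times$ $\mathcal{O}(|A|)$ per check. Your explicit observation that $\tau$-stability of a tree-bonded assembly forces every individual bond to have strength $\geq\tau$, so that each loop path is itself a stable, producible subassembly, is a correct detail that the paper leaves implicit.
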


\begin{proof}
Correctness of Algorithm \ref{alg:temp2TreeUav}: Assume we are given an instance of \textsc{Temp2-Tree-UAV}, $P = ((\Gamma, \tau=2),A)$. The algorithm first checks if $P' = $\textsc{NonCoop-UAV$(\Gamma, A)$} is true. If $P'$ is true then it follows that $A$ is producible in $\Gamma$.

The remainder of the algorithm searches for a rogue assembly by overlapping the target with itself, and checking pairs of binding sites to see if a rogue assembly can be built from the loop formed through them. Since the target is tree-bonded, there only exists a single path between each pair of tiles, including our binding sites. If the two paths formed between the bindings sites do not intersect, or place two different tiles in the same location,  the loop formed by these two paths creates a rogue assembly.

We now argue that if a rogue assembly exists, then the overlapping method will find some rogue assembly. By Lemma \ref{lem:rogueAssembly}, if $P'$ is true, but the instance $P$ is false, there must be a producible rogue assembly $B$ built using cooperative binding, which can be split into two subassemblies $B_1, B_2 \sqsubseteq A$. Now consider two pairs of adjacent tile locations where $B_1$ and $B_2$ share a strength-$1$ glue, and let tiles $t_1,t_2$ and $t'_1,t'_2$ be the tiles at these locations.  Consider the assembly $B'_1$ only composed of the path of tiles from $t_1$ to $t'_1$ in $B_1$, and a second assembly $B_2$ that is the path of tiles from $t_2$ to $t'_2$ in $B_2$. $B'_1$ and $B'_2$ are combinable to build a rogue assembly $B$.

$B'_1$ and $B'_2$ are both subassemblies of $A$, and there is some overlapping of $A$ with an offset of itself in which $B_1$ and $B_2$ are adjacent, and  $t_1,t_2$ and $t'_1,t'_2$ are both in the set of binding sites of this overlap. Since the algorithm checks all pairs of binding sites, it will eventually check that pair, and will attempt to connect the two assemblies that are the simple paths between them, which are $B'_1$ and $B'_2$. Therefore, the algorithm will build the rogue assembly $B' = B'_1 \cup B'_2$ and reject.

Runtime: First \textsc{NonCoop-UAV$(\Gamma, A)$} runs in $\mathcal{O}(|A||T|\log|T|)$ time. While this part of the algorithm becomes the bottleneck if $|T| > |A|$, this cannot be the case. If the size of the tile set is greater than the number of tiles in the assembly, at least one tile can never grow into $A$ and we may reject. Next, for each of the $\mathcal{O}(|A|^2)$ ways to combine $A$ with itself, the algorithm checks at most $\mathcal{O}(|A|^2)$ pairs of binding sites. For each of these pairs, it does a sequence of $\mathcal{O}(|A|)$ time operations to verify the paths do not intersect.
\end{proof}

\subsection{Variable Temperature}
We now present an algorithm for \textsc{Tree-UAV}, a generalization of the previous problem that allows the temperature of the system $\tau$ to be given as input.

This algorithm works in a similar way as the previous algorithm except it does not suffice to only find a single loop since the temperature requirement for assemblies to bind may be greater than $2$. We must find multiple loops between binding sites that do not intersect. Once we have a way to combine the assembly, we will view binding sites and loops in a hierarchical way using inner binding sites. An example of an inner binding site can be seen in Figure \ref{fig:outerAE}. We recursively calculate the max binding strength when taking each pair of binding sites as the outer loop.

After calling \textsc{NonCoop-UAV$(\Gamma, A)$}, we check each possible way to attach $A$ to itself. For each of these ways, we build a $b \times b$ table where $b$ is the total number of induced binding sites. For each pair of binding sites we calculate the maximum value recursively augmented with the table. This allows us to only compute the maximum value once for each loop.

First, we check if the created loop intersects itself.  If it does, we cannot use that loop, so we set the value in the table to be $-1$. Next, we check if the binding sites form a simple loop that does not contain inner binding sites. In this case, the max value is the sum of the glue strength between the binding sites.
For loops that do contain inner binding sites, we perform a  \emph{loop decomposition}, which is the process of breaking a loop into two smaller loops along one of the inner binding sites. An example of a loop being decomposed into simple loops can be seen in Figure \ref{fig:loopDecomp}.
To find the max binding strength of the outer loop, we break the loop up along each inner binding site and recursively get the max strength of the two resulting loops (subtracted by the glue strength of the inner binding site since it would be counted twice). If either of the smaller loops intersects itself, it will return $-1$ and we know not to use that inner binding site.  The max binding strength of the outer loop is then the maximum of these computed values over all choices of inner binding site.  The recursive checks are implemented with a dynamic programming/memorization table to eliminate repeated recursive calls.

\begin{figure}[t]
	\centering
	\begin{subfigure}[b]{0.4\textwidth}
		\centering
		\includegraphics[width=0.9\textwidth]{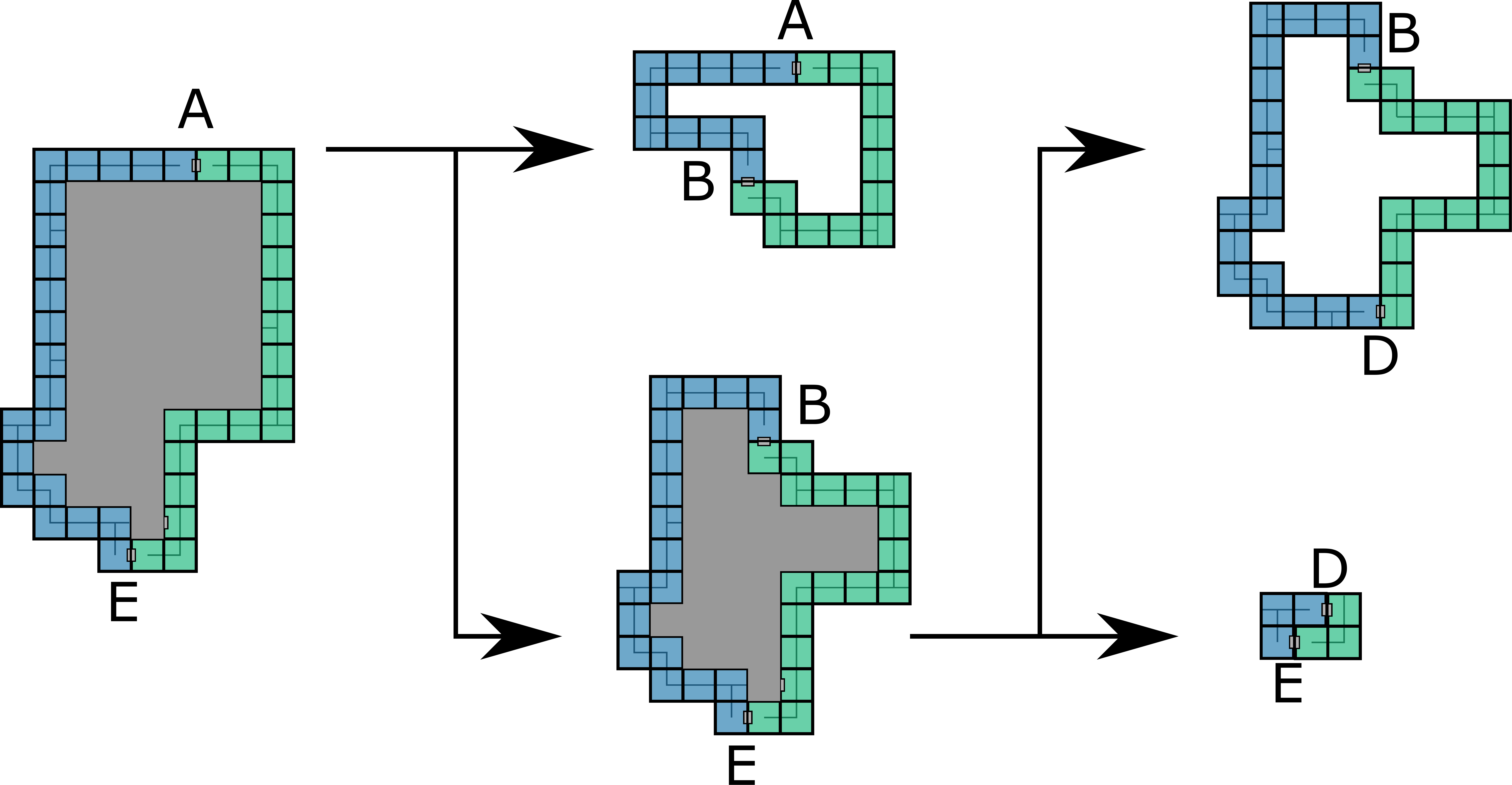}
		\caption{Loop Decomposition}
		\label{fig:loopDecomp}
	\end{subfigure}
	\begin{subfigure}[b]{0.4\textwidth}
		\centering
		\includegraphics[width=.8\textwidth]{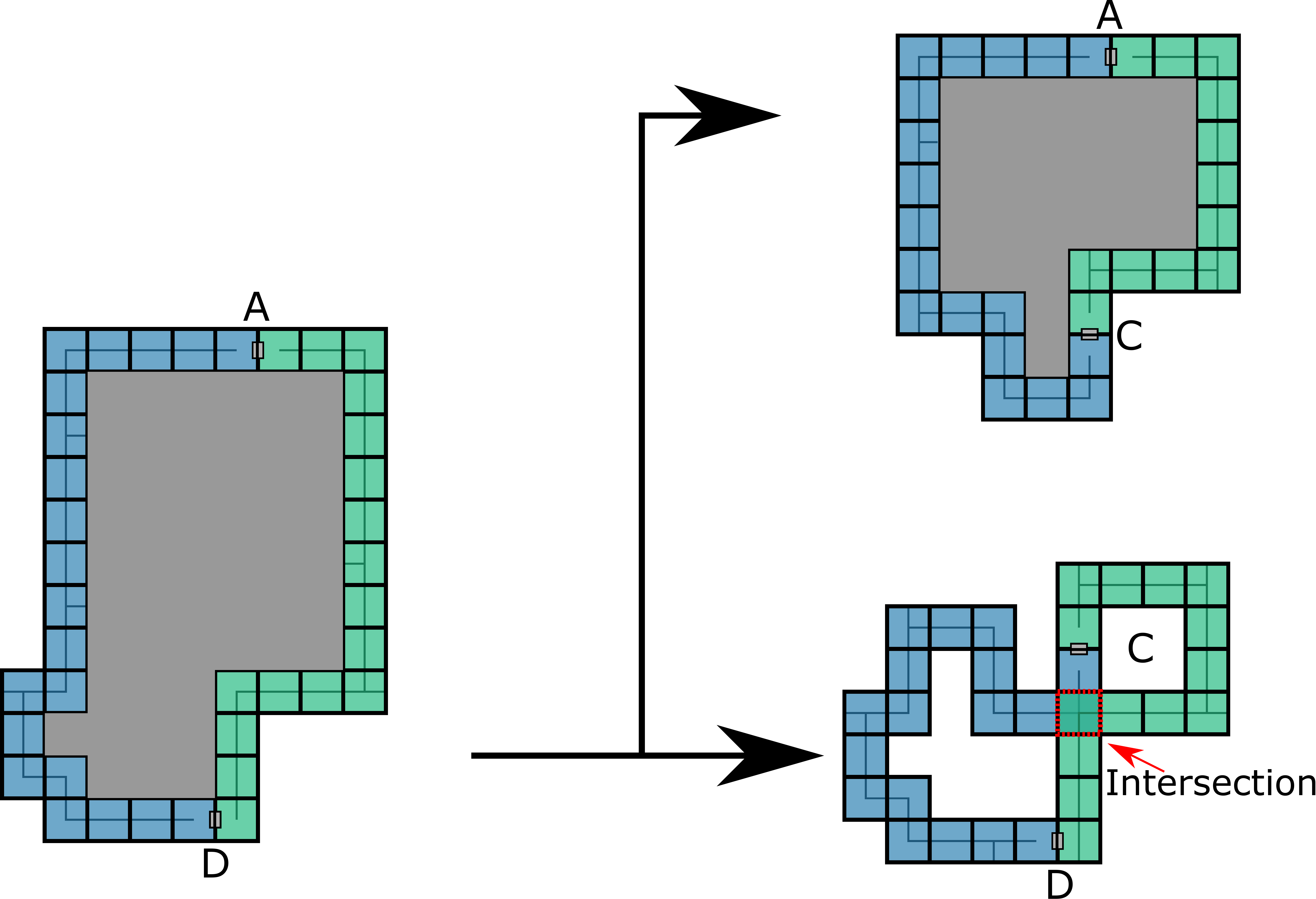}
		\caption{Invalid Loop Decomposition}
		\label{fig:badDecomp}
	\end{subfigure}
	\caption{(a) One possible way to decompose loops into simple loops based on inner binding sites. (b) Decomposing the loop $(A, D)$ along binding site $C$ results in the loop $(C,D)$ which intersects itself. This means we cannot decompose the loop $(A,D)$ along $C$. }
\end{figure}

\begin{algorithm}[H]
 \KwData{2HAM System $\Gamma = (\Sigma, \tau)$, Tree-Bonded Assembly $A$ with height $h$ and width $w$ }
 \KwResult{Does $\Gamma$ uniquely produce $A$?}

	\lIf{\textsc{NonCoop-UAV$(\Gamma, A)$} rejects}{
	reject}

	Let $A_0$ be the origin configuration of assembly $A$;\\

	\For{$x\gets -w$ \KwTo $w$ }{

		\For{$y\gets -h$ \KwTo $h$}{

			$A'\gets A_0 + \langle x,y \rangle$;\\
			Let $\mathcal{B}$ be the set of binding sites between $A_0$ and $A'$;\\
			Let $b = |\mathcal{B}|$;\\
			Create a $b \times b$ table $T_\mathcal{B}$ indexed by the elements of $\mathcal{B}$ with all cells initialized to empty.;\\

			\For{each pair of binding sites $b_1, b_2 \in \mathcal{B}$}{

				\lIf{$maxStr(C, T_\mathcal{B},b_1, b_2) \geq \tau$}{reject}
			}
		}
	
	}
	accept;
  	
    \label{alg:treeUav}
 \caption{Algorithm to solve \textsc{Temp2-Tree-UAV} in $\mathcal{O}(|A|^5)$ time. }
\end{algorithm}

\vspace{1cm}

\begin{algorithm}[H]
 \KwData{Union of two assemblies $C$, Table $T_B$, and Binding sites $b_1, b_2$}
 \KwResult{Does $\Gamma$ uniquely assembly $A$ if it can only utilize strength $\geq \tau$ glues?}
 	\If{$T_\mathcal{B}(b_1,b_2)$ is empty}{
		\lIf{ The loop formed by $b_1, b_2$ intersects itself }{return $-1$}
	
		\lIf{$|I(b_1, b_2)| = 0$}{return $glueStr(b_1) + glueStr(b_2)$}
		Let $T_B(b_1, b_2) = 0$;
	
		\For{$b_i \in I(b_1, b_2)$}{
			\lIf{$maxStr(b_1, b_i)$ or $maxStr(b_i, b_2) = -1$}{continue}
			$s \gets maxStr(b_1, b_i) + maxStr(b_i, b_2) - glueStr(b_i)$;\\
			\lIf{ $s > T_\mathcal{B}(b_1, b_2) $ }{$T_\mathcal{B}(b_1, b_2)  \gets s$}
		}
	}

	 return $T_\mathcal{B}(b_1, b_2)$;

    \label{alg:getStrengthSubroutine}
 \caption{$maxStr(T_\mathcal{B},b_1,b_2)$. The subroutine that calculates the max strength when using two binding sites as the outer loop. The method $glueStr(b)$ takes in a binding site and returns the strength of the glue connecting the two tiles.  }
\end{algorithm}

\begin{theorem}
There is a $\mathcal{O}(|A|^5 \log \tau)$ time algorithm that decides \textsc{Tree-UAV}.
\label{thm:tauTree}
\end{theorem}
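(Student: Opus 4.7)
The plan is to extend the proof of Theorem \ref{thm:temp2TreeUav} by replacing the pairwise binding-site check with the dynamic program $maxStr$, which accumulates cooperative interface strength across nested loops as required when $\tau > 2$. First I would justify the \textsc{NonCoop-UAV} preprocessing step: in a tree-bonded $\tau$-stable assembly every bond is itself an edge cut of the bond graph, so each bond has strength at least $\tau$. Hence $A$ is producible in $\Gamma$ iff producible in the noncooperative reduction $\Gamma'$, and every rogue producible in $\Gamma'$ is also producible in $\Gamma$, making rejection when \textsc{NonCoop-UAV} rejects sound. When \textsc{NonCoop-UAV} accepts, Lemma \ref{lem:rogueAssembly} reduces the search to finding two subassemblies $B_1, B_2 \sqsubseteq A$ that combine cooperatively, which is what the two outer \textbf{for} loops of Algorithm \ref{alg:treeUav} enumerate across translations of $A_0$ and pairs of induced binding sites.

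Next I would prove correctness of $maxStr$ by induction on $|I(b_1,b_2)|$. Since $A$ and its translate are tree-bonded, the path between any two tiles is unique, so every pair $(b_1,b_2)$ canonically determines a loop. The inductive claim is that $maxStr(b_1,b_2)$ equals the maximum total interface strength achievable by any non-self-intersecting configuration of binding sites on or inside this loop that uses $(b_1,b_2)$ as the outer loop. The base case is a simple loop, returning $glueStr(b_1) + glueStr(b_2)$. In the inductive step, any larger valid configuration picks some inner binding site $b_i$, and the two sub-loops $(b_1,b_i)$ and $(b_i,b_2)$ have strictly smaller interiors whose union equals the original; their optimal strengths combine with $glueStr(b_i)$ counted once to avoid double-counting. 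Self-intersecting sub-loops propagate the $-1$ sentinel and are correctly discarded. A short structural argument, leveraging that internal cuts of $B_1$ and $B_2$ are already $\tau$-strong by the tree-bonded hypothesis, shows that total interface strength $\geq \tau$ suffices for $B_1 \cup B_2$ to be $\tau$-stable; this converts $maxStr(b_1,b_2) \geq \tau$ into an actual producible cooperative rogue in $\Gamma$.

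The runtime follows by routine counting: $\mathcal{O}(|A|^2)$ offsets yield non-trivial overlap, each induces $\mathcal{O}(|A|)$ binding sites and so $\mathcal{O}(|A|^2)$ table entries, each computed in $\mathcal{O}(|A|)$ time by iterating over inner sites. Arithmetic on integers up to $\tau$ costs $\mathcal{O}(\log \tau)$ per operation, giving $\mathcal{O}(|A|^5 \log \tau)$ overall; the \textsc{NonCoop-UAV} call at $\mathcal{O}(|A||T|\log|T|)$ is dominated after the standard observation that $|T| > |A|$ admits immediate rejection. The hard part will be the correctness of the loop-decomposition induction: I must argue simultaneously that every non-self-intersecting cooperative attachment in $\Gamma$ arises as some sequence of DP splits, and that any positive value returned by $maxStr$ corresponds to a genuinely combinable pair of tree-bonded subassemblies. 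This hinges critically on the tree-bonded hypothesis, which ensures that loops are uniquely determined by their endpoints and that interface strength alone suffices for global $\tau$-stability of the combined rogue.
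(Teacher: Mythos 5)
Your proposal is correct and follows essentially the same route as the paper: the \textsc{NonCoop-UAV} preprocessing, the reduction via Lemma \ref{lem:rogueAssembly} to cooperative combinations of subassemblies of $A$, the enumeration of translations and binding sites, the loop-decomposition dynamic program $maxStr$, and the identical $\mathcal{O}(|A|^2)\cdot\mathcal{O}(|A|^2)\cdot\mathcal{O}(|A|\log\tau)$ runtime accounting. Your added observations (every bond in a $\tau$-stable tree-bonded assembly is a bridge and hence has strength at least $\tau$, and the explicit induction on $|I(b_1,b_2)|$) are sound elaborations of steps the paper leaves implicit rather than a different argument.
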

\begin{proof}
Correctness: 
We know from Lemma \ref{lem:rogueAssembly}, if \textsc{NonCoop-UAV$(\Gamma, A)$} returns `reject', but the system does not uniquely assemble $A$, there exists a rogue assembly $B$ that may be assembled by two tree-bonded assemblies $B_1$  and $B_2$, which are subassemblies of our target $A$. Using the same method from the previous algorithm, we find the possible ways the target may bind to itself. If any value in the table $T_\mathcal{B}$ is greater than $\tau$, then there exists a set of non-intersecting loops whose union is a producible rogue assembly. We take the union of each of the loops to find the rogue assembly. We know the assembly is producible since none of the paths intersect and the sum of the strength of the binding sites is greater than $\tau$.

Now assume there exists a rogue assembly $B'$ that was not found by \textsc{NonCoop-UAV$(\Gamma, A)$}.  Let $B'$ be a rogue assembly that satisfies Lemma \ref{lem:rogueAssembly}. We know the bond graph of $B'$ must contain a loop. Let the outermost loop of $B'$ be the set of tiles along the outer path of the bond graph, i.e., the loop that contains all other loops.
Since we know $B'$ satisfies Lemma \ref{lem:rogueAssembly}, it was built from two tree-bonded assemblies that do not contain loops. This means the outer loop contains two binding sites we will call $b_1$ and $b_2$.
By filling out $T_\mathcal{B}$,  we calculate the max strength that can be obtained using each pair of binding sites to construct the outer loop. 
Since $B'$ is producible, the sum of the strength of the binding sites must be greater than $\tau$. Thus, the maximum binding strength using $b_1$ and $b_2$ as the outer loop will be greater than $\tau$ as well.

Run Time: We check each of the $\mathcal{O}(|A|^2)$ possible ways to combine the target with itself and create a table if any binding sites are induced. This table will be at most $\mathcal{O}(|A|^2)$-size since we cannot have more binding sites than the size of the assembly, although the number may be substantially smaller. Computing each cell takes $\mathcal{O}(|A|\log \tau)$ time since we must perform integer addition for each pair of binding sites where each integer is less than $\tau$.
\end{proof}

\section{Conclusion}\label{sec:conclusion}

In this paper, we have addressed the long-standing open problem of the complexity of verifying unique assembly within the 2-handed tile self-assembly model and shown that the problem is coNP-complete even at temperature $\tau = 2$ and in two dimensions.  These are the smallest possible values for which this problem can be hard, as both temperature-1 self-assembly and 1-dimensional self-assembly have established polynomial time verification solutions.  Given this hardness, we explored a natural scenario where this problem might be more tractable, and showed that restricting the input assemblies to tree-bonded assemblies allows for an efficient $\mathcal{O}(|A|^5 \log \tau)$-time unique assembly verification algorithm.

\paragraph{Future Work.}
While we have resolved the general question of unique assembly verification in the 2HAM, as well as addressed a natural restricted case of tree-bonded assemblies, there remain important directions for future research.
\begin{itemize}
    \item Our hardness reduction utilizes a tile set that is roughly the size of the input assembly.  In fact, all hardness results in the literature for the 2-handed self-assembly model have this property.  Yet, the computational power of tile self-assembly allows for the self-assembly of large assemblies with much smaller tile sets, as seen in the efficient self-assembly of squares, or the implementation of ``Busy Beaver'' Turing machines~\cite{squares2000}.  This leads to the question of how hard unique assembly verification is when the focus is on large assemblies, but substantially smaller tile sets.  Does the hardness scale with the larger assemblies, or is it tied to the size of the tile sets?  Is there some form of fixed-parameter tractability for the unique assembly verification problem?

    \item We proved that UAV for the multiple tile (or q-tile) model is coNP-complete with polynomial-sized assemblies attaching. Is UAV polynomial in the multiple tile model and the 2HAM in the case where every producible, except the one that grows into the terminal assembly, is bounded by a constant? 

    \item A related question about UAV in the aTAM and the 2HAM is the number of two-handed operations actually required to make the problem hard. If we allowed all subassemblies to grow only by single tile attachments, how many two-handed operations to combine those subassemblies are needed for UAV to remain hard? Does the problem remain hard if only one two-handed operation is allowed? The ability to more efficiently construct shapes by assembling parts separately has been studied in other models as well \cite{effUCI}.
        

    \item Another direction initiated by our efficient tree assembly algorithm is the consideration of other natural restricted classes of the UAV problem.  For example, trees yield efficient verification and have a genus-0 connectivity graph, while our hardness reduction utilizes a high-genus assembly.  How does unique assembly verification scale with respect to the genus of an assembly's connectivity graph?  A related question involves verification for \emph{fully connected} assemblies, a previously-studied concept \cite{stageFully} in which assemblies include positive bonds between all neighboring tiles.  Is UAV still hard under this restriction?

    \item For tree-bonded shapes, does a more efficient and faster UAV algorithm exist?
\end{itemize}

\bibliographystyle{amsplain}
\bibliography{uavBib}

\end{document}